\DeclareMathOperator*{\argmax}{argmax}
\DeclareMathOperator*{\RandMech}{\mathcal{A}}
\DeclareMathOperator*{\aux}{aux}
\DeclarePairedDelimiter\abs{\lvert}{\rvert}
\newcommand{\enc}{\text{\textsf{Enc}}}
\newcommand{\dec}{\text{\textsf{Dec}}}
\newcommand{\pk}{\text{\textsf{pk}}}
\newcommand{\sk}{\text{\textsf{sk}}}
\newcommand{\ZZ}{\mathbb{Z}}
\newcommand{\threshold}{t}
\newcommand{\uppervalue}{a}
\newcommand{\lowervalue}{b}
\newcommand{\onehotencoding}{z}
\newcommand{\ampNor}{\gamma}
\newcommand{\ratioSucTeach}{\tau}
\newcommand{\varinI}{v}
\newcommand{\normalInt}{L}
\newcommand{\density}{f}
\newcommand{\cumulDensity}{F}
\newcommand{\dbdist}{a}
\newcommand{\Idist}{\beta}
\newcommand{\nonNullThres}{b}
\newcommand{\amplitude}{A}
\newcommand{\initModulus}{b_i}
\newcommand{\bootModulus}{b_\theta}
\newtheorem{definition}{Definition}
\newtheorem{proposition}{Proposition}
\newtheorem{theorem}{Theorem}
\newtheorem{lemma}{Lemma}
\newenvironment{proof_sketch}{\noindent{\textit{Sketch of proof.}}~}
\newcommand\numberthis{\addtocounter{equation}{1}\tag{\theequation}}
\providecommand{\customgenericname}{}
\newcommand{\newcustomtheorem}[2]{
  \newenvironment{#1}[1]
  {
   \renewcommand\customgenericname{#2}
   \renewcommand\theinnercustomgeneric{##1}
   \innercustomgeneric
  }
  {\endinnercustomgeneric}
}
\title{SPEED: Secure, PrivatE, and Efficient Deep learning}
\author{
    Arnaud Grivet S\'{e}bert\thanks{Contact author: arnaud.grivetsebert@cea.fr}~~$^{1}$\quad
    Rafael Pinot$^{1,2}$\quad
    Martin Zuber$^{1}$ \\
    \textbf{C\'{e}dric Gouy-Pailler}$^{1}$\quad
    \textbf{Renaud Sirdey}$^{1}$ \\ \\
    $^1$Institut LIST, CEA, Université Paris-Saclay, F-91120, Palaiseau, France \\
    $^2$Université Paris-Dauphine, PSL Research University, CNRS, LAMSADE, F-75016, Paris, France
}
\begin{document}

\maketitle

\begin{abstract}
We introduce a deep learning framework able to deal with strong privacy constraints. Based on collaborative learning, differential privacy and homomorphic encryption, the proposed approach advances state-of-the-art of private deep learning against a wider range of threats, in particular the honest-but-curious server assumption. We address threats from both the aggregation server, the global model and potentially colluding data holders. Building upon distributed differential privacy and a homomorphic argmax operator, our method is specifically designed to maintain low communication loads and efficiency.
The proposed method is supported by carefully crafted theoretical results. We provide differential privacy guarantees from the point of view of any entity having access to the final model, including colluding data holders, as a function of the ratio of data holders who kept their noise secret.
This makes our method practical to real-life scenarios where data holders do not trust any third party to process their datasets nor the other data holders. Crucially the computational burden of the approach is maintained reasonable, and, to the best of our knowledge, our framework is the first one to be efficient enough to investigate deep learning applications while addressing such a large scope of threats.
To assess the practical usability of our framework, experiments have been carried out on image datasets in a classification context. We present numerical results that show that the learning procedure is both accurate and private.

\keywords{Data protection \and Collaborative learning \and Distributed differential privacy \and Homomorphic encryption}
\end{abstract}

\section{Introduction}
\paragraph{Application scenarios.} We consider $n$ hospitals, each of which owns a (personal) labelled database composed of medical records from its patients and a model (e.g. neural network) trained on this database to predict if a new patient is victim of a given disease, say cancer. The hospitals' goal is to collaborate in order to improve the early detection of cancer. Building a model from a larger dataset than the personal databases would lead to improved detection capabilities. Nevertheless, these medical databases are highly-sensitive and the information they contain about the patients cannot be disclosed \cite{gdpr}. In such a setting, the hospitals wish to collaboratively train a global model while preserving confidentiality of their records. To do so, the idea is to rely on an aggregating institution (e.g. the World Health Organisation). This would amount to creating a three-party architecture: hospitals, aggregating institution, global model. Note that in our example, and in many real-world settings, all the training data providers may be recipients of the global model, or the global model may even be totally public. Hence, the global model may be exposed to attacks like membership inference attacks~\cite{shokri2017membership} that could indicate with high accuracy the probability that one patient was present in a database. Also, given a set of instances, the risk of a model inversion attack~\cite{wu2016methodology} which tries to infer sensitive attributes on the instances from a supposedly non-sensitive (often white-box) access to the model, is to be seriously taken into account as it would allow to infer for example that some of the hospital databases contain more ill patients than others. Besides, the aggregating institution might be the target of cyberattacks aimed at stealing data from it. For all these reasons, the three-party architecture we consider has to be resistant to threats coming from \emph{both the aggregation server and the global model recipients}.

Another motivating example, from the field of cybersecurity, is when several actors each hold a database of cybersecurity incident signatures that have occurred on their customer networks. The actors would rely on a third-party server to train the global model. In this scenario, it is a great security issue if the global model suffers from an attack (e.g. if the model features can be inferred~\cite{tramer2016stealing,yan2018cache,wang2018stealing} with limited access to the model). In this case, this would clearly leak some information on the detection capabilities of the actors, giving a clear advantage to cyberattackers on the networks they supervise.

\paragraph{Deployment scenario and threat model.} To perform the aggregation in a private way, we work in the tripartite setting summarised in Figure~\ref{fig:speed_scheme} and formally detailed in Section~\ref{sec:speed}. The \emph{student} (who holds the global model, a.k.a. the \emph{student model}) is the owner of the homomorphic encryption scheme under which encrypted-domain computations will be performed by the \emph{aggregation server}. This means that the student generates and knows both the encryption and decryption keys $\pk$ and $\sk$. Then, when being submitted an unlabelled input, the data holders (a.k.a. the \emph{teachers}) noise the predictions from their personal models, encrypt them under $\pk$ and send these encryptions to the server. The server has the responsibility to homomorphically perform the aggregation in order to produce an encryption of the output (e.g. a label) which will be sent back to the student and used by the latter for learning, after due decryption. \emph{Homomorphic encryption} thus provides a countermeasure to confidentiality threats on the teachers' predictions from the aggregation server, while the noise introduced by the actor addresses, via \emph{differential privacy}, the issue of attacks against the student model. In this setting, we assume that the student model is public or at least available to all the actors of the protocol, namely the teachers, the aggregation server and, of course, the student. Our mechanism is differentially private in this context, and our guarantees still hold against a malicious teacher, who has the information of the noise she generated, or even against colluding teachers (see Section~\ref{sec:dp_analysis}). On the contrary, we do not address threats whereby the student and the aggregation server collude in the sense that the student does not share $\sk$ with the server (in which case they would both get access to the teachers' predictions). We do not consider either threats where the aggregation server behaves maliciously, e.g. to prevent the student model from effectively learning from the teachers, leading to more or less stealthy forms of denial-of-service, or to perform a chosen ciphertext attack via selected queries to the student model. This is the typical scenario in which homomorphic encryption intervenes and our setting thus covers the threat model whereby the aggregation server is assumed to operate properly but may perform computations on observed data to retrieve information. This threat model is commonly known as the \emph{honest-but-curious} model~\cite{ishai2003extending, bonawitz2016practical, graepel2012ml}.

\begin{figure}[ht]
    \centering
    \includegraphics[width=0.9\textwidth]{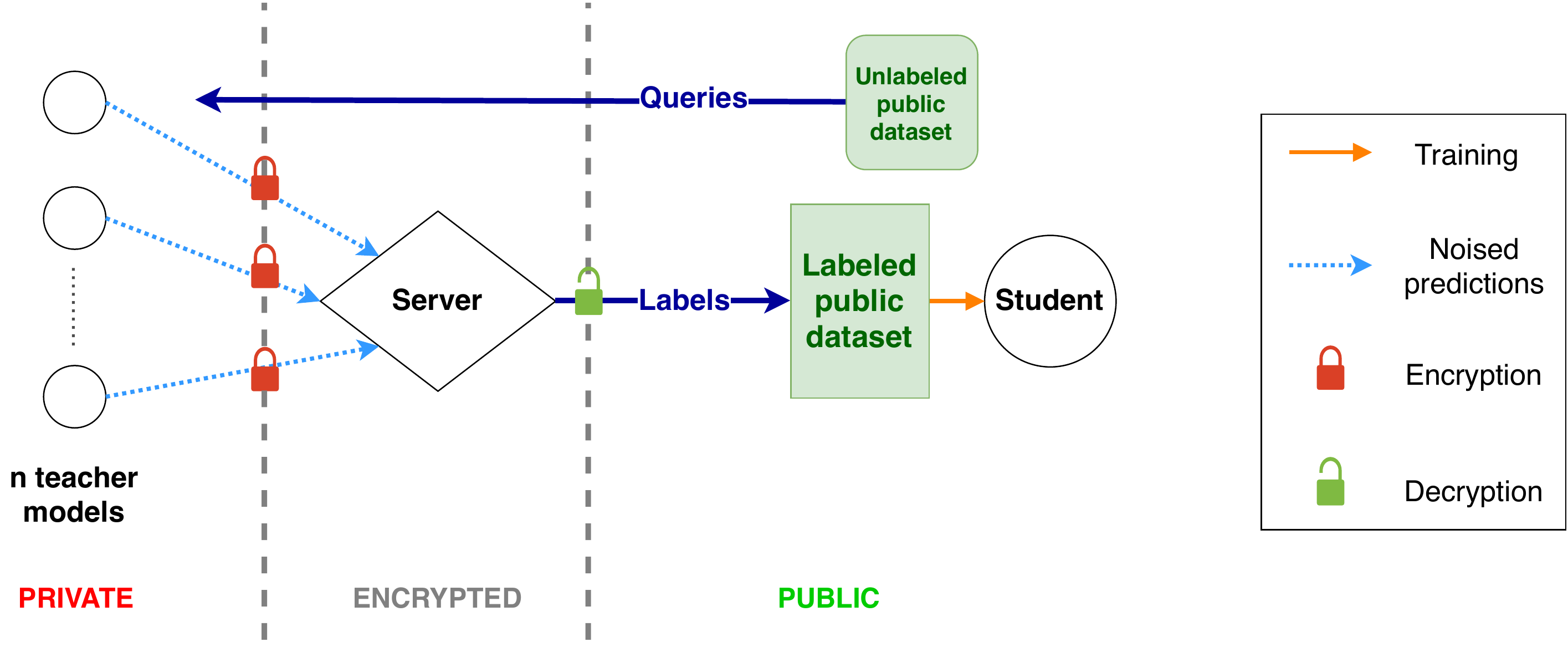}
    \caption{SPEED - Teacher models send to the aggregation server their encrypted noisy answers to the student's queries. The server homomorphically performs the aggregation in the encrypted domain and sends the result to the student model which decrypts it and uses it for training.}
    \label{fig:speed_scheme}
\end{figure}

\paragraph{Our contribution.} 
In this paper, we present a complete collaborative learning protocol which is secure along the whole workflow regarding a large scope of threats. We ensure protection of the data against any malicious actor of the protocol during the learning phase and prevent indirect information leakage from the final model using \emph{both} homomorphic encryption and differential privacy. While our framework is agnostic to the kind of models used by both the teachers and the student, to the best of our knowledge this is the first work with this level of protection to be efficient enough to apply to deep learning, therefore allowing very good accuracy on difficult tasks such as image classification, as shown by the experiments we ran. Our framework is also bandwidth-efficient and does not require more interactions than required by the baseline protocol.

\paragraph{Outline of the paper.} Section~\ref{sec:related_work} relates our work to the literature. In Section~\ref{sec:preliminaries}, we give some technical background on differential privacy and homomorphic encryption. We describe our SPEED framework in Section~\ref{sec:speed} and analyse its differential privacy guarantees in Section~\ref{sec:dp_analysis}. Section~\ref{sec:expe} presents our experimental results - SPEED achieves state-or-the-art accuracy and privacy with a mild computational overhead w.r.t previous works. Section~\ref{sec:conclusion} concludes the paper and states some open questions for further works.

\section{Related work}
\label{sec:related_work}
\paragraph{Differential privacy (DP).} Recent works considered to use differential privacy in collaborative settings close to the one we consider ~\cite{DBLP:journals/corr/abs-1812-01484,bhowmick2018protection,geyer2017differentially,chase2017private,papernot2016semisupervised,papernot2018scalable}. Among them, the most efficient technique in terms of accuracy and privacy guarantees is Private Aggregation of Teacher Ensembles (PATE) first presented in~\cite{papernot2016semisupervised} and refined in~\cite{papernot2018scalable}. PATE uses semi-supervised learning to transfer to the student model the knowledge of the ensemble of teachers by using a differentially private aggregation method. This approach considers a setting very close to ours with the notable difference that the aggregation server is trusted. Hence, applying PATE in our scenario makes the teacher models vulnerable. To tackle this issue, our work builds upon PATE idea with two key differences: we let the responsibility of generating the noise to the teachers and we add a layer of homomorphic encryption in order for the overall learning to be kept private. Another difference can also be noted. To derive privacy guarantees, PATE assumes that two databases $d$ and $d'$ are adjacent if only one sample of the personal database $d_i$ of one teacher $i$ changes, with the hypothesis that the personal databases $d_i$ are disjoint. We do not need this hypothesis and we only consider the teacher models, not the personal databases they use to train them. This leads us to a more powerful definition of adjacency: two databases $d$ and $d'$ are adjacent if they differ by one teacher.

\paragraph{Homomorphic Encryption (HE).} HE allows to perform computations over encrypted data. In particular, this can be used so that the model can perform both training and prediction without handling cleartext data. In terms of learning, the naive approach would be to have the training sets homomorphically encrypted, sent to a server for training to be done in the encrypted domain and the resulting (encrypted) model sent back to the participants for decryption. However, putting aside many subtleties, even by deploying all the arsenal available in the HE practitioner toolbox (batching, transciphering, etc.) this would be impractical as “classical” learning is both computation and know-how intensive and HE operations are intrinsically costly. As a consequence, there are only very few works that capitalise on HE for private training~\cite{graepel2012ml, hesamifard2017cryptodl, lou2020glyph} and inference~\cite{gilad2016cryptonets, juvekar2018gazelle} of machine learning tasks. Moreover, since some attacks can be performed in a black-box setting, the system is still vulnerable to attacks from the end user who has access to the decryption key. In our framework, we do not use HE directly to build the model, we use it as a mean for the aggregation to be kept private. That way, we are protected against potential threats from the aggregation server, which does not have the decryption key, and we keep a manageable computational overhead.

\paragraph{Federated learning.} Federated learning approaches gather several users who own data and make them collaborate in an iterative workflow in order to train a global model. The most famous federated learning algorithm is federated averaging\cite{mcmahan2016federated} which is a parallelised stochastic gradient descent. In a context of sensitive user data, several works proposed privacy-preserving federated learning or closely related distributed learning that make use of differential privacy~\cite{geyer2017differentially, shokri2015privacy}, cryptographic primitives~\cite{bonawitz2016practical, bonawitz2017practical, ryffel2020ariann} or both~\cite{chase2017private, sabater2020distributed, ryffel2018generic}. These methods require online communication between the parties whereas our solution takes advantage of homomorphic encryption and the existence of personal trained models to avoid online communication and drastically limit the interactions, that are both bandwidth-consuming and vulnerable to attacks.

\paragraph{Private aggregation.} Several approaches have been considered to limit the need for a trusted server when applying differential privacy, for example by considering local differential privacy~\cite{kasiviswanathan2011can, duchi2013local, kairouz2016extremal}. In practice it often results in applying too much noise, and maintaining utility can be difficult~\cite{ullman2018tight, kasiviswanathan2011can} especially for deep learning applications. In order to recover more accuracy while keeping privacy, some works combined decentralised noise distribution (\emph{a.k.a.} distributed differential privacy \cite{shi2011privacy}) and encryption schemes~\cite{rastogi2010differentially, acs2011have, goryczka2015comprehensive, shi2011privacy} in the context of aggregation of distributed time-series. Our work contributes to this line of research. However, our framework is the first one to be efficient enough to investigate deep learning applications while combining distributed DP and HE.
Another advantage of our solution concerns fault tolerance regarding the added noise. Some works addressed the problem of fault tolerance by making the server generate the noise that some users did not generate~\cite{bao2015new} while other works assume that the users themselves adapt the noise they generate to the possible failures~\cite{chan2012privacy}. In our setting, because of the encryption and the absence of communication between the teachers, we cannot suppose that any honest entity knows if some failures occurred. Moreover, the addition of noise to compensate a failure does not solve the problem of colluding teachers who may still send noise but do not keep it secret. In our protocol, the task of an honest actor (teacher or server) does not depend on the number of failures and we provide privacy guarantees as a function of the number of failures (see Section~\ref{sec:dp_analysis}) - it then suffices to assume an upper bound on this number to ensure a privacy guarantee.

\paragraph{Secure Multi-Party Computation (SMPC).} Secure Multi-Party Computation is a general approach that enables several parties to collaboratively perform a given computation without revealing to the other parties any more information than the result of this computation. In particular, secure aggregation regroups approaches which use SMPC techniques as one-time pads masking~\cite{bonawitz2016practical, bonawitz2017practical} or secret-sharing~\cite{danezis2013smart} to perform aggregation over sensitive data. Although these approaches are very close in intent to FHE-based ones, as the present one, they achieve different trade-offs. In a nutshell, when FHE is computation-intensive and non-interactive, SMPC puts more stress on protocol interactions. SMPC requires a lot of communication (garbled circuit generation and evaluation, oblivious input key retrieval, secret key sharing), both time-consuming and vulnerable to attacks, and needs in general that \emph{all} teachers play their role in the protocol for it to terminate - or fixing the fault tolerance issue implies additional rounds of communication~\cite{bonawitz2016practical, bonawitz2017practical}.
On the contrary, the FHE approach is more versatile, requires no interaction among the teachers and is robust to temporary teacher unavailability. Still, at the time of writing, it is the authors' opinion that both approaches are worth investigating in their own right (and this paper obviously belongs to the FHE thread of research).

\section{Preliminaries}
\label{sec:preliminaries}
\subsection{Differential privacy}

Differential privacy~\cite{dwork2006our} is a gold standard concept in privacy preserving data analysis. It provides a guarantee that under a reasonable privacy budget $(\epsilon,\delta)$, two adjacent databases produce statistically indistinguishable results. In this section, two databases $d$ and  $d'$ are said adjacent if they differ by at most one example.
\begin{definition}
   A randomised mechanism $\RandMech$ with output range $\mathcal{R}$ satisfies \emph{$(\epsilon,\delta)$-differential privacy} if for any two adjacent databases $d,d'$ and for any subset of outputs $S\subset \mathcal{R}$ one has \begin{equation*}
       \mathbb{P}\left[ \RandMech(d) \in S\right] \leq e^\epsilon \mathbb{P}\left[\RandMech(d') \in S\right] +\delta.
   \end{equation*}
\end{definition}

Let us also present a famous and widely used differentially private mechanism, known as the \emph{report noisy max} mechanism.

\begin{definition}
    Let $K\in \mathbb{N}^*$, and let $\mathcal{X}$ be a set that can be partitioned into $K$ subsets $\mathcal{X}_1$, \dots, $\mathcal{X}_K$. The mechanism that, given a database $d$ of elements of $\mathcal{X}$, reports $\argmax_{k\in [K]}\left[n_k+Y_k\right],$
where $[K]:=\{1,\dots,K\}$, $n_k:=|d \cap \mathcal{X}_k|$ and $Y_k$ is a Laplace noise with mean $0$ and scale $\frac1\gamma$, $\gamma\in \mathbb{R_+^*}$, is called \emph{report noisy max}.
\end{definition}

\begin{theorem}[\cite{dwork2014algorithmic}]
\label{th:dp_report_noisy_max}
Let $\RandMech$ be the report noisy max as above. Then $\RandMech$ is $(2\gamma,0)$-differentially private.
\end{theorem}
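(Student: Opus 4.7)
My plan is to adapt the standard conditioning argument for the report noisy max mechanism. I would fix an arbitrary $k^* \in [K]$ and compare $\mathbb{P}[\RandMech(d) = k^*]$ with $\mathbb{P}[\RandMech(d') = k^*]$ for adjacent $d, d'$. The key idea is to condition on the noises $(Y_k)_{k \neq k^*}$ so that both probabilities reduce to one-sided tail probabilities of a single Laplace random variable $Y_{k^*}$, whose ratio can be controlled by the density of $\Lap(1/\gamma)$.

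Concretely, up to a probability-zero tie event, I would write
\begin{equation*}
\{\RandMech(d) = k^*\} = \{Y_{k^*} \geq T\}, \qquad T := \max_{k \neq k^*}(n_k + Y_k) - n_{k^*},
\end{equation*}
and analogously $\{\RandMech(d') = k^*\} = \{Y_{k^*} \geq T'\}$ with $T'$ defined from the counts $(n_k')_k$. Next I would establish the shift bound $|T - T'| \leq 2$. Adjacency forces $|n_k - n_k'| \leq 1$ for every $k$, so $|n_{k^*} - n_{k^*}'| \leq 1$, and the $1$-Lipschitz property of the max under the $\ell_\infty$ norm yields $\bigl|\max_{k \neq k^*}(n_k + Y_k) - \max_{k \neq k^*}(n_k' + Y_k)\bigr| \leq 1$. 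Summing the two contributions gives the claimed bound of $2$.

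The remaining computation is the Laplace tail comparison: for $Y \sim \Lap(1/\gamma)$ and any reals $t, t'$ with $|t - t'| \leq 2$, the pointwise density bound $f_Y(y)/f_Y(y - \Delta) \leq e^{\gamma|\Delta|}$ integrated over the half-line $[t', \infty)$ gives $\mathbb{P}[Y \geq t'] \leq e^{2\gamma}\, \mathbb{P}[Y \geq t]$. Applying this with $t = T$ and $t' = T'$ (which only depend on the conditioning noises) and then integrating over $(Y_k)_{k \neq k^*}$ yields
\begin{equation*}
\mathbb{P}[\RandMech(d') = k^*] \leq e^{2\gamma}\, \mathbb{P}[\RandMech(d) = k^*].
\end{equation*}
Summing this pointwise bound over $k^* \in S$ for an arbitrary $S \subseteq [K]$ produces the $(2\gamma, 0)$-DP inequality with $\delta = 0$.

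The main obstacle, and really the only conceptual step, is the shift bound $|T - T'| \leq 2$: one must keep track of both the change in $n_{k^*}$ and the change in the max over $k \neq k^*$, because in general both can be nonzero (e.g.\ under a replace-one-element adjacency convention, moving a point between two partition classes shifts two distinct counts). Once this is in hand, the Laplace tail ratio contributes the factor $e^{2\gamma}$ routinely, and the rest is bookkeeping.
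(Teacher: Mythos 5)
Your proposal is correct and follows essentially the same route as the argument the paper relies on: Theorem~\ref{th:dp_report_noisy_max} is quoted from~\cite{dwork2014algorithmic}, and the paper's own adaptation of that proof (Lemma~\ref{lemma:mimic_dwork} in the appendix) uses exactly your conditioning on the noises $(Y_k)_{k\neq k^*}$, a threshold shifted by at most $2$ thanks to $\abs{n_k-n'_k}\le 1$ for all $k$ (no monotonicity assumed), and the Laplace tail ratio bound $e^{2\gamma}$. No gaps.
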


We now define the notion of \emph{infinite divisibility} that we will use to implement distributed differential privacy.
\begin{definition}
    A random variable $Y$ is said to be \emph{infinitely divisible} if, for any $m\in \mathbb{N}^*$, we can find a family $(X_{m, i})_{i\in [m]}$ of independent and identically distributed (i.i.d.) random variables such that $Y$ has the same distribution as $\sum_{i=1}^m X_{m,i}$.
\end{definition}

The following proposition from~\cite{kotz2001laplace} claims that the Laplace distribution is infinitely divisible~\footnote{Another well-known example of infinitely divisible probability distribution is the Gaussian distribution which can be seen as the sum of Gaussian distributions of well chosen scale parameter. In a possible further work, we could indeed replace the (distributed) Laplace noise by a (distributed) Gaussian noise.}, enabling to distribute its generation among an arbitrary number of agents.
\begin{proposition}[\cite{kotz2001laplace}]
    \label{prop:infinite_divisibility}
    Let $m\in \mathbb{N}^*$ and $\gamma \in \mathbb{R_+^*}$. Let $G_p^{(i)}$, for $(i, p) \in  [m] \times [2]$, be i.i.d. random variables following the Gamma distribution of shape $\frac{1}{m}$ and scale $\frac{1}{\gamma}$. Then $\sum_{i=1}^m\left(G_1^{(i)} - G_2^{(i)}\right)$ follows the Laplace distribution of mean $0$ and scale $\frac{1}{\gamma}$. The Laplace distribution is said to be \emph{infinitely divisible}.
\end{proposition}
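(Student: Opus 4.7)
The plan is to compute characteristic functions, which is the natural route for any infinite divisibility argument: independence in the probability space becomes multiplicativity in Fourier space, and both sides of the identity then reduce to elementary algebra.

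First, I would recall that the characteristic function of a Gamma random variable with shape $k$ and scale $\theta$ is $t \mapsto (1 - i\theta t)^{-k}$, with the principal branch of the complex power. Specialising to $k = 1/m$ and $\theta = 1/\gamma$, and using that the characteristic function of $-G_2^{(i)}$ is obtained by replacing $t$ with $-t$, independence of $G_1^{(i)}$ and $G_2^{(i)}$ yields
\[
\phi_{G_1^{(i)} - G_2^{(i)}}(t) \;=\; (1 - it/\gamma)^{-1/m} (1 + it/\gamma)^{-1/m} \;=\; \bigl(1 + t^2/\gamma^2\bigr)^{-1/m}.
\]

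Next, independence across $i \in [m]$ turns the sum into a product of $m$ identical factors, giving
\[
\phi_{\sum_{i=1}^{m}(G_1^{(i)} - G_2^{(i)})}(t) \;=\; \bigl(1 + t^2/\gamma^2\bigr)^{-1},
\]
which is exactly the characteristic function of the Laplace distribution with mean $0$ and scale $1/\gamma$. I would then conclude by Lévy's uniqueness theorem that the two distributions coincide.

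I do not anticipate any serious obstacle. The only mild technicality is selecting a coherent branch for the fractional powers $(1 \pm it/\gamma)^{-1/m}$; the principal branch is analytic on a neighbourhood of the relevant region, equals $1$ at $t = 0$, and its product is manifestly real and positive, so the identity above is valid for all $t \in \mathbb{R}$ without further care. If desired, one could instead derive the Gamma characteristic function once from scratch via the integral $\int_0^{\infty} x^{k-1} e^{-x/\theta} e^{itx} \, dx$ and contour deformation, but this is a textbook computation that I would simply cite.
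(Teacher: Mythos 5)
Your proof is correct, and it follows the same characteristic-function route that the paper itself uses: the result is cited from Kotz et al., but Appendix A.2 carries out exactly this computation (Gamma characteristic function $(1-it/\gamma)^{-1/m}$, product over the independent differences, identification of $(1+t^2/\gamma^2)^{-1}$ as the Laplace characteristic function) in the more general setting where only a fraction $\tau$ of the summands is retained. No gaps; the branch-choice remark is a reasonable level of care for the fractional powers.
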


\begin{definition}
Let $\RandMech$ be a randomised mechanism with output range $\mathcal{R}$ and $d$, $d'$ a pair of adjacent databases. Let $\aux$ denote an auxiliary input. For any $o\in \mathcal{R}$, the \emph{privacy loss} at $o$ is defined as 
\begin{equation*}
    c(o;\RandMech,\aux,d,d'):= \log \left( \frac{\mathbb{P}[\RandMech(\aux,d) =o]}{\mathbb{P}[\RandMech(\aux,d') =o]} \right).
\end{equation*}

We define the \emph{privacy loss random variable} $C(\RandMech,\aux,d,d')$ as
\begin{equation*}
    C(\RandMech,\aux,d,d') := c(\RandMech(d);\RandMech,\aux,d,d')
\end{equation*}
\emph{i.e.} the random variable defined by evaluating the privacy loss at an outcome sampled from $\RandMech(d)$.
\end{definition}

In order to determine the privacy loss of our protocol, we use a traditional two-fold approach. First of all, we determine the privacy loss per query and, in a second step, we compose the privacy losses of each query to get the overall loss. The classical composition theorem (see e.g.~\cite{dwork2014algorithmic}) states that the guarantees $\epsilon$ of sequential queries add up. Nevertheless, training a deep neural network, even with a collaborative framework as presented in this paper, requires a large amount of calls to the databases, precluding the use of this classical composition. Therefore, to obtain reasonable DP guarantees, we need to keep track of the privacy loss with a more refined tool, namely the moments accountant~\cite{abadi2016deep} that we introduce here, deferring the details of the method in Section A.1 of the appendix.

\begin{definition}
With the same notations as above, the \emph{moments accountant} is defined for any $l\in \mathbb{R_+^*}$ as
\begin{equation*}
    \alpha_{\RandMech}(l):=\max_{\aux,d,d'}\alpha_{\RandMech}(l;\aux,d,d') 
\end{equation*}
where the maximum is taken over any auxiliary input $\aux$ and any pair of adjacent databases $(d,d')$ and $\alpha_{\RandMech}(l;\aux,d,d'):=\log \left(\mathbb{E}\left[\exp(l C(\RandMech,\aux,d,d'))\right]\right)$ is the moment generating function of the privacy loss random variable.
\end{definition}

\subsection{Homomorphic encryption}
\label{sec:prelim_he}

Let us consider $\Lambda$ and $\Omega$ which respectively are the set of cleartexts (\emph{a.k.a.} the clear domain) and the set of ciphertexts (\emph{a.k.a.} the encrypted domain). A homomorphic encryption system first consists in two algorithms $\enc_\pk:\Lambda\longrightarrow\Omega$ and $\dec_\sk:\Omega\longrightarrow\Lambda$  where $\pk$ and $\sk$ are data structures which represent the public encryption key and the private decryption key of the cryptosystem. 

Homomorphic encryption systems are by necessity probabilistic, meaning that some randomness has to be involved in the $\enc$ function and that the ciphertexts set $\Omega$ is significantly much bigger than the cleartexts set $\Lambda$. Any (decent) homomorphic encryption scheme possesses the semantic security property meaning that, given $\enc(m)$ and polynomially many pairs $(m_i,\enc(m_i))$ it is hard\footnote{``Hard'' means that it requires solving a reference (conjectured) computationally hard problem on which the security of the cryptosystem hence depends. From a practical viewpoint, given a security target $\lambda$, the concrete parameters of a homomorphic scheme are chosen such that the best known (exponential-time) algorithms for solving the underlying reference problem require an order of magnitude of $2^\lambda$ nontrivial operations.} to gain any information on $m$ with a significant advantage over guessing. Most importantly, a homomorphic encryption scheme offers two other operators $\oplus$ and $\otimes$ where
\begin{itemize}
    \item $\enc(m_1)\oplus\enc(m_2)=\enc(m_1+m_2)\in\Omega$
    \item $\enc(m_1)\otimes\enc(m_2)=\enc(m_1m_2)\in\Omega$.
\end{itemize}

When these two operators are supported without restriction by a homomorphic scheme, it is said to be a Fully Homomorphic Encryption (FHE) scheme. A FHE with $\Lambda=\ZZ_2$ is Turing-complete and, as such, is \emph{in principle} sufficient to perform any computation in the encrypted domain with a computational overhead depending on the security target\footnote{Polynomial in $\lambda$.}. \emph{In practice}, though, the $\oplus$ and $\otimes$ are much more computationally costly than their clear domain counterparts which has led to the development of several approaches to HE schemes design each with their pros and cons. 

\paragraph{Somewhat HE (SHE).} Somewhat homomorphic encryption schemes, such as BGV \cite{BGV12} or BFV \cite{fan2012somewhat}, provide both operators but with several constraints. Indeed, in these cryptosystems the $\otimes$ operator is much more costly than the $\oplus$ operator and the cost of the former strongly depends on the \emph{multiplicative depth} of the calculation, that is the maximum number of multiplications that have to be chained (although this depth can be optimised~\cite{AubryCS19}). Interestingly, most SHE schemes offer a \emph{batching} capability by which multiple cleartexts can be packed in one ciphertext resulting in (quite massively) parallel homomorphic operations i.e.,
\begin{align}
\label{eq:batch}
\enc(m_1,...,m_{\kappa})\oplus\enc(m'_1,...,m'_{\kappa})=\enc(m_1+m'_1,...,m_{\kappa}+m'_{\kappa})
\end{align}
(and similarly so for $\otimes$). Typically, several hundreds such slots are available which often allows to significantly speed up encrypted-domain calculations.

\paragraph{Fully HE (FHE).} Fully homomorphic encryption schemes offer both the $\oplus$ and $\otimes$ operators without restrictions on multiplicative depth. At the time of writing, only the FHE-over-the-torus approach, instantiated in the TFHE cryptosystem~\cite{ChillottiGGI16}, offers practical performances. In this cryptosystem, $\oplus$ and $\otimes$ have the same constant cost. On the downside, TFHE offers no batching capabilities. To get the best of all worlds, the TFHE scheme is often hybridised with SHE by means of operators allowing to homomorphically switch among several ciphertext formats~\cite{chimera, lou2020glyph} to perform each part of calculation with the most appropriate scheme (see e.g.~\cite{embed2019}).

\section{SPEED: Secure, Private, and Efficient Deep Learning}
\label{sec:speed}
\subsection{A distributed learning architecture}
Let us consider a set of $n$ owners (a.k.a. \emph{teachers}) each holding a personal sensitive model $f_i$.
We assume that we also have an unlabelled public database $D$. The goal is to label $D$ using the knowledge of the private (teacher) models to train a collaborative model (a.k.a. \emph{student model}) mapping an input space $\mathcal{X}$ to an output space $[K]=\{1,\dots,K\}$. To do so while keeping the process private, we follow the setting illustrated by Figure~\ref{fig:speed_scheme} relying on a (distrusted) aggregation server:

\begin{enumerate}
    \item For every sample $x$ of the public database $D$, the student sends $x$ to the aggregator requesting it to output label for $x$. The aggregator forwards this request to the $n$ teachers.
    \item Each teacher $i$ labels $x$ using its own private model $f_i$. Then each teacher adds noise to the label (see Section~\ref{sec:noisy_labels}) and encrypts the noisy label before sending it to the aggregation server.
    \item The aggregator performs a homomorphic aggregation of the noisy labels and returns the result to the student model, namely the most common answered label (see Section~\ref{sec:homomorphic_aggregation}).
    \item The student, who owns the decryption key, decrypts the aggregated label and is then able to use the labelled sample to train its model.
\end{enumerate}

Our framework addresses two kinds of threats using two complementary tools.
On one hand, differential privacy protects the sensitive data from attacks against the student model. Indeed, some model inversion attacks~\cite{wu2016methodology} might disclose the training data of the student model, and especially the labels of database $D$. But differential privacy ensures that the noise applied to the teachers' answers prevents the aggregated labels from leaking information about the sensitive models $f_i$~\footnote{Thanks to the DP guarantees, the labels of $D$ could actually be published as well.}.
On the other hand, the homomorphic encryption of the teachers' answers prevents the aggregator to learn anything about the sensitive data while enabling it to blindly compute the aggregation.

\subsection{Noise generation and threat models}
\label{sec:noisy_labels}
When requested to label a sample $x$, each owner $i$ uses its model $f_i$ to infer the label of $x$. In order for the aggregator to compute the most common label in the secret domain, the owner must send a one-hot encoding of the label. That is, rather than sending $f_i(x)$, the $i$-th teacher sends a $K$-dimensional vector, say $\onehotencoding^{(i)}$, whose $f_i(x)$-th coordinate is an encryption of $1$ while all the other coordinates are encryptions of $0$. To guarantee differential privacy (see Section~\ref{sec:dp_analysis} for the formal analysis), the owner adds to this one-hot encoding a noise drawn from $G_1^{(i)} - G_2^{(i)}$ where the $G_1^{(i)}$ and $G_2^{(i)}$ are $2n$ i.i.d. $K$-dimensional random variables following the Gamma distribution of shape $\frac{1}{n}$ and scale $\frac{1}{\gamma}$, where $\gamma\in \mathbb{R_+^*}$. Then, $i$ sends the (encrypted) noisy one-hot encoded vector whose $k$-th coordinate corresponds to $\onehotencoding^{(i)}_k + G_{k,1}^{(i)} - G_{k,2}^{(i)}$.

Assuming that the aggregator has access to the student model, distributing the responsibility of adding the noise among all the teachers instead of delegating this task to the aggregator (see paragraph on centralised noise below) is necessary to protect the data against an honest-but-curious aggregator. Indeed, such an aggregator could use the information of the noise it generated to break the differential privacy guarantees and, potentially, recover the sensitive data by model inversion on the student model. Note that such an attack does not break the honest-but-curious assumption since the aggregator still performs its task correctly.

\paragraph{Beyond the honest-but-curious model} In a model that would go beyond the honest-but-curious aggregator hypothesis, the capability for the aggregator to add its own noise is even more harmful for the privacy (and of course, the accuracy) than not using noise at all. Indeed it gives the aggregator much more freedom to attack. As an example, think about a malicious aggregator that wants to know a characteristic $\chi$ on a particular teacher, called its victim. Given a query, for all $k\in [K]$, we write $n_k:=|\{i:f_i(x)=k\}|$ and call it the number of \emph{votes} for class $k$. Let us suppose that, for a given query, changing the value of the victim's characteristic $\chi$ from $\chi_0$ to $\chi_1$ also changes the victim's vote from a class $k_0$ to a class $k_1$. Hence, by denoting $n_{k_0}=\nu_0$ and $n_{k_1}=\nu_1$ if $\chi = \chi_0$ we get $n_{k_0}=\nu_0-1$ and $n_{k_1}=\nu_1+1$ if $\chi = \chi_1$. Then, if the aggregator knows all the $n_k$ for $k\in [K] \setminus \{k_0, k_1\}$ and knows $\nu_0$ and $\nu_1$ (which are the classical hypotheses in differential privacy), it can add just as much noise as needed for the class $k_0$ to be the argmax if and only if $\chi=\chi_0$~\footnote{For example, add $\nu_0 - \frac{1}{2} - n_k$ to all the classes except $k_0$ and $k_1$, $\nu_0 - 1 - \nu_1$ to the class $k_1$ and nothing to the class $k_0$.}. The result from the homomorphic argmax would then leak the information about the value of the victim's characteristic $\chi$.

\paragraph{Centralised noise generation} In a context in which the student model is kept private and, especially, not available to the aggregator, we can consider a centralised way of generating the noise. If we do not trust the teachers to generate the noise, we can charge the aggregator to do it, since it will not be able to use the knowledge of the noise to attack the sensitive data via the student model. The aggregator only needs to generate a Laplace noise (in the clear domain), and homomorphically add it to the unnoisy encryption of $n_k$ it receives from the teachers.
The infinite divisibility of the Laplace distribution (Proposition~\ref{prop:infinite_divisibility}) shows that the resulting noise is the same as in the case presented above in which each teacher generates an individual noise drawn from the difference of two Gamma distributions.
The privacy cost of one request is simply the privacy cost of the \emph{report noisy max}, namely $2\gamma$ (Theorem~\ref{th:dp_report_noisy_max}).\\

In a nutshell, we can consider the following different threat models:
\begin{itemize}
    \item honest (H) : the aggregation server performs its tasks properly and do not try to retrieve information from the data it has access to
    \item honest-but-curious (HBC) : the aggregation server performs its tasks properly but it may compute the available data to get sensitive information
    \item beyond honest-but-curious (BHBC) : the aggregation server performs the aggregation correctly but cannot be trusted to properly generate the noise necessary to the DP guarantees
\end{itemize}

Table~\ref{table:threat_models} summarises against which kind of server our protocol is protected, depending on the access the server has to the student model and on the way the noise is generated. As already emphasised, we focus on the case where the student model is public and the noise is distributively generated by the teachers because it is the most general model among the realistic threat models and thus gives the better tradeoff between flexibility and security.

\begin{table}[!htb]
    \centering
    \caption{Robustness of our framework depending on the availability of the student model and the noise generation}
    \begin{tabular}{c|c|c}
        & Private model & Public model \\\hline
        Centralised noise & \textbf{HBC} & \textbf{H} \\\hline
        Distributed noise & \textbf{BHBC} & \textbf{BHBC}
    \end{tabular}
    \label{table:threat_models}
\end{table}

\subsection{Technical details on the homomorphic aggregation}
\label{sec:homomorphic_aggregation}
\paragraph{Summing the noisy counts} The aggregation server receives the $n$ encrypted noisy labels and sums them up in the secret domain. Due to the infinite divisibility of the Laplace distribution, the server obtains a $K$-dimensional vector whose $k$-th ($k\in [K]$) coordinate is an encryption of:
\begin{equation*}
    \sum_{i=1}^{n} \left(\onehotencoding^{(i)}_k + G_{k,1}^{(i)} - G_{k,2}^{(i)} \right) = n_k + Y_k
\end{equation*}
where $n_k:=|\{i:f_i(x)=k\}|$ and $Y_k$ is a Laplace noise with mean $0$ and scale $\frac1\gamma$.

So far, we have only needed homomorphic addition which is a good start. Then an argmax operator must be performed after the summation. However, \emph{efficiently} handling the highly nonlinear argmax function by means of FHE is much more challenging.

\paragraph{Computing the argmax.} Most prior work on secure argmax computations use some kind of interaction between a party that holds a sensitive vector of values and a party that wants to obtain the argmax over those values. The non-linearity of the argmax operator presents unique challenges that have mostly been handled by allowing the two interested parties to exchange information. This means increased communication costs and, in some cases, information leakage. This is with the exception of~\cite{embed2019}. They provide a fully non-interactive homomorphic argmax computation scheme based on the TFHE encryption. We implemented and parametrised their scheme to fit the specific training problems presented in Section~\ref{sec:expe}.
We present here the main idea behind this novel FHE argmax scheme. For more details, see the original paper. The TFHE encryption scheme provides a \textit{bootstrap} operation that can be applied on any scalar ciphertext. Its purpose is threefold: switch the encryption key; reduce the noise; apply a non-linear operation on the underlying plaintext value. This underlying operation can be seen as a function 
\begin{equation*}
    g_{\threshold, \uppervalue, \lowervalue}(x) = 
     \begin{cases}
       \uppervalue & \quad \text{if} \quad x > \threshold  \\
       \lowervalue & \quad \text{if} \quad x < \threshold . \\
     \end{cases}
\end{equation*} 
One notable application is that of a "sign" bootstrap: we can extract the sign of the input with the underlying function $g_{0, 1, 0}(x)$. The argmax computation in the ciphertext space is made as follows. For every $k,k'$, $k \neq k'$, we compare the values $n_k+Y_k$ and $n_{k'}+Y_{k'}$ with a subtraction ($n_k+Y_k - n_{k'} - Y_{k'}$) and application of a sign bootstrap operation. This yields $\theta_{k,k'}$, a variable with value 1 if $n_k+Y_k > n_{k'}+Y_{k'}$ and $0$ otherwise. Therefore the complexity will be quadratic in the number of classes.  For a given $k$ we can then obtain a boolean truth value ($0$ or $1$) for whether $n_k+Y_k$ is the maximum value. To this end, we compute
\begin{equation*}
        \Theta_k = \sum_{i \neq k} \theta_{k,i}.
\end{equation*}
$n_k$ is the max if and only if, for all $ i$ one has $  \theta_{k,i} = 1$ \emph{i.e.} $\Theta_{k} = K-1$. We can therefore apply another bootstrap operation with $g_{K-\frac{3}{2}, 1, 0}$. If $\Theta_{k} = K-1$, the boostrap will return an encryption of $1$, and return an encryption of $0$ otherwise. Once decrypted, the position of the only non-zero value is the argmax. Because the underlying function $g_{\threshold, \uppervalue, \lowervalue}$ is applied homomorphically, its output is inherently probabilistic. In the FHE scheme used, an error is inserted in all the ciphertexts at encryption time to ensure an appropriate level of security. This means that if two values are too close, then the sign bootstrap operation might return the wrong result over their difference. The exact impact of this approximation on the accuracy is evaluated in Section~\ref{sec:expe}.

\paragraph{Remark.} Another solution would be to send the noisy histogram $n_k+Y_k$ of the counts for each class $k$ to the student and let her process the argmax in the clear domain. This could indeed be performed with a plain-old additively-homomorphic cryptosystem such as Paillier or (additive-flavored) ElGamal, avoiding the machinery of the homomorphic argmax. Nevertheless, this approach was put aside because sending the whole histogram instead of the argmax would provide much worse DP guarantees.

\section{Differential privacy analysis}
In this section, we will give privacy guarantees considering that two databases $d$ and $d'$ are adjacent if they differ by one teacher i.e. there exists $i_0\in [n]$ such that $f_{i_0} \neq f'_{i_0}$ and, for all $i\in [n]\setminus \{i_0\}$, $f_i = f'_i$. This definition of adjacency is quite conservative and is strictly larger than the definition of adjacency from~\cite{papernot2016semisupervised} (indeed, in the assumption whereby the personal teacher databases $d_i$ are disjoint, changing one sample from a personal database changes at most one teacher).

\label{sec:dp_analysis}
\paragraph{Robustness against colluding teachers.} As we have decided not to trust the aggregation server to generate the noise necessary to the privacy guarantees, we may also assume that a subset of teachers might be malicious and collude by communicating their generated noise, which gives the same DP guarantees from the point of view of a colluding teacher as if they would have not generated any noise and, to this extent, our protocol, which addresses this issue, is fault tolerant. The following theorem quantifies the privacy cost of such failures.
\\

In the following, we call $\RandMech$ the aggregation mechanism that outputs the argmax of the noisy counts. $\RandMech(d,Q)$ is the output of $\RandMech$ for the database $d$ and the query $Q$. Let $\ampNor\in \mathbb{R}^*_+$ be the inverse scale parameter of the distributed noise. Considering the DP guarantees from the point of view of an entity $\mathcal{E}$, let $\ratioSucTeach\in (0, 1)$ be the ratio of the teachers whose noise is ignored by $\mathcal{E}$.

\begin{theorem}
    \label{th:dp_per_query}
    Let us define $I \colon v \in \mathbb{R}^*_+ \mapsto \int_{0}^{+\infty}\left(t+v\right)^{\ratioSucTeach-1}t^{\ratioSucTeach-1}e^{-2 t}dt$ and $g \colon t \in \mathbb{R} \mapsto \frac{\int_{\ampNor t}^{+\infty}e^{-v}I(v)dv}{\int_{\ampNor (t + 2)}^{+\infty}e^{-v}I(v)dv}$.
    
    Then, from $\mathcal{E}$'s point of view, $\RandMech$ is $(\epsilon, 0)$-differentially private, with
    \begin{align*}
            \epsilon = \log\left(1 + 2\frac{\int_0^{\ampNor}e^{-v}I(v)dv}{\int_{2\ampNor}^{+\infty}e^{-v}I(v)dv}\right).
    \end{align*}
    Moreover, if $\ratioSucTeach > \frac{1}{2}$, g is differentiable in $0$ and $\RandMech$ is $(\epsilon', 0)$-differentially private, with
    \begin{align*}
            \epsilon' = \min \left[\epsilon, \log\left(g(0) - g'(0)\right)\right]
    \end{align*}
    where $g'(0) = \ampNor \frac{\frac{\Gamma(\ratioSucTeach)^2}{2}e^{-2\ampNor}I(2\ampNor) - I(0)\int_{2\ampNor}^{+\infty}e^{-v}I(v)dv}{\left(\int_{2\ampNor}^{+\infty}e^{-v}I(v)dv\right)^2}$.
\end{theorem}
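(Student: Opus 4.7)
The plan is to carry out a Report Noisy Max style analysis adapted to the effective residual noise that entity $\mathcal{E}$ actually sees, and then to bound the resulting likelihood ratio $g(T)=S(T)/S(T+2)$ uniformly in $T$. First I will identify that residual noise. From $\mathcal{E}$'s viewpoint only a fraction $\tau$ of the teachers contributes unknown noise, so by the infinite divisibility of Proposition~\ref{prop:infinite_divisibility} the residual noise in each coordinate $k$ is distributed as $G_1^{(k)}-G_2^{(k)}$ with $G_1^{(k)},G_2^{(k)}$ i.i.d.\ Gamma of shape $\tau$ and scale $1/\gamma$. A direct convolution then yields the density
\begin{equation*}
f(y)=\frac{\gamma}{\Gamma(\tau)^{2}}\,e^{-\gamma|y|}\,I(\gamma|y|),
\end{equation*}
which is symmetric about $0$ and decreasing on $[0,\infty)$; the survival function $S(t)=\mathbb{P}[Y^{\tau}>t]$ for $t\ge 0$ reads $\Gamma(\tau)^{-2}\int_{\gamma t}^{\infty}e^{-v}I(v)\,dv$, from which the quotient form of $g(t)$ in the statement is immediate via $v=\gamma y$.

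I will then set up the standard RNM conditioning: fix $k^{\ast}\in[K]$ and condition on the noises $Y_{-k^{\ast}}$ on the other coordinates; then $\mathcal{A}$ returns $k^{\ast}$ iff $Y^{\tau}_{k^{\ast}}>T(Y_{-k^{\ast}},c)$ with $T=\max_{k\neq k^{\ast}}(Y_{k}+c_{k})-c_{k^{\ast}}$. A short case analysis, depending on whether $k^{\ast}$ equals the old class, the new class, or neither of the teacher that was modified between $c$ and $c'$, shows that one always has $|T'-T|\le 2$. Hence the per-$k^{\ast}$ conditional ratio is dominated by $S(T)/S(T+2)=g(T)$, and integrating over $Y_{-k^{\ast}}$ reduces the theorem to bounding $\sup_{T}g(T)$ from above.

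For the first $\epsilon$ I will write $g(T)-1=\left(\int_{T}^{T+2}f(y)\,dy\right)/\,S(T+2)$ and combine two ingredients. Since $f$ is symmetric and unimodal, the integral of $f$ over any length-$2$ interval is maximised when centred at the mode, giving $\int_{T}^{T+2}f\le\int_{-1}^{1}f=2\int_{0}^{1}f$. In the critical regime $T\in[-2,0]$ one also has $S(T+2)\ge S(2)$. The two tail cases are handled separately: $T\le -2$ follows by the symmetry of $f$, and $T\ge 0$ uses the monotonicity of $I$ (decreasing when $\tau<1$) to compare the hazard and survival ratios and conclude $g(T)\le g(0)\le 1+2A/B$, where $A$ and $B$ denote the two $I$-integrals in the statement. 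Substituting $\int_{0}^{1}f$ and $S(2)$ via $v=\gamma y$ then yields $\epsilon=\log(1+2A/B)$.

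For the refined bound under $\tau>1/2$, this threshold is exactly what makes $I(0)$ finite, so $f$ is bounded at $0$ and $g$ is $C^{1}$ near $0$; differentiating the quotient $g=N/D$ with $N(t)=\int_{\gamma t}^{\infty}e^{-v}I(v)\,dv$ and $D(t)=N(t+2)$ via the fundamental theorem of calculus produces exactly the announced formula for $g'(0)$. A first-order Taylor expansion of $g$ around $0$ combined with a local concavity estimate on $[-1,0]$ upgrades this to $\sup_{T}g(T)\le g(0)-g'(0)$, yielding the alternative admissible bound; $\epsilon'$ is then the minimum of the two. The main obstacle is precisely the $T\ge 0$ part of the first bound and the concavity step of the refined bound: in both, the Laplace ``flat hazard'' identity no longer holds for $\tau<1$, and one must extract quantitatively sharp monotonicity information from the explicit shape of $I$.
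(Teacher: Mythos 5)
Your overall architecture coincides with the paper's: identify the residual noise seen by $\mathcal{E}$ as a generalised Laplace variable with density $\frac{\gamma}{\Gamma(\tau)^2}e^{-\gamma|y|}I(\gamma|y|)$, run a report-noisy-max argument with a threshold shift of $2$ (this is Lemma~\ref{lemma:mimic_dwork}), and reduce everything to bounding $\sup_T g(T)$. Your treatment of $T\le 0$ is correct and in fact slightly more streamlined than the paper's: since $S(T+2)\ge S(2)$ for every $T\le 0$ and the mass of a length-$2$ window is maximised when centred at the mode, you get $g(T)\le 1+2A/B$ on all of $(-\infty,0]$ in one stroke, whereas the paper first localises the maximiser to $[-1,0]$ (Lemma~\ref{lemma:max_of_g}) before applying the same two estimates.

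The genuine gap is exactly where you point at the end: the claim $g(T)\le g(0)$ for $T\ge 0$, and the concavity needed for the refined bound. Neither follows from the monotonicity of $I$. Having $g$ nonincreasing on $[0,+\infty)$ is equivalent to $I(\gamma(t+2))\,I(\gamma u)\le I(\gamma t)\,I(\gamma(u+2))$ for $u\ge t\ge 0$, i.e.\ to the map $v\mapsto I(v)/I(v+\beta)$ being decreasing — a log-convexity property of $I$, not a monotonicity one. (A decreasing density does not in general have a decreasing hazard rate; for $\tau=1$ the hazard is exactly constant, and for $\tau<1$ one must prove the inequality goes the right way.) The paper establishes this in Lemma~\ref{lemma:log_concavity_of_I} by writing $I(v+\beta)I'(v)-I(v)I'(v+\beta)$ as a double integral, symmetrising in $(x,y)$, and checking the sign of the resulting integrand; this is a full page of computation and is the crux of the first bound. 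Likewise, the refined bound needs $g$ concave on $[\argmax(g),0]$, which the paper obtains in Proposition~\ref{prop:maj_g_tau} from a differential inequality $g''<2g'\frac{\density(t+2)}{1-\cumulDensity(t+2)}+(\text{negative term})$ combined with a contradiction argument (and some care because $g'$ is not differentiable at $0$); "a local concavity estimate" is asserted but not supplied in your proposal. Until these two facts are proved, the $T\ge 0$ half of the first bound and the entire $\epsilon'$ bound are unsubstantiated.
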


\begin{proof_sketch}
    Adapting the proof of the privacy cost of the report noisy max from~\cite{dwork2014algorithmic}, we first show that, if we can find a function $M$ of $\ampNor$ and $\ratioSucTeach$ such that, for any $t\in \mathbb{R}$, $g(t) \leq M$, then $\RandMech$ is $(\log(M), 0)$-differentially private. This motivates us to find an upper bound of $g$.
    
    To do so, we prove that $g$ has a maximum on $\mathbb{R}$ and that this maximum is reached on the interval $[-1; 0]$.
    On one hand, we show that, for all $t\in [-1; 0]$, $g(t) \leq 1 + 2\frac{\int_0^{\ampNor}e^{-v}I(v)dv}{\int_{2\ampNor}^{+\infty}e^{-v}I(v)dv}$.
    On the other hand, we prove that, if besides $\ratioSucTeach > \frac{1}{2}$, then $g$ is concave on $[\argmax(g); 0]$ and thus, for all $t\in [-1; 0]$, $g(t) \leq g(0) - g'(0)$ (note that $g$ is not differentiable in $0$ if $\ratioSucTeach \leq \frac{1}{2}$).\\
\end{proof_sketch}

Denoting $S$ the subset of teachers who are honest (i.e. do not collude), this theorem allows us to control the privacy cost by the ratio $\ratioSucTeach$ of the teachers who kept their noise secret, from the point of view of both:
\begin{itemize}
    \item a colluding teacher, taking $\ratioSucTeach = \frac{|S|}{n}$
    \item an honest teacher, taking $\ratioSucTeach = \frac{n-1}{n}$
    \item any entity who has access to the student model but is not a teacher, taking $\ratioSucTeach = 1$
\end{itemize}

Note that we can also use Theorem~\ref{th:dp_per_query} in the hypothesis whereby the colluding teachers publish their noise (to the whole world), adapting $\ratioSucTeach$ in consequence~\footnote{e.g. the privacy guarantee for an honest teacher would be computed with $\ratioSucTeach = \frac{|S|-1}{n}$.}. For $\ratioSucTeach=1$, the privacy guarantee is given by $\underset{\ratioSucTeach\to 1}{\lim} \epsilon'$ which, as shown by Proposition~\ref{prop:limit_tau}, is the classical bound of the \emph{report noisy max} with a centralised Laplace noise.

\begin{proposition}
    \label{prop:limit_tau}
    For all $\ampNor\in \mathbb{R}_+^*$, $\underset{\ratioSucTeach\to 1}{\lim}\left[\log(g(0) - g'(0))\right] = 2\ampNor$.
\end{proposition}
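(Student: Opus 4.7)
The key observation is that $I(v)$ degenerates to a constant when $\tau = 1$: the factors $(t+v)^{\tau-1}$ and $t^{\tau-1}$ both collapse to $1$, so $I(v) = \int_0^{+\infty} e^{-2t} dt = \tfrac{1}{2}$ for every $v \geq 0$. The plan is to upgrade this pointwise identity into a limit statement as $\tau \to 1$, and then combine the resulting limits of the pieces of $g(0) - g'(0)$. No L'Hôpital-style analysis is required; a fortunate cancellation in the numerator of $g'(0)$ forces $g'(0) \to 0$ on the nose.

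First I would establish $\lim_{\tau \to 1} I(v) = \tfrac{1}{2}$ for every $v \geq 0$. For $v > 0$ this is a routine dominated-convergence argument inside the definition of $I(v)$, using that $(t+v)^{\tau-1}$ stays bounded on a neighborhood of $\tau = 1$. For $v = 0$, the substitution $u = 2t$ yields the closed form $I(0) = 2^{1-2\tau}\Gamma(2\tau - 1)$, which tends to $\tfrac{1}{2}\Gamma(1) = \tfrac{1}{2}$. I would then promote these pointwise limits to the outer integrals via a uniform bound on $I$: for $\tau \in [3/4, 1]$ and $v \geq 0$, one has $(t+v)^{\tau-1} \leq t^{\tau-1}$ for $t > 0$ (since $\tau - 1 \leq 0$), so on $(0,1]$ the integrand is bounded by $t^{2(\tau-1)} \leq t^{-1/2}$, which is integrable, while on $[1, +\infty)$ one has $(t+v)^{\tau-1} t^{\tau-1} \leq 1$ so the integrand is dominated by $e^{-2t}$. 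Hence $I(v)$ is uniformly bounded, $e^{-v}I(v)$ is dominated by $C e^{-v}$, and dominated convergence gives
\begin{equation*}
    \int_0^{+\infty} e^{-v} I(v) dv \to \frac{1}{2}, \qquad \int_{2\gamma}^{+\infty} e^{-v} I(v) dv \to \frac{e^{-2\gamma}}{2}.
\end{equation*}

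Combining, $g(0) \to (1/2) / ((1/2)e^{-2\gamma}) = e^{2\gamma}$. For $g'(0)$, using $\Gamma(\tau)^2 \to 1$, $I(2\gamma) \to \tfrac{1}{2}$, $I(0) \to \tfrac{1}{2}$, and the integral limits above, the numerator in the expression for $g'(0)$ tends to $\tfrac{1}{2} \cdot e^{-2\gamma} \cdot \tfrac{1}{2} - \tfrac{1}{2} \cdot \tfrac{e^{-2\gamma}}{2} = 0$, while the denominator tends to $(e^{-2\gamma}/2)^2 > 0$; therefore $g'(0) \to 0$. It follows that $\log(g(0) - g'(0)) \to \log(e^{2\gamma}) = 2\gamma$, which is the desired identity, and also the classical report-noisy-max bound of Theorem~\ref{th:dp_report_noisy_max}.

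The only nontrivial step is the uniform-integrability argument that justifies interchanging $\lim_{\tau \to 1}$ with the outer integrals, where one must simultaneously control the behavior of $I$ near $v = 0$ (small-$t$ singularity) and at infinity. Everything else is direct substitution of pointwise limits combined with the cancellation in the numerator of $g'(0)$.
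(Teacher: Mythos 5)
Your proposal is correct and follows essentially the same route as the paper's proof: both rest on showing $\lim_{\ratioSucTeach\to 1} I(v)=\tfrac12$ via dominated convergence, passing this limit through the outer integrals, and exploiting the exact cancellation in the numerator of $g'(0)$ so that $g'(0)\to 0$ while $g(0)\to e^{2\ampNor}$. The only (cosmetic) differences are that you substitute limits directly into the closed-form expression of $g'(0)$ in terms of $I$ rather than first identifying the limiting density and CDF as those of the Laplace distribution, and your uniform bound $I(v)\le I(0)<\infty$ for $\ratioSucTeach\in[3/4,1]$ gives a slightly cleaner domination than the paper's case split on $\ampNor u\lessgtr 1$.
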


Furthermore, Proposition~\ref{prop:limit_gamma} shows that, naturally, the privacy cost tends to be null when the noise becomes infinitely large ($\ampNor$ approaches $0$).

\begin{proposition}
    \label{prop:limit_gamma}
    For all $\ratioSucTeach\in (0, 1)$, $\underset{\ampNor\to 0}{\lim}\left[\log\left(1 + 2\frac{\int_0^{\frac{\ampNor}{2}}e^{-v}I(v)dv}{\int_{\ampNor}^{+\infty}e^{-v}I(v)dv}\right)\right] = 0$.
\end{proposition}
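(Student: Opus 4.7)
\medskip

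\noindent\textit{Proof proposal.} The plan is to separately analyse the numerator and the denominator inside the logarithm, show the former tends to $0$ and the latter to a finite, strictly positive constant, so that the argument of $\log$ tends to $1$ and the whole quantity to $0$.

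\smallskip

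\noindent\textbf{Step 1: pointwise behaviour of $I$.} I would first examine the asymptotics of $I(v)=\int_0^{+\infty}(t+v)^{\ratioSucTeach-1}t^{\ratioSucTeach-1}e^{-2t}dt$ on $(0,+\infty)$. At infinity, since $(t+v)^{\ratioSucTeach-1}\sim v^{\ratioSucTeach-1}$ uniformly in $t$ on the support of $t^{\ratioSucTeach-1}e^{-2t}$, one has $I(v)\sim \Gamma(\ratioSucTeach)2^{-\ratioSucTeach}v^{\ratioSucTeach-1}$, so $e^{-v}I(v)$ decays exponentially. Near $0$, splitting the integral at $t=1$ and using the change of variable $t=vs$ on $[0,1]$ gives $I(v)=v^{2\ratioSucTeach-1}\int_0^{1/v}(s+1)^{\ratioSucTeach-1}s^{\ratioSucTeach-1}e^{-2vs}ds + O(1)$. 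Depending on whether $\ratioSucTeach$ is greater than, equal to, or smaller than $\tfrac12$, this gives respectively $I(v)\to I(0)<\infty$, $I(v)\sim -C\log v$, or $I(v)\sim C v^{2\ratioSucTeach-1}$, with $C>0$. In all three cases $I$ is positive and locally integrable at $0$ since $2\ratioSucTeach-1>-1$.

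\smallskip

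\noindent\textbf{Step 2: denominator.} I would then apply the monotone convergence theorem to the family $v\mapsto e^{-v}I(v)\mathbf{1}_{[\ampNor,+\infty)}(v)$, which increases pointwise to $e^{-v}I(v)\mathbf{1}_{(0,+\infty)}(v)$ as $\ampNor\to 0$. By Step 1, the limiting function is integrable on $(0,+\infty)$ (integrable near $0$ because $I$ is, integrable at infinity by the exponential decay of $e^{-v}I(v)$). Hence
\begin{equation*}
\int_{\ampNor}^{+\infty}e^{-v}I(v)dv\;\xrightarrow[\ampNor\to 0]{}\;\int_{0}^{+\infty}e^{-v}I(v)dv\;=:\;L\;\in\;(0,+\infty).
\end{equation*}

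\smallskip

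\noindent\textbf{Step 3: numerator.} Since $I$ is locally integrable at $0$ and $e^{-v}\le 1$, the function $\ampNor\mapsto\int_0^{\ampNor/2}e^{-v}I(v)dv$ is the restriction to $[0,\ampNor/2]$ of an absolutely continuous integral and therefore vanishes as $\ampNor\to 0$:
\begin{equation*}
0\;\le\;\int_{0}^{\ampNor/2}e^{-v}I(v)dv\;\le\;\int_{0}^{\ampNor/2}I(v)dv\;\xrightarrow[\ampNor\to 0]{}\;0.
\end{equation*}

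\smallskip

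\noindent\textbf{Step 4: conclusion.} Combining Steps 2 and 3, the ratio inside the logarithm tends to $0/L = 0$, so the argument of the $\log$ tends to $1$ and, by continuity of $\log$ at $1$, the full expression tends to $0$, which is the claimed limit. The only delicate point is Step 1 for $\ratioSucTeach\le \tfrac12$, where $I(0)=+\infty$: one must verify that, despite the blow-up of $I$ at $0$, the product $e^{-v}I(v)$ remains integrable there; this is what makes $L$ finite and the argument go through for the entire range $\ratioSucTeach\in(0,1)$.
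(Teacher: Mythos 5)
Your proof is correct and follows essentially the same route as the paper's: the numerator tends to $0$ while the denominator stays bounded away from $0$ (for $\ampNor$ small), so the argument of the logarithm tends to $1$. The paper obtains both facts more economically from the single bound $I(v)\le \frac{\Gamma(\ratioSucTeach)}{2^{\ratioSucTeach}}v^{\ratioSucTeach-1}$ (its Lemma~\ref{lemma:maj_I}), which gives the explicit rate $\int_0^{\ampNor/2}e^{-v}I(v)dv = O(\ampNor^{\ratioSucTeach})$ and local integrability at $0$ in one stroke, without your Step~1 case analysis on $\ratioSucTeach$ versus $\tfrac12$ or the exact value of the limiting denominator.
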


Let us also give an upper bound of the probability that the noisy argmax is different from the true argmax.
\begin{proposition}
    \label{prop:maj_proba_false_argmax}
    Let $k^*$ be the class corresponding to the true argmax.
    
    If $\ratioSucTeach \in (\frac{1}{2}; 1)$,
    \begin{align*}
        \mathbb{P}[\RandMech(d;Q)\neq k^*] \leq \sum_{k\neq k^*} e^{-\ampNor \Delta_k} \left[\frac{1}{2} + \frac{(\ampNor \Delta_k)^{2\ratioSucTeach-1}}{\ratioSucTeach 2^{4\ratioSucTeach-2} \Gamma(\ratioSucTeach)^2}\right]
    \end{align*}
    where $\Delta_k \coloneqq n_{k^*} - n_k$ for any $k\in [K]$ and $\Gamma:\beta\in \mathbb{R}^*_+ \mapsto \int_0^{+\infty}t^{\beta-1}e^{-t}dt$ is the gamma function.
    
    If $\ratioSucTeach \in (0; \frac{1}{2}]$,
    \begin{align*}
        \mathbb{P}[\RandMech(d;Q)\neq k^*] \leq \sum_{k\neq k^*} e^{-\ampNor \Delta_k}\left[\frac{1}{2} + \frac{(\ampNor \Delta_k)^{\frac{\ratioSucTeach}{2}}}{\ratioSucTeach 2^{\frac{5}{2}\ratioSucTeach-1} \Gamma(\ratioSucTeach)^2} \times \left(\frac{3}{2}\ratioSucTeach\right)^{\frac{3}{2}\ratioSucTeach}\left(\frac{2}{\ratioSucTeach}-3\right)^{1-\frac{3}{2}\ratioSucTeach}\right].
    \end{align*}
\end{proposition}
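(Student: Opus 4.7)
\begin{proof_sketch}
My plan is to start with a union bound over the $K-1$ alternative classes, reducing the event $\{\RandMech(d;Q) \neq k^*\}$ to a disjunction of the pairwise events $\{n_k + Y_k > n_{k^*} + Y_{k^*}\}$, each equivalent to $\{Y_k - Y_{k^*} > \Delta_k\}$. All of the subsequent work then reduces to a sharp one-sided tail bound on $Y_k - Y_{k^*}$ for a single $k \neq k^*$, after which summing over $k$ yields the stated expression.

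To carry out that tail bound I would first identify the exact distribution of $Y_k - Y_{k^*}$. Because the effective noise on each coordinate (seen by the entity whose view is being analysed) is the sum of $\ratioSucTeach n$ independent Gamma$(1/n, 1/\ampNor)$-differences, the Gamma additivity recalled in Proposition~\ref{prop:infinite_divisibility} gives $Y_k \stackrel{d}{=} G_1 - G_2$ with $G_1, G_2 \sim \Gamma(\ratioSucTeach, 1/\ampNor)$ i.i.d., and regrouping the four independent Gammas appearing in $Y_k - Y_{k^*}$ shows $Y_k - Y_{k^*} \stackrel{d}{=} U - V$ with $U, V \sim \Gamma(2\ratioSucTeach, 1/\ampNor)$ i.i.d. A single convolution then produces, for $z > 0$, the density $f_{U-V}(z) = \frac{\ampNor e^{-\ampNor z}}{\Gamma(2\ratioSucTeach)^2} \tilde{I}(\ampNor z)$, where $\tilde{I}(v) := \int_0^{+\infty}(v+t)^{2\ratioSucTeach - 1} t^{2\ratioSucTeach - 1} e^{-2t}\, dt$. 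Integrating this from $\Delta_k$ to $+\infty$ immediately extracts the $e^{-\ampNor \Delta_k}$ prefactor of the stated bound, and the constant $1/2$ should then come out of the symmetry identity $\mathbb{P}[U - V > 0] = 1/2$ (equivalently $\int_0^{+\infty} e^{-v}\tilde{I}(v)\, dv = \Gamma(2\ratioSucTeach)^2/2$).

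The split into $\ratioSucTeach > 1/2$ versus $\ratioSucTeach \leq 1/2$ reflects the sign of the exponent $2\ratioSucTeach - 1$ inside $\tilde{I}$. For $\ratioSucTeach > 1/2$, I would use the concavity-based subadditivity $(a+b)^{2\ratioSucTeach-1} \leq a^{2\ratioSucTeach-1} + b^{2\ratioSucTeach-1}$ to split the integrand into a piece giving the $(\ampNor \Delta_k)^{2\ratioSucTeach-1}$ scaling and a residual controlled by $1/2$; converting the natural $\Gamma(2\ratioSucTeach)$ prefactor into the $\ratioSucTeach\, 2^{4\ratioSucTeach-2}\Gamma(\ratioSucTeach)^2$ form advertised in the statement is then a direct application of the Gauss--Legendre duplication formula $\Gamma(2\ratioSucTeach) = 2^{2\ratioSucTeach-1}\Gamma(\ratioSucTeach)\Gamma(\ratioSucTeach + 1/2)/\sqrt{\pi}$ together with a monotonicity check on $\ratioSucTeach \mapsto \Gamma(\ratioSucTeach + 1)/\Gamma(\ratioSucTeach + 1/2)$ on $(1/2, 1)$. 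For $\ratioSucTeach \leq 1/2$, however, $(v+t)^{2\ratioSucTeach-1}$ is now decreasing and subadditivity points the wrong way. In that regime I would introduce a cutoff parameter $\alpha \in (0, 1)$, split the $t$-integral over $\{t \leq \alpha v\}$ and $\{t > \alpha v\}$, bound $(v+t)^{2\ratioSucTeach-1}$ in each piece by its boundary value using the appropriate monotonicity, and then optimise over $\alpha$; the peculiar exponents $3\ratioSucTeach/2$ and $1 - 3\ratioSucTeach/2$ appearing in the stated prefactor and the power $(\ampNor \Delta_k)^{\ratioSucTeach/2}$ are precisely the signature of the critical point of this one-dimensional optimisation.

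I expect the main obstacle to be exactly that $\ratioSucTeach \leq 1/2$ case: matching the exact prefactor $(3\ratioSucTeach/2)^{3\ratioSucTeach/2}(2/\ratioSucTeach - 3)^{1 - 3\ratioSucTeach/2}$ requires threading the optimisation over $\alpha$ through the Gamma-function bookkeeping and simultaneously verifying that the constant term remains bounded by $1/2$ uniformly in $\ratioSucTeach$, whereas in the regime $\ratioSucTeach > 1/2$ the analysis reduces cleanly to subadditivity plus duplication identities.
\end{proof_sketch}
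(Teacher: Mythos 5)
Your route is genuinely different from the paper's and, in its first half, arguably cleaner. The paper never forms the law of the difference $Y_k - Y_{k^*}$: after the same union bound it writes $\mathbb{P}(Y_{k^*}\le Y_k-\Delta_k)=\int_{\mathbb{R}}\density(t)\cumulDensity(t-\Delta_k)\,dt$ with $\density,\cumulDensity$ the density and c.d.f.\ of a \emph{single} shape-$\ratioSucTeach$ generalised Laplace variable, splits this integral at $0$ and at $\Delta_k$, bounds the two outer pieces by $\tfrac18e^{-\ampNor\Delta_k}$ and $\tfrac38e^{-\ampNor\Delta_k}$, and controls the middle piece by H\"older's inequality with a free exponent $q$; the two regimes of the statement are exactly the two choices $q\to+\infty$ and $q=1/(1-\tfrac32\ratioSucTeach)$. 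Your reduction of $Y_k-Y_{k^*}$ to $U-V$ with $U,V\sim\Gamma(2\ratioSucTeach,1/\ampNor)$ is correct (the coordinate noises are independent, so the four Gammas regroup as you claim), and for $\ratioSucTeach>\tfrac12$ your subadditivity step goes through: it yields $e^{-\ampNor\Delta_k}\bigl[\tfrac12+(\ampNor\Delta_k)^{2\ratioSucTeach-1}/(2^{2\ratioSucTeach}\Gamma(2\ratioSucTeach))\bigr]$, and the duplication formula together with $\Gamma(\ratioSucTeach+1)/\Gamma(\ratioSucTeach+\tfrac12)\le 2/\sqrt{\pi}$ on $(\tfrac12,1]$ (that ratio is increasing and equals $2/\sqrt{\pi}$ at $\ratioSucTeach=1$) shows this is at most the stated bound, so you in fact prove something slightly sharper. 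For $\ratioSucTeach\le\tfrac12$, however, you have overlooked that your own reduction makes the problem trivial: the exponent $2\ratioSucTeach-1$ is then nonpositive, so $\tilde{I}$ is \emph{decreasing}, and the shift $z=\Delta_k+w$ immediately gives $\mathbb{P}[U-V\ge\Delta_k]\le e^{-\ampNor\Delta_k}\,\mathbb{P}[U-V\ge 0]=\tfrac12e^{-\ampNor\Delta_k}$, which already implies the stated bound since its extra term is nonnegative ($\tfrac2\ratioSucTeach-3\ge 0$ when $\ratioSucTeach\le\tfrac12$). The cutoff-and-optimise scheme you sketch for that regime is therefore unnecessary, and I would not expect it to reproduce the exponents $\tfrac32\ratioSucTeach$ and $1-\tfrac32\ratioSucTeach$ anyway: in the paper these are not the critical point of an optimisation but merely the footprint of the (explicitly ad hoc, ``for example'') choice $q=1/(1-\tfrac32\ratioSucTeach)$ in the H\"older step. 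In sum, your decomposition is sound and in both regimes delivers a bound that implies the proposition; just replace the $\ratioSucTeach\le\tfrac12$ plan by the one-line monotonicity argument.
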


\begin{proof_sketch}
    The event $(\RandMech(d;Q)\neq k^*)$ is the union of the events $(n_k+Y_k \geq n_{k^*} + Y_{k^*})$, for $k\in [K]\setminus \{k^*\}$, and thus $\mathbb{P}[\RandMech(d;Q)\neq k^*] \leq \sum_{k\neq k^*} \mathbb{P}(n_k+Y_k \geq n_{k^*} + Y_{k^*})$.
    We remark that, for any $k\in [K]\setminus \{k^*\}$,
    \begin{align*}
        &\mathbb{P}(n_k+Y_k \geq n_{k^*} + Y_{k^*}) = \mathbb{P}(Y_{k^*} \leq Y_k - \Delta_k)\\
        &\quad= \int_{-\infty}^0 \density(t) \cumulDensity(t-\Delta_k)dt + \int_0^{\Delta_k} \density(t) \cumulDensity(t-\Delta_k)dt + \int_{\Delta_k}^{+\infty} \density(t) \cumulDensity(t-\Delta_k)dt
    \end{align*}
    where $\density \colon u\in \mathbb{R}^* \mapsto \frac{\ampNor}{\Gamma(\ratioSucTeach)^2}e^{-\ampNor \abs{u}}I(\ampNor \abs{u})$ and $\cumulDensity \colon t\in \mathbb{R} \mapsto \int_{-\infty}^t \density(u)du$.
    
    We show that $\int_{\Delta_k}^{+\infty} \density(t) \cumulDensity(t-\Delta_k)dt \leq \frac{3}{8}e^{-\ampNor \Delta_k}$ and $\int_{-\infty}^0 \density(t) \cumulDensity(t-\Delta_k)dt \leq \frac{1}{8}e^{-\ampNor \Delta_k}$.
    Moreover, using Hölder's inequality, we show that, for all $q\in (\frac{1}{1-\ratioSucTeach}; +\infty)$, calling $p:=\frac{1}{1-\frac{1}{q}}$, $\int_0^{\Delta_k} \density(t) \cumulDensity(t-\Delta_k)dt \leq \frac{e^{-\ampNor \Delta_k}}{\ratioSucTeach 2^{4\ratioSucTeach-2+\frac{1}{q}} \Gamma(\ratioSucTeach)^2} \times \frac{(\ampNor \Delta_k)^{2\ratioSucTeach-1+\frac{1}{q}}}{p^{\frac{1}{p}}[q(1-\ratioSucTeach)-1]^{\frac{1}{q}}}$.
    For $\ratioSucTeach > \frac{1}{2}$, we take the particular (and classic) case of the limit of the previous bound when $q$ tends to $+\infty$.
    For $\ratioSucTeach \leq \frac{1}{2}$, we take $q = \frac{1}{1-\frac{3}{2}\ratioSucTeach}$.\\
\end{proof_sketch}

Theorem~\ref{th:dp_per_query} and Proposition~\ref{prop:maj_proba_false_argmax} serve as building blocks to which we apply the following theorem from~\cite{papernot2016semisupervised}.

\begin{theorem}[\cite{papernot2016semisupervised}]
    \label{th:AlgoMomentAccountance}
    Let $\epsilon, l\in \mathbb{R}_+^*$. Let $\RandMech$ be a $(\epsilon,0)$-differentially private mechanism and $q \geq \mathbb{P}[\RandMech(d)\neq k^*]$ for some outcome $k^*$. If $q < \frac{e^{\epsilon} -1 }{e^{2\epsilon} -1}$, then for any additional information $\aux$ and any pair $(d, d')$ of adjacent databases, $\RandMech$ satisfies
    \begin{align*}
        &\alpha_{\RandMech}(l;\aux,d,d') \leq \min\left[\epsilon l, \frac{\epsilon^2 l (l+1)}{2}, \log\left((1 -q)\left(\frac{1 -q}{1 -e^{\epsilon}q}\right)^l + qe^{\epsilon l}\right)\right].
    \end{align*}
\end{theorem}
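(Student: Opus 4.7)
\begin{proof_sketch}
Write $P_d(o) := \mathbb{P}[\RandMech(\aux,d)=o]$ and $P_{d'}(o) := \mathbb{P}[\RandMech(\aux,d')=o]$; by definition of the moments accountant,
\begin{equation*}
\alpha_\RandMech(l;\aux,d,d') = \log \sum_o \frac{P_d(o)^{l+1}}{P_{d'}(o)^l}.
\end{equation*}
The plan is to upper bound the inner sum three separate ways, each yielding one of the expressions inside the $\min$.

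The bound $\epsilon l$ is immediate: since $P_d(o)/P_{d'}(o)\leq e^\epsilon$ for every $o$ by $(\epsilon,0)$-DP, one has $P_d(o)^{l+1}/P_{d'}(o)^l \leq e^{\epsilon l}P_d(o)$ and summing gives $e^{\epsilon l}$.

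The bound $\epsilon^2 l(l+1)/2$ is the classical conversion of $(\epsilon,0)$-DP into Rényi divergence of order $l+1$. The standard proof centres the (bounded) privacy loss random variable and applies a Hoeffding-type estimate on its MGF; I would simply invoke this well-known fact.

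For the third and crucial bound, set $r := \mathbb{P}[\RandMech(d)\neq k^*]\in[0,q]$ and split the outcome space at $k^*$. For $o\neq k^*$, the same DP argument as above yields
\begin{equation*}
\sum_{o\neq k^*}\frac{P_d(o)^{l+1}}{P_{d'}(o)^l}\leq e^{\epsilon l}\sum_{o\neq k^*}P_d(o) = re^{\epsilon l}.
\end{equation*}
For $o = k^*$, use $P_d(k^*)=1-r$ together with DP applied to the complement event, which gives $\mathbb{P}[\RandMech(d')\neq k^*]\leq e^\epsilon r$ and hence $P_{d'}(k^*)\geq 1-e^\epsilon r$. The hypothesis $q<(e^\epsilon-1)/(e^{2\epsilon}-1)=1/(e^\epsilon+1)$ guarantees $1-e^\epsilon r>0$, so
\begin{equation*}
\frac{P_d(k^*)^{l+1}}{P_{d'}(k^*)^l}\leq \frac{(1-r)^{l+1}}{(1-e^\epsilon r)^l}.
\end{equation*}
The inner sum is thus at most $F(r) := (1-r)^{l+1}/(1-e^\epsilon r)^l + re^{\epsilon l}$, and it remains to show $F(r)\leq F(q)$ for every $r\in[0,q]$.

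This last monotonicity step is where the most care is required and, in my view, the main obstacle. A direct computation gives
\begin{equation*}
F'(r) = \frac{(1-r)^l}{(1-e^\epsilon r)^{l+1}}\Bigl[l(e^\epsilon-1)-1+e^\epsilon r\Bigr] + e^{\epsilon l},
\end{equation*}
and one must check $F'\geq 0$ on $[0,q]$. When the bracket is non-negative (for instance whenever $l(e^\epsilon-1)\geq 1$) this is clear; otherwise the positive term $e^{\epsilon l}$ must dominate, and the hypothesis $q<1/(e^\epsilon+1)$ keeps $r$ bounded away from $e^{-\epsilon}$ so that the fraction cannot blow up enough to overturn this balance. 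Once $F(r)\leq F(q)$ is in hand, taking the log produces the third term in the $\min$ and the theorem follows by assembling the three bounds.
\end{proof_sketch}
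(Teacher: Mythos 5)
This theorem is imported verbatim from \cite{papernot2016semisupervised}; the paper contains no proof of it, so there is no internal proof to compare against. Your sketch is in effect a reconstruction of the PATE argument and follows the same route as the original: the expansion of the moment generating function as $\sum_o P_d(o)^{l+1}/P_{d'}(o)^l$, the bound $\epsilon l$, the invocation of the standard pure-DP-to-R\'enyi conversion for $\tfrac{1}{2}\epsilon^2 l(l+1)$, and the split of the outcome space at $k^*$ leading to $F(r)=(1-r)^{l+1}(1-e^\epsilon r)^{-l}+re^{\epsilon l}$ with $r=\mathbb{P}[\RandMech(d)\neq k^*]\le q$ are all correct, as is the simplification $\frac{e^\epsilon-1}{e^{2\epsilon}-1}=\frac{1}{e^\epsilon+1}$ and your computation of $F'$.

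The one step you leave open, the monotonicity $F(r)\le F(q)$ on $[0,q]$, is indeed the crux, but the heuristic you give for it is not the actual mechanism: keeping $r$ bounded away from $e^{-\epsilon}$ only prevents the denominator from vanishing and does not by itself control the size of the fraction. The correct use of the hypothesis is that $r\le q<\frac{e^\epsilon-1}{e^{2\epsilon}-1}$ is \emph{equivalent} to $\frac{1-r}{1-e^\epsilon r}\le e^\epsilon$ (clear the denominators). With this in hand the case analysis closes: if the bracket $B:=l(e^\epsilon-1)-1+e^\epsilon r$ is nonnegative then $F'(r)\ge e^{\epsilon l}>0$; if $B<0$ then, since $l(e^\epsilon-1)\ge 0$, one has $-B\le 1-e^\epsilon r$, whence
\begin{equation*}
\frac{(1-r)^l}{(1-e^\epsilon r)^{l+1}}\,(-B)\;=\;\left(\frac{1-r}{1-e^\epsilon r}\right)^{l}\cdot\frac{-B}{1-e^\epsilon r}\;\le\;e^{\epsilon l}\cdot\frac{1-e^\epsilon r}{1-e^\epsilon r}\;=\;e^{\epsilon l},
\end{equation*}
so $F'(r)\ge 0$ in both cases and $F(r)\le F(q)$ follows. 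This is exactly where the otherwise odd-looking threshold $\frac{e^\epsilon-1}{e^{2\epsilon}-1}$ enters; with that substitution your sketch becomes a complete proof.
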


As in~\cite{papernot2016semisupervised}, Theorem~\ref{th:AlgoMomentAccountance} coupled with some properties of the moments accountant (composability and tail bound) allows one to devise the overall privacy budget $(\epsilon, \delta) $ for the learning procedure (see Section~\ref{sec:expe} for numerical results). We refer the interested reader to Section A of the appendix for more details and for the extended proofs of our claims.

\paragraph{Influence of the cryptographic layer.}One must be aware that the cryptographic layer perturbates the noisy votes because the computation of the homomorphic argmax has a small probability of error. Although this topic deserves further investigations, we make the assumption that these perturbations are negligible and that they do not change the privacy guarantees as they basically constitute an additional noise on the votes. We further discuss this point in Appendix A.3.

\section{Experimental results}
\label{sec:expe}
The experiments presented below enable us to validate the accuracy of our framework on well-known image classification tasks and illustrate the practicality of our method in terms of performance, since the computational overhead due to the homomorphic layer remains reasonable.
The source codes necessary to run the following experiments are available on \url{https://github.com/Arnaud-GS/SPEED}.

\paragraph{HE time overhead.} We implemented the homomorphic argmax computation presented in Section \ref{sec:homomorphic_aggregation}. Without parallelizing, a single argmax query requires just under $4$ seconds to compute on an Intel Core i7-6600U CPU. Importantly, this does not depend on the input data. The costliest operation is the computation of $\theta$. Any other part of the scheme is negligible in comparison. Therefore, once the parameters are set, the time performance depends solely on the number of classes (the number of bootstrap comparisons is quadratic in the number of classes). As such, $100$ queries require $6.5$ minutes and $1000$ queries $65$ minutes. Of course, the queries can be performed in parallel to decrease the latency allowing for much more challenging applications.

\paragraph{Homomorphic argmax accuracy.} As we mention in Section~\ref{sec:homomorphic_aggregation}, the homomorphic computation of the argmax is inherently probabilistic. This is due both to the noise added to any ciphertext at encryption time, and to limitations of the bootstrapping operation in terms of accuracy. On MNIST dataset~\cite{lecun1998mnist}, we evaluate the method with $\ratioSucTeach=1/0.9/0.7$ and compare the cleartext argmax to our homomorphic argmax. Our implementation of the HE argmax has an average accuracy of $99.4\%$, meaning that it retrieves the cleartext argmax $99.4 \%$ of the time.
\\
To obtain a more general and conservative measure of the inherent accuracy  of the HE argmax (which can be applied on any dataset), we make the teachers give uniformly random answers to the queries. In this setting, most counts $n_k$ are likely to be close to one another, which makes even a classical argmax useless. This kind of scenario can be seen as worst-case, since the teacher voting is adversarial to argmax computation. Even in this scenario, and with the same parameters as for MNIST, our implementation of the HE argmax algorithm still produces an average accuracy of $90 \%$. Hence, an accuracy of $90 \%$ can be considered a lower bound for any adaptation of this argmax technique to other datasets. Yet in practice a tweaking of the parameters can yield a better accuracy even for this worst-case scenario, at the cost of time efficiency.

\paragraph{Learning setup.} To evaluate the performances of our framework, we test our method on MNIST~\cite{lecun1998mnist} and SVHN~\cite{netzer2011reading} datasets. To represent the data holders, we divide the training set in $250$ equally distributed and disjoint subsets, keeping the test set for learning and evaluation of the student model. Then we apply the following procedures. We refer the interested reader to Section C of the appendix for more details on the hyper-parameters and learning procedure.

\begin{itemize}
\item \emph{Teacher models.} For MNIST, given a dataset, a data holder builds a local model by stacking two convolutional layers with max pooling and a fully connected layer with ReLu activations. Two additional layers have been added for SVHN.
\item \emph{Student model.} Following the idea from~\cite{papernot2016semisupervised}, we train the student in a semi-supervised fashion. Unlabelled inputs are used to estimate a good prior distribution using a GAN-based technique first introduced in~\cite{salimans2016improved}. Then we use a limited amount of queries ($100$ for MNIST, $500$ for SVHN) to obtain labelled examples which we use to fine tune the model.
\end{itemize}

For MNIST experiments, as the student model can substantially vary based on the selected subset of labelled examples, the out-of-sample accuracy has been evaluated $15$ times, with $100$ labelled examples sampled from a set of $9000$ ones. For each experiment, the remaining $1000$ examples have been used to evaluate the student model accuracy.
For SVHN, the computations being much more heavy, the out-of-sample accuracy has been evaluated $3$ times, with $500$ examples sampled from a set of $10000$ ones. We used $16032$ examples to test the student model accuracy.

\paragraph{Performances on MNIST.} Table~\ref{table:mnist} displays our experimental results for SPEED with MNIST and compares them to a non-private baseline (without DP or HE) and to the framework that we call Trusted which assumes that the server is trusted and thus only involves DP and not HE.
Trusted can be considered as PATE framework from~\cite{papernot2016semisupervised} with some subtle differences: the noise is generated in a distributed way in Trusted and the notion of adjacency is larger. Even if the inverse noise scale $\gamma$ we use is greater than the one in~\cite{papernot2016semisupervised} ($0.1$ instead of $0.05$), which should lead to a worse DP guarantee, an argmax-specific analysis of the privacy cost per query allowed us to provide a better DP guarantee ($\epsilon = 1.41$ instead of $\epsilon = 2.04$ with $\delta = 10^{-5}$ and $100$ queries).
To be more conservative in terms of accuracy, the experiments were run considering that the colluding teachers did not generate any noise, which does not change anything in terms of DP. That is why, in spite of the variability of the accuracy, we observe a tradeoff between accuracy and DP. Indeed, even if the reported average accuracy does not vary much across conditions, consistent rankings of the methods have been observed, confirming the expected average rank of the method based on the amount of added noise. As expected, the best DP guarantee ($\epsilon=1.41$) is obtained when all the teachers generated noise ($\ratioSucTeach=1$), but this is the case where the accuracy is the lowest. On the contrary, when some teachers failed to generate noise ($\ratioSucTeach=0.9$ and $\ratioSucTeach=0.7$), the counts are more precise, leading to a slightly better accuracy but worse DP guarantees. It should also be noted that the variance is high in each condition. It masks the fact that the distribution is highly skewed, with a majority of results in the $97.5\%-98.5\%$ range, and a few samplings yielding an out-of-sample accuracy around $90\%$.

\begin{table}[!htb]
    \centering
    \caption{Results for MNIST dataset with 250 teachers and 100 student queries. We used an inverse noise scale $\gamma=0.1$. The DP guarantees, computed by composability with the moments accountant method over the 100 queries, are given for $\delta=10^{-5}$.}
    \begin{tabular}{lccc}
        \hline\noalign{\smallskip}
        Framework & $\epsilon$ & Acc. ($\pm$ std)  [\%]  & HE overhead \\
        \noalign{\smallskip}\hline\noalign{\smallskip}
        Non-private      & -      & $96.22$ ($\pm 2.27$) & - \\
        Trusted          & $1.41$ & $95.95$ ($\pm 2.97$) & - \\
        \noalign{\smallskip}\hline\noalign{\smallskip}
        $\ratioSucTeach=1$   & $1.41$ & $95.91$ ($\pm 2.57$) & \multirow{3}{*}{$6.5 \; $min} \\
        $\ratioSucTeach=0.9$ & $1.66$ & $96.02$ ($\pm 2.92$) &  \\
        $\ratioSucTeach=0.7$ & $2.37$ & $96.06$ ($\pm 2.61$) &  \\
        \noalign{\smallskip}\hline
    \end{tabular}
    \label{table:mnist}
\end{table}

Figure~\ref{fig:mnist_dp_gamma} shows the evolution of our DP guarantee as a function of $\ampNor$, with $\ratioSucTeach=0.9$ fixed. Note that the privacy cost decreases for $\ampNor \geq 2$ which may seem counterintuitive but the reason is thoroughly explained in Section A.4 of the appendix. Anyway, we observed empirically that the privacy cost has a finite limit in $+\infty$ (approximately $2.87$) and remains greater than this limit for any $\ampNor \geq 2$. The asymptote is shown by a dashed line on Figure~\ref{fig:mnist_dp_gamma}.

Figure~\ref{fig:mnist_dp_tau} shows the evolution of the DP guarantee as a function of $\ratioSucTeach$, with $\ampNor=0.1$ fixed. As explained before, the greater $\ratioSucTeach$, the better the DP guarantee.

\begin{figure}[!tbp]
  \centering
  \begin{minipage}[b]{0.45\textwidth}
    \includegraphics[width=\textwidth]{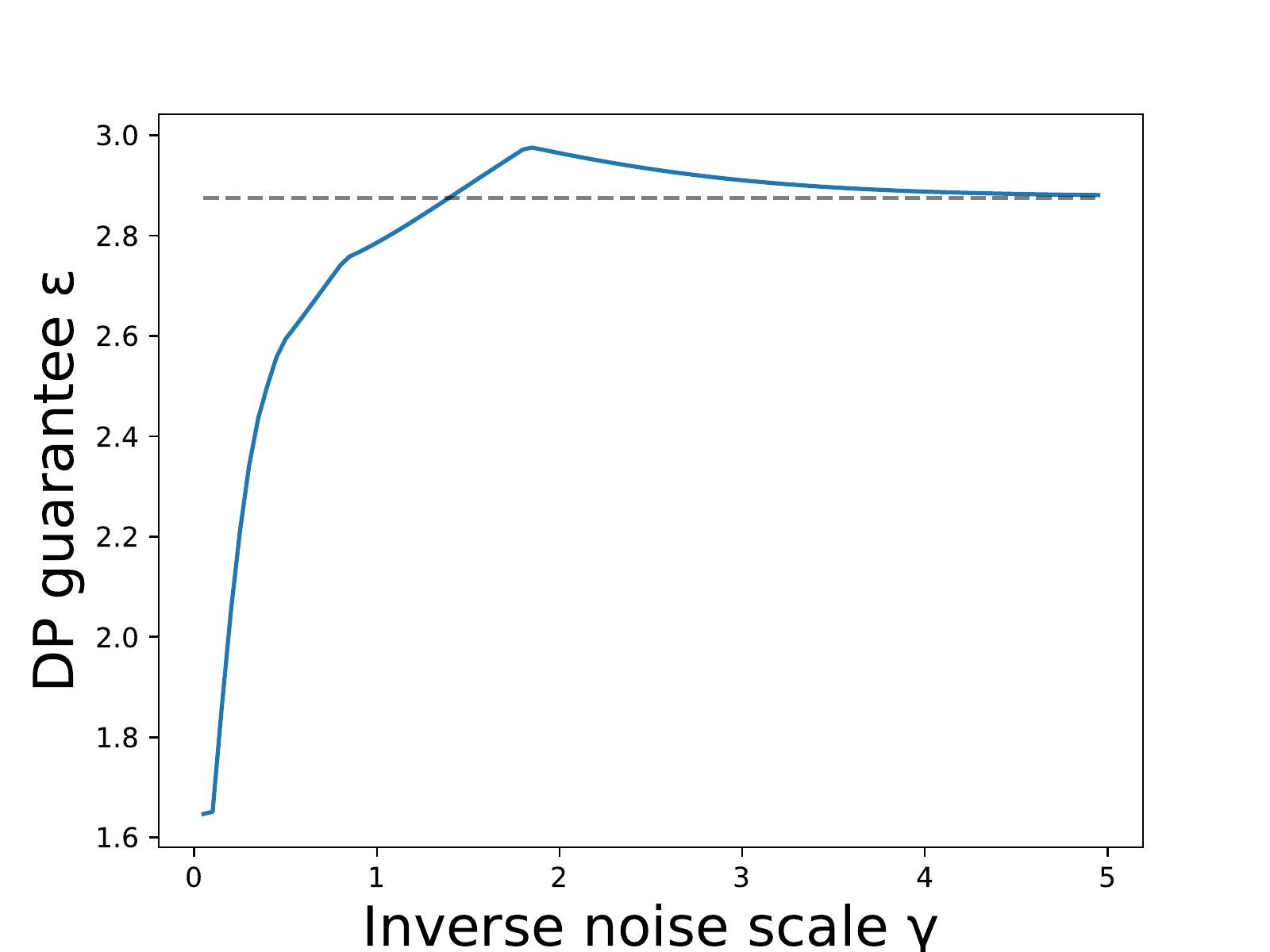}
    \caption{Differential privacy guarantees for MNIST as a function of $\ampNor$, with $\ratioSucTeach=0.9$}
    \label{fig:mnist_dp_gamma}
  \end{minipage}
  \hfill
  \begin{minipage}[b]{0.45\textwidth}
    \includegraphics[width=\textwidth]{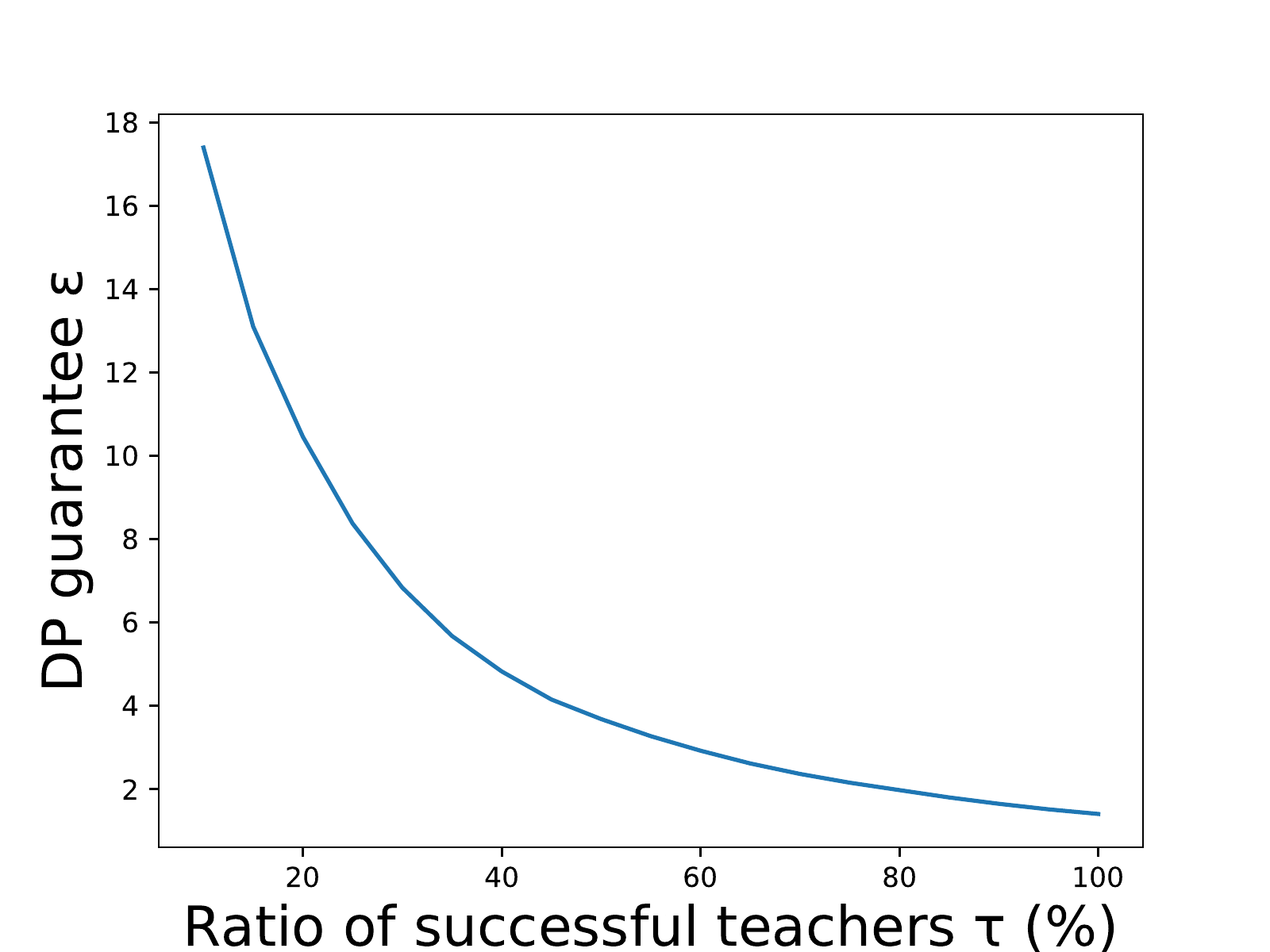}
    \caption{Differential privacy guarantees for MNIST as a function of $\ratioSucTeach$, with $\ampNor=0.1$}
    \label{fig:mnist_dp_tau}
  \end{minipage}
\end{figure}

\paragraph{Performances on SVHN.}
Table~\ref{table:svhn} presents our experimental results on SVHN dataset~\footnote{Note that our DP guarantee $\epsilon$ for Trusted cannot be directly compared with PATE's one since we do not use the same $\delta$.}. The variance on the accuracy is much smaller than for MNIST dataset because the test set is constituted of $16032$ samples. Similarly to the MNIST experiment, the accuracy and the privacy cost increase when less noise is applied because less teachers noised their votes (i.e. when $\ratioSucTeach$ is small).
The DP guarantees are not as good as for MNIST, this is due to the high amount of queries ($500$) necessary to obtain a good accuracy because the learning task is more complex.

\begin{table}[ht]
  \centering
  \caption{SVHN experimental results for $500$ queries, with noise inverse scale $\gamma = 0.1$, $\delta = 10^{-5}$}
  \begin{tabular}{lccc}
        \hline\noalign{\smallskip}
        Framework & $\epsilon$ & Acc. [\%]  & HE overhead \\
        \noalign{\smallskip}\hline\noalign{\smallskip}
        Non-private      & -      & $84.7$ & - \\
        Trusted          & $4.73$ & $83.7$ & - \\
        \noalign{\smallskip}\hline\noalign{\smallskip}
        $\ratioSucTeach=1$   & $4.73$ & $83.5$ & \multirow{4}{*}{$32.5 \; $min} \\
        $\ratioSucTeach=0.9$ & $5.59$ & $83.8$ &  \\
        $\ratioSucTeach=0.7$ & $8.16$ & $84.6$ &  \\
        \noalign{\smallskip}\hline
    \end{tabular}
    \label{table:svhn}
\end{table}

\section{Conclusion and open questions for further works}
\label{sec:conclusion}
Our framework allows a group of agents to collaborate and put together their sensitive knowledge while protecting it via two complementary technologies - differential privacy and homomorphic encryption - against any entity contributing to the learning or having access to the final model. Crucially, our experiments showed that our method is practical for deep learning applications, combining high accuracy, mild computational overhead and privacy guarantees adapting to the number of malicious teachers.

An interesting further work could investigate the fault tolerance of the privacy guarantees with other noises (e.g. Gaussian noise) or other infinite divisions (Laplace distribution can also be infinitely divided using individual Gaussian noises or individual Laplace noises~\cite{goryczka2013secure}).
A more ambitious direction towards collaborative deep learning with privacy would be to design new aggregation operators, more suitable to FHE performances yet still providing good DP bounds. In particular, a linear or quadratic aggregation operator would be amenable to almost negligible homomorphic computations overhead. This lighter homomorphic layer would enable to extend the applicability of our framework to more complex datasets. Such aggregation operators would also allow to associate homomorphic calculations with verifiable computing techniques (e.g.~\cite{fiore2014efficiently}) whereby the server would provide an encrypted aggregation result along with a formal proof that aggregation was indeed done correctly. These perspectives would then allow to address threats beyond the honest-but-curious model.

\section{Declarations}
\subsection{Funding}
The experiments were performed using HPC resources of FactoryIA partially funded by Ile-de-France French region – project SESAME 2017.

\subsection{Conflicts of interest/Competing interests}
Not applicable.

\subsection{Availability of data and material}
The MNIST~\cite{lecun1998mnist} and SVHN~\cite{netzer2011reading} datasets can be found respectively at \url{http://yann.lecun.com/exdb/mnist/} and \url{http://ufldl.stanford.edu/housenumbers/}.

\subsection{Code availability}
The source codes used to run the experiments and compute the DP guarantees can be accessed on \url{https://github.com/Arnaud-GS/SPEED}.

\bibliographystyle{unsrt}

\bibliography{biblio.bib}

\newpage

\appendix
\appendixpage

\setcounter{theorem}{3}
\setcounter{proposition}{4}
\setcounter{definition}{5}

\section{DP analysis of the learning procedure}
In this section, we describe the procedure that computes the overall DP guarantees of the student model learning stage. We summarise this procedure in Section~\ref{sec:algo_dp}, and demonstrate the theorems we use in Sections~\ref{sec:dp_per_query} and~\ref{sec:upperbound_proba_q}.

We call $\RandMech$ the aggregation mechanism that outputs the argmax of the noisy counts. $\RandMech(d, Q)$ is the output of $\RandMech$ for the database $d$ and the query $Q$.

Let $\ampNor\in \mathbb{R}^*_+$ be the inverse scale parameter of the distributed noise. Considering the DP guarantees from the point of view of an entity $\mathcal{E}$, let $\ratioSucTeach\in (0, 1)$ be the ratio of the teachers whose noise is ignored by $\mathcal{E}$. Typically, from the point of view of a colluding teacher, $\ratioSucTeach$ is the ratio of the teachers who do not collude.

\subsection{Analysis algorithm}
\label{sec:algo_dp}
Let us suppose that for every query $Q$ from the student model, we have a privacy guarantee using Theorem~\ref{th:dp_per_query} and that we can upperbound the probability $\mathbb{P}[\RandMech(d;Q)\neq k^*]$ that $\RandMech$ outputs some specific output $k^*$ (in practice we choose $k^*$ to be the unnoisy argmax).
Then, Theorem~\ref{th:AlgoMomentAccountance} gives us an upperbound on the moments accountant per query\footnote{Note that only the third value over which the minimum is taken in Theorem~\ref{th:AlgoMomentAccountance} is data-dependent and, as such, requires this upperbound of $\mathbb{P}[\RandMech(d;Q)\neq k^*]$.}. The computation of these building blocks is detailed in Sections~\ref{sec:dp_per_query} and \ref{sec:upperbound_proba_q}, and the procedure is summarised in Algorithm~\ref{algo:dp_guarantee}.

Let us recall the definition of the moments accountant.
\begin{customdef}{5}
The \emph{moments accountant} of a mechanism $\mathcal{M}$ is defined for any $l\in \mathbb{R_+^*}$ as
\begin{equation*}
    \alpha_{\mathcal{M}}(l):=\max_{\aux,d,d'}\alpha_{\mathcal{M}}(l;\aux,d,d') 
\end{equation*}
where the maximum is taken over any auxiliary input $\aux$ and any pair of adjacent databases $d$, $d'$ and $\alpha_{\mathcal{M}}(l;\aux,d,d'):=\log \left(\mathbb{E}\left[\exp(l C(\mathcal{M},\aux,d,d'))\right]\right)$ is the moment generating function of the privacy loss random variable.
\end{customdef}

\begin{customthm}{3}[\cite{papernot2016semisupervised}]
    Let $\epsilon, l\in \mathbb{R}_+^*$. Let $\mathcal{M}$ be a $(\epsilon,0)$-differentially private mechanism and $q \geq \mathbb{P}[\mathcal{M}(d)\neq k^*]$ for some outcome $k^*$. If $q < \frac{e^{\epsilon} -1 }{e^{2\epsilon} -1}$, then for any $\aux$ and any pair $d$, $d'$ of adjacent databases, $\mathcal{M}$ satisfies
    \begin{align*}
        &\alpha_{\mathcal{M}}(l;\aux,d,d') \leq \min\left(\epsilon l, \frac{\epsilon^2 l (l+1)}{2}, \log\left((1 -q)\left(\frac{1 -q}{1 -e^{\epsilon}q}\right)^l + qe^{\epsilon l}\right)\right).
    \end{align*}
\end{customthm}

Using the moments accountant per query, we evaluate the overall moments accountant by composability, applying the following theorem from~\cite{abadi2016deep}.

\begin{theorem}[\cite{abadi2016deep}]
     \label{theorem:abadi1}
      Let $p\in \mathbb{N}^*$. Let us consider a mechanism $\mathcal{M}$ defined on a set $\mathcal{D}$  that consists of a sequence of adaptive mechanisms $\mathcal{M}_1, \dots, \mathcal{M}_p$ where, for any $i\in [p]$, $\mathcal{M}_i\colon \prod_{j=1}^{i-1}\mathcal{R}_j\times \mathcal{D} \mapsto \mathcal{R}_i$. Then, for any $l\in \mathbb{R_+^*}$,
    \begin{align*}
        \alpha_{\mathcal{M}}(l) \leq \sum_{i=1}^p \alpha_{\mathcal{M}_i}(l).
    \end{align*}
\end{theorem}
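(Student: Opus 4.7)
The plan is to reduce the composition of adaptive mechanisms to a product of conditional moment generating functions of per-step privacy losses, and then bound each factor by $e^{\alpha_{\mathcal{M}_i}(l)}$ using the tower property of conditional expectation. First, I fix an auxiliary input $\aux$ and a pair of adjacent databases $d,d'$, and view any output of $\mathcal{M}$ as a tuple $o=(o_1,\ldots,o_p)$ in which $o_i$ is sampled from $\mathcal{M}_i(o_{<i},\aux,d)$ conditionally on $o_{<i}:=(o_1,\ldots,o_{i-1})$ (and analogously under $d'$). This adaptive structure is the crux of the argument.

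The first step is to decompose the privacy loss. By the chain rule applied to the joint distribution of the outputs,
\[
\mathbb{P}[\mathcal{M}(\aux,d)=o]=\prod_{i=1}^{p}\mathbb{P}[\mathcal{M}_i(o_{<i},\aux,d)=o_i],
\]
and similarly under $d'$. Taking the logarithm of the ratio yields the telescoping identity
\[
C(\mathcal{M},\aux,d,d')=\sum_{i=1}^{p}c\bigl(o_i;\mathcal{M}_i,(o_{<i},\aux),d,d'\bigr),
\]
where the $i$-th summand is the per-step privacy loss of $\mathcal{M}_i$ after absorbing $o_{<i}$ into its auxiliary input.

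The second step exploits the product structure of $\exp(lC(\mathcal{M},\aux,d,d'))$. Writing the expectation with respect to $o\sim\mathcal{M}(\aux,d)$ as iterated expectations over $o_1,\ldots,o_p$ and conditioning on $o_{<p}$, I get
\[
\mathbb{E}\bigl[\exp(lC(\mathcal{M},\aux,d,d'))\bigr]=\mathbb{E}_{o_{<p}}\!\left[\exp\!\Bigl(l\sum_{i<p}c_i\Bigr)\cdot\mathbb{E}_{o_p\mid o_{<p}}\!\bigl[\exp(lc_p)\bigr]\right].
\]
The innermost conditional expectation is, by definition, $\exp\bigl(\alpha_{\mathcal{M}_p}(l;(o_{<p},\aux),d,d')\bigr)$, which is at most $\exp(\alpha_{\mathcal{M}_p}(l))$ because $\alpha_{\mathcal{M}_p}(l)$ is a supremum over all auxiliary inputs and adjacent pairs. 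Pulling this deterministic bound out of $\mathbb{E}_{o_{<p}}$ and iterating the peeling argument over $o_{p-1},\ldots,o_1$ in the same way yields
\[
\mathbb{E}\bigl[\exp(lC(\mathcal{M},\aux,d,d'))\bigr]\leq\prod_{i=1}^{p}\exp\bigl(\alpha_{\mathcal{M}_i}(l)\bigr)=\exp\!\Bigl(\sum_{i=1}^{p}\alpha_{\mathcal{M}_i}(l)\Bigr).
\]
Taking logarithms and then the supremum over $\aux,d,d'$ on the left (the right-hand side is already independent of these) gives the claim.

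The only delicate point, and what I expect to be the main obstacle, is verifying that the definition of $\alpha_{\mathcal{M}_i}$ legitimately absorbs the previously sampled outputs $o_{<i}$ into the auxiliary input. This is exactly why the moments accountant is defined with a supremum over all $\aux$: whatever random value $o_{<i}$ takes, the bound $\alpha_{\mathcal{M}_i}(l;(o_{<i},\aux),d,d')\leq\alpha_{\mathcal{M}_i}(l)$ holds pointwise and therefore survives integration against the marginal distribution of $o_{<i}$. Once this is nailed down, the iterated-expectation peeling is routine.
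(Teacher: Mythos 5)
Your proof is correct and reproduces the standard argument for this result: the paper itself gives no proof of Theorem~\ref{theorem:abadi1} but simply imports it from~\cite{abadi2016deep}, and your chain-rule decomposition of the privacy loss followed by the tower-property peeling (with the previously sampled outputs absorbed into the auxiliary input over which the moments accountant takes its supremum) is precisely the proof given in that reference. No gaps.
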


Finally, parameter $\delta$ being chosen, the privacy guarantee is derived from the overall moments accountant applying the tail bound property, stated in Theorem~\ref{th:tail_bound} from~\cite{abadi2016deep}.

\begin{theorem}[\cite{abadi2016deep}]
    \label{th:tail_bound}
    For any $\epsilon \in \mathbb{R}_+^*$, the mechanism $\mathcal{M}$ is $(\epsilon, \delta)$-differentially private for
    \begin{align*}
        \delta = \min_{l\in \mathbb{N}^*} \exp(\alpha_{\mathcal{M}}(l) - l\epsilon).
    \end{align*}
\end{theorem}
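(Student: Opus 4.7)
The plan is to follow the standard tail-bound reduction: convert the bound on the moment generating function of the privacy loss random variable into a probabilistic tail bound via Markov's inequality, then decompose the output space into a "typical" region on which the pointwise $e^\epsilon$-ratio already holds and an "atypical" region whose mass is absorbed into $\delta$. Concretely, I would fix arbitrary adjacent databases $d, d'$, an arbitrary auxiliary input $\aux$, and an arbitrary integer $l \in \mathbb{N}^*$, and reduce to showing that for every measurable $S \subset \mathcal{R}$,
$$\mathbb{P}[\mathcal{M}(d) \in S] \leq e^\epsilon \, \mathbb{P}[\mathcal{M}(d') \in S] + \exp\bigl(\alpha_{\mathcal{M}}(l;\aux,d,d') - l\epsilon\bigr);$$
one can then upper-bound $\alpha_{\mathcal{M}}(l;\aux,d,d')$ by the moments accountant $\alpha_{\mathcal{M}}(l)$ and minimize over $l \in \mathbb{N}^*$ to recover the stated $\delta$.

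The key construction I would introduce is the "bad" set $B := \{o \in \mathcal{R} : c(o;\mathcal{M},\aux,d,d') > \epsilon\}$ of outcomes whose privacy loss exceeds $\epsilon$, together with the decomposition $\mathbb{P}[\mathcal{M}(d) \in S] = \mathbb{P}[\mathcal{M}(d) \in S \setminus B] + \mathbb{P}[\mathcal{M}(d) \in S \cap B]$. On $S \setminus B$ the very definition of the privacy loss gives the pointwise inequality $\mathbb{P}[\mathcal{M}(d) = o] \leq e^\epsilon \mathbb{P}[\mathcal{M}(d') = o]$, and summing (or integrating against a common dominating measure when $\mathcal{R}$ is continuous) yields directly $\mathbb{P}[\mathcal{M}(d) \in S \setminus B] \leq e^\epsilon \mathbb{P}[\mathcal{M}(d') \in S]$.

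For the residual "atypical" term I would apply Markov's inequality to the nonnegative random variable $\exp\bigl(l\, C(\mathcal{M},\aux,d,d')\bigr)$. Since $\{\mathcal{M}(d) \in B\} = \{C > \epsilon\} = \{\exp(lC) > \exp(l\epsilon)\}$, Markov yields
$$\mathbb{P}[\mathcal{M}(d) \in B] \leq \frac{\mathbb{E}[\exp(l C)]}{\exp(l\epsilon)} = \exp\bigl(\alpha_{\mathcal{M}}(l;\aux,d,d') - l\epsilon\bigr),$$
by definition of $\alpha_{\mathcal{M}}$ as the log-MGF of the privacy loss. Summing the two contributions, passing to the supremum over $(\aux,d,d')$ in the MGF exponent, and finally taking the minimum over $l \in \mathbb{N}^*$, closes the argument. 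I do not anticipate any genuine obstacle: this is the classical Chernoff-style conversion of~\cite{abadi2016deep}. The only points requiring minor care are keeping the orientation of the privacy loss straight (so that the pointwise inequality on $B^c$ controls $\mathcal{M}(d)$ by $\mathcal{M}(d')$ and not the converse), and noting that the outer minimum over $l$ is a free pointwise operation because the quantity $\mathbb{P}[\mathcal{M}(d)\in S] - e^\epsilon \mathbb{P}[\mathcal{M}(d')\in S]$ is independent of $l$.
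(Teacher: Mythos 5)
Your argument is correct and is precisely the standard Chernoff/Markov tail-bound proof of this result: the paper itself states this theorem as an imported result from~\cite{abadi2016deep} without reproducing a proof, so there is nothing in the paper to diverge from. The decomposition over the bad set $B=\{o: c(o;\mathcal{M},\aux,d,d')>\epsilon\}$, the pointwise $e^\epsilon$ bound on $B^c$, Markov applied to $\exp(lC)$, and the final maximization over $(\aux,d,d')$ followed by minimization over $l$ are all sound and complete.
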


\begin{algorithm}[H]
\label{algo:dp_guarantee}
\SetAlgoLined
\SetKwInOut{KwIn}{Input}
\SetKwInOut{KwOut}{Output}
\KwIn{number of teachers $n$, number of classes $K$, ratio $\tau$ of teachers with secret noise, set of queries $\mathcal{Q}$, unnoisy teachers' counts $n_k$, inverse noise scale $\gamma$, $l_{max}$~\footnote{To determine the DP guarantees presented in the paper, we took $l_{max}=25$ because it seems empirically that it captures the best moments accountant in every case.}, $\delta$}
\KwOut{$\epsilon$}
\For{$l$ in $[l_{max}]$}{
 $\alpha(l) \leftarrow 0$ \\
 \For{query $Q$ in $\mathcal{Q}$}{
  Compute the privacy cost of $Q$ and an upperbound of $\mathbb{P}[\RandMech(d;Q)\neq k^*]$\;
  Derive the moments accountant $\alpha_Q(l)$ with Theorem~\ref{th:AlgoMomentAccountance}\;
  $\alpha(l) \leftarrow \alpha(l) + \alpha_Q(l)$\;
  }
  $\epsilon(l) \leftarrow \frac{\alpha(l) - \delta}{l}$\;
 }
 $\epsilon \leftarrow \min_{l\in [l_{max}]} \epsilon(l)$\;
 \caption{Algorithm to determine the overall privacy guarantee of the learning procedure}
\end{algorithm}

\subsection{DP guarantee per query in the BHBC framework}
\label{sec:dp_per_query}
\paragraph{Preliminaries on the generalised Laplace distribution.}
For every teacher $j$ who did send noise and whose noise is secret, the noise sent by $j$ is distributed as $G_1^{(j)} - G_2^{(j)}$ where $G_1^{(j)}$ and $G_2^{(j)}$ are two i.i.d. random variables with gamma density $u \mapsto \frac{1}{\left(\frac{1}{\ampNor}\right)^{\frac{1}{n}}\Gamma\left(\frac{1}{n}\right)}u^{\frac{1}{n}-1}e^{-\ampNor u}$ and characteristic function $t\mapsto \left(\frac{1}{1-i\frac{t}{\ampNor}}\right)^{\frac{1}{n}}$ (see~\cite{kotz2001laplace}). Hence, the characteristic function of $G_1^{(j)} - G_2^{(j)}$ is $\psi\colon t\mapsto \left(\frac{1}{1+\left(\frac{t}{\ampNor}\right)^2}\right)^{\frac{1}{n}}$.
By summing over all the teachers who did send a secret noise, we get a total noise whose characteristic function is $\psi^{\ratioSucTeach n}\colon t\mapsto \left(\frac{1}{1+\left(\frac{t}{\ampNor}\right)^2}\right)^{\ratioSucTeach}$. The corresponding moment generating function is $t\mapsto \left(\frac{1}{1-\left(\frac{t}{\ampNor}\right)^2}\right)^{\ratioSucTeach}$. According to~\cite{mathai1993noncentral}, this is the moment generating function of a generalised Laplace distribution whose density is
\begin{align*}
    \density_{\ampNor, \ratioSucTeach} \colon u\in \mathbb{R}^* \mapsto 
         \begin{cases}
           \frac{1}{\left(\frac{1}{\ampNor}\right)^{2\ratioSucTeach}\Gamma\left(\ratioSucTeach\right)^2}e^{\ampNor u}\int_u^{+\infty}t^{\ratioSucTeach-1}\left(t-u\right)^{\ratioSucTeach-1}e^{-2\ampNor t}dt &\quad \text{if} \quad u > 0\\
           \frac{1}{\left(\frac{1}{\ampNor}\right)^{2\ratioSucTeach}\Gamma\left(\ratioSucTeach\right)^2}e^{\ampNor u}\int_0^{+\infty}t^{\ratioSucTeach-1}\left(t-u\right)^{\ratioSucTeach-1}e^{-2\ampNor t}dt &\quad \text{if} \quad u < 0\\
         \end{cases}
\end{align*}
which is actually
\begin{align*}
    u\in \mathbb{R}^* \mapsto &\frac{1}{\left(\frac{1}{\ampNor}\right)^{2\ratioSucTeach}\Gamma\left(\ratioSucTeach\right)^2}e^{\ampNor \abs{u}}\int_{\abs{u}}^{+\infty}t^{\ratioSucTeach-1}\left(t-\abs{u}\right)^{\ratioSucTeach-1}e^{-2\ampNor t}dt\\
    &= \frac{\ampNor^{2\ratioSucTeach-1}}{\Gamma\left(\ratioSucTeach\right)^2}e^{\ampNor \abs{u}}\int_0^{+\infty}\left(\frac{v}{\ampNor}+\abs{u}\right)^{\ratioSucTeach-1}\left(\frac{v}{\ampNor}\right)^{\ratioSucTeach-1}e^{-2(v+\ampNor\abs{u})}dv\\
    &\qquad \qquad \text{(by the substitution $v=\ampNor(t-\abs{u})$)}\\
    &= \normalInt_{\ampNor, \ratioSucTeach} e^{-\ampNor \abs{u}}I_{\ratioSucTeach}(\ampNor \abs{u})
\end{align*}
where $I_{\ratioSucTeach} \colon v \in \mathbb{R^*_+} \mapsto \int_0^{+\infty}\left(x+v\right)^{\ratioSucTeach-1}x^{\ratioSucTeach-1}e^{-2x}dx$ and $\normalInt_{\ampNor, \ratioSucTeach} = \frac{\ampNor}{\Gamma\left(\ratioSucTeach\right)^2}$.

Let us remark that, since $\ratioSucTeach-1 \leq 0$, $I_{\ratioSucTeach}$ is decreasing on $\mathbb{R^*_+}$.

As a density function, $\density_{\ampNor, \ratioSucTeach}$ is integrable on $\mathbb{R}$ (it can also be proved using Lemma~\ref{lemma:maj_I}). We call $\cumulDensity_{\ampNor, \ratioSucTeach}$ the associated cumulative distribution function:
\begin{align*}
    \cumulDensity_{\ampNor, \ratioSucTeach}\colon t\in \mathbb{R}\mapsto \int_{-\infty}^t \density_{\ampNor, \ratioSucTeach}(u)du
\end{align*}

Note that, $\underset{+\infty}{\lim}\cumulDensity_{\ampNor, \ratioSucTeach} = 1$ and, since $\density_{\ampNor, \ratioSucTeach}$ is pair, $\cumulDensity_{\ampNor, \ratioSucTeach}(0) = \frac{1}{2}$ and
\begin{align}
    \label{eq:imparity_F_-_1}
    \forall t\in \mathbb{R}, \cumulDensity_{\ampNor, \ratioSucTeach}(t) + \cumulDensity_{\ampNor, \ratioSucTeach}(-t) = 1.
\end{align}

If there is no ambiguity on the parameters $\ampNor$ and $\ratioSucTeach$, we will only write $\density$, $\cumulDensity$, $I$ and $\normalInt$.

\begin{lemma}
    \label{lemma:mimic_dwork}
    Let $r$ be a random variable following the generalised Laplace distribution as defined above.
    Suppose that we can find a function $M$ of $\ampNor$ and $\ratioSucTeach$ such that, for any $t\in \mathbb{R}$, $\frac{\mathbb{P}[r \geq t]}{\mathbb{P}[r \geq t+2]} \leq M$.

    Then $\RandMech$ is $(\log(M), 0)$-differentially private.
\end{lemma}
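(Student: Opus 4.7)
\begin{proof_sketch}
The plan is to adapt the standard proof of $(2\gamma,0)$-differential privacy of the report noisy max mechanism from~\cite{dwork2014algorithmic} to the setting where only the noise contributions of a $\ratioSucTeach$-fraction of teachers is unknown to $\mathcal{E}$. The hypothesis on $M$ will then be invoked exactly once, at the end, in place of the standard shift inequality for a single Laplace variable.

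First, I would fix an arbitrary output $k^*\in [K]$ and an arbitrary pair of adjacent databases $d,d'$ differing in exactly one teacher $i_0$. Since this teacher's model changes its vote from some class $k_a$ to some class $k_b$, we have $n_{k_a}(d')=n_{k_a}(d)-1$, $n_{k_b}(d')=n_{k_b}(d)+1$, and $n_k(d')=n_k(d)$ for every other $k$. In particular, $|n_k(d)-n_k(d')|\leq 1$ for every $k$. Next, I would split the noise on class $k$ as $Y_k=Y_k^{\text{known}}+Y_k^{\text{unknown}}$, where the first term gathers the contributions of all teachers whose randomness is available to $\mathcal{E}$ (colluders, and $\mathcal{E}$ itself if $\mathcal{E}$ is a teacher), and the second term is the sum of the $\ratioSucTeach n$ remaining i.i.d.\ differences of Gammas. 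By the derivation immediately preceding the lemma, $Y_k^{\text{unknown}}$ follows the generalised Laplace distribution of $r$, and these variables are independent across $k$. Conditioning on the auxiliary information $\aux$ and on the $(Y_k^{\text{known}})_k$ only amounts to deterministically shifting the counts, so one may absorb these shifts into the $n_k$ and assume $Y_k\sim r$ independently.

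Then, following the classical argument, I would condition on $(Y_k)_{k\neq k^*}$ and define
\begin{equation*}
    t^*(d) \;:=\; \max_{k\neq k^*}\bigl(n_k(d)+Y_k\bigr) - n_{k^*}(d),
\end{equation*}
so that $\RandMech(d)=k^*$ if and only if $Y_{k^*}\geq t^*(d)$ (ties have probability zero under the continuous distribution $r$). From $|n_k(d)-n_k(d')|\leq 1$, the maximum over $k\neq k^*$ can increase by at most $1$ when passing from $d$ to $d'$, and $n_{k^*}$ can decrease by at most $1$; hence $t^*(d')\leq t^*(d)+2$. Applying the hypothesis of the lemma to $t=t^*(d)$ yields
\begin{equation*}
    \mathbb{P}\bigl[Y_{k^*}\geq t^*(d)\;\big|\;(Y_k)_{k\neq k^*}\bigr] \;\leq\; M\cdot\mathbb{P}\bigl[Y_{k^*}\geq t^*(d)+2\;\big|\;(Y_k)_{k\neq k^*}\bigr] \;\leq\; M\cdot\mathbb{P}\bigl[Y_{k^*}\geq t^*(d')\;\big|\;(Y_k)_{k\neq k^*}\bigr],
\end{equation*}
where the second inequality uses the inclusion $\{Y_{k^*}\geq t^*(d)+2\}\subseteq\{Y_{k^*}\geq t^*(d')\}$. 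Taking expectation over $(Y_k)_{k\neq k^*}$ and $\aux$ gives $\mathbb{P}[\RandMech(d)=k^*]\leq M\,\mathbb{P}[\RandMech(d')=k^*]$; since $k^*$ was arbitrary and the output range $[K]$ is discrete, summing over any $S\subset [K]$ yields $(\log M,0)$-differential privacy (the reverse inequality follows by swapping the roles of $d$ and $d'$).

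The main obstacle I expect is bookkeeping rather than mathematical: carefully justifying that, after conditioning on $\aux$ and on the known part of the noise, the remaining noise on each class is truly an independent copy of $r$, regardless of the identity of $\mathcal{E}$ (a colluding teacher, an honest teacher, or an outside observer) and of the particular form of $\aux$. Once this reduction is established cleanly, the monotonicity/shift argument is a direct transcription of the report noisy max proof.
\end{proof_sketch}
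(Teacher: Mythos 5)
Your proposal is correct and follows essentially the same route as the paper: condition on the noise of all classes except the candidate output, define the threshold above which that class wins (your $t^*(d)$ is the paper's $r^*$), use $|n_k(d)-n_k(d')|\leq 1$ to get the shift-by-$2$ bound, invoke the hypothesis on $M$ in place of the Laplace tail ratio, and obtain the reverse direction by the symmetry of $d$ and $d'$ rather than by monotonicity of counts. The only addition is your explicit conditioning on the colluders' known noise, which the paper handles implicitly by defining $r$ directly as the sum over the $\ratioSucTeach n$ secret contributions.
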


\begin{proof}
    We will mimic the proof of the privacy guarantee of the report noisy max from~\cite{dwork2014algorithmic} (Claim 3.9), but with two key adaptations.
    
    First of all, let us warn that our definition of the adjacence of two databases is different from the one of~\cite{dwork2014algorithmic}. Changing one teacher is analogous to changing one individual in the \emph{counting queries} context. This is why the hypotheses must be adapted.
    Indeed, $d$ and $d'$ being two adjacent databases (in our sense), since at most one teacher will change its vote between $d$ and $d'$, we have the property $\abs{n_k - n'_k} \leq 1$ for any $k\in [K]$ but we do not have the property of monotonicity of the counts used in~\cite{dwork2014algorithmic}~\footnote{We could have consider a database $\tilde{d}$ such that $d$ is adjacent to $\tilde{d}$ and $d'$ is adjacent to $\tilde{d}$ with Dwork's definition. Then we could have applied twice the result of~\cite{dwork2014algorithmic} (using $M$ instead of $e^{\ampNor}$ as upper bound of $\frac{\mathbb{P}[r \geq t]}{\mathbb{P}[r \geq t+2]}$ for $(d, \tilde{d})$ and $(\tilde{d}, d')$). Nevertheless, we performed numerical experimentations that make us believe that it would have given worse privacy guarantees than the present result.}.
    
    The second difference is that, $r$ being a random variable following the generalised Laplace distribution, we have to substitute the classical upperbound $e^{2\ampNor}$ (valid for the Laplace distribution) of $\frac{\mathbb{P}[r \geq t]}{\mathbb{P}[r \geq t+2]}$ by $M$.

    We consider a query $Q$.
    Let $k_0\in [K]$.
    
    For any event $E$, we write $\mathbb{P}[E | r_{-k_0}]$ the probability of $E$ under the condition that the draw from the $(K-1)$-dimensional generalised Laplace distribution, used for all the noisy counts except the $k_0$-th count, is equal to $r_{-k_0}$.
    We now suppose this draw $r_{-k_0}$ fixed.
    
    We define $r^* = \min \{r_{k_0} | \forall k\in [K]\setminus \{k_0\}, n_{k_0} + r_{k_0} \geq n_k + r_k\}$.
    Note that, whatever is the tie-breaking policy, $r_{-k_0}$ being fixed, $k_0$ is the output of $\RandMech$ for database $d$ if $r_{k_0} > r^*$ and $k_0$ is not the output of $\RandMech$ if $r_{k_0} < r^*$. Since $\mathbb{P}[r_{k_0}=r^*]=0$, we have $\mathbb{P}[\RandMech(d, Q)=k_0 | r_{-k_0}] = \mathbb{P}[r_{k_0}>r^*] = \mathbb{P}[r_{k_0}\geq r^*]$.
    Moreover, for all $k\in [K]\setminus \{k_0\}$,
    \begin{align*}
        n'_{k_0} + r^* + 2 &\geq n_{k_0} + r^* + 1 &\text{(because $\abs{n_{k_0} - n'_{k_0}} \leq 1$)}\\
        &\geq n_k + r_k + 1 &\text{(by definition of $r^*$)}\\
        &\geq n'_k + r_k &\text{(because $\abs{n_{k_0} - n'_{k_0}} \leq 1$)}
    \end{align*}
    We deduce that, if $r_{k_0} > r^* + 2$, then $k_0$ is the output of $\RandMech$ for database $d'$. Therefore, $\mathbb{P}[\RandMech(d', Q)=k_0 | r_{-k_0}] \geq \mathbb{P}[r_{k_0}>r^*+2] = \mathbb{P}[r_{k_0}\geq r^*+2]$.
    
    Since $\mathbb{P}[r_{k_0}\geq r^*] \leq M \mathbb{P}[r_{k_0}\geq r^*+2]$ by assumption, we can deduce that $\mathbb{P}[\RandMech(d, Q)=k_0 | r_{-k_0}] \leq M \mathbb{P}[\RandMech(d', Q)=k_0 | r_{-k_0}]$.
    This being true for any draw $r_{-k_0}$, the law of total probability gives us $\mathbb{P}[\RandMech(d, Q)=k_0] \leq M \mathbb{P}[\RandMech(d', Q)=k_0]$.
    
    As $d$ and $d'$ play perfectly symmetric roles (unlike in the proof of the report noisy max guarantee from~\cite{dwork2014algorithmic}), we also have $\mathbb{P}[\RandMech(d', Q)=k_0] \leq M \mathbb{P}[\RandMech(d, Q)=k_0]$.
    Since this is true for any query $Q$, we can conclude that $\RandMech$ is $(\log(M), 0)$-differentially private.
\end{proof}

By definition of $\cumulDensity$, $r$ being a random variable following the generalised Laplace distribution, for all $t\in \mathbb{R}$,
\begin{align*}
    \mathbb{P}[r \geq t] = 1 - \cumulDensity(t).
\end{align*}

Let $\dbdist\in \mathbb{R}^*_+$.

In the following, we exhibit upper bounds of $g\colon t\in \mathbb{R} \mapsto \frac{1-\cumulDensity(t)}{1-\cumulDensity(t+\dbdist)}$ (Propositions~\ref{prop:maj_g_gamma} and \ref{prop:maj_g_tau}) to derive privacy guarantees for $\RandMech$ (Theorem~\ref{th:dp_per_query}) taking $\dbdist = 2$.
Let us first state some useful lemmas.

\begin{lemma}
    \label{lemma:log_concavity_of_I}
    Let $\Idist\in \mathbb{R}_+$.
    The application $h\colon \varinI\in \mathbb{R_+^*} \mapsto \frac{I(\varinI)}{I(\varinI+\Idist)}$ is decreasing.
\end{lemma}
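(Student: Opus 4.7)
My plan is to reduce the claim to the log-convexity of $I$, and then establish log-convexity by combining the log-convexity of the integrand in $v$ with Hölder's inequality.

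First, I would set $\phi := \log I$ on $\mathbb{R}_+^*$ and observe that $h(v) = \exp(\phi(v) - \phi(v+\beta))$, so that
\[
h'(v) \;=\; h(v)\bigl(\phi'(v) - \phi'(v+\beta)\bigr).
\]
Since $\beta > 0$, the sign of $h'(v)$ is controlled by whether $\phi'$ is non-decreasing. Thus showing that $h$ is (strictly) decreasing on $\mathbb{R}_+^*$ amounts to showing that $\phi = \log I$ is (strictly) convex, i.e., that $I$ is log-convex.

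The key step is the log-convexity of $I$. For each fixed $x > 0$, let
\[
g_x(v) \;:=\; (x+v)^{\tau-1} x^{\tau-1} e^{-2x}.
\]
Then $\log g_x(v) = (\tau-1)\log(x+v) + \text{(terms independent of } v\text{)}$, whose second derivative in $v$ is $(1-\tau)/(x+v)^2 > 0$ (because $\tau \in (0,1)$). Hence $v \mapsto g_x(v)$ is strictly log-convex on $\mathbb{R}_+^*$ for every $x > 0$. Now for any $\lambda \in (0,1)$ and $v_1, v_2 > 0$, pointwise log-convexity gives
\[
g_x\bigl(\lambda v_1 + (1-\lambda) v_2\bigr) \;\leq\; g_x(v_1)^\lambda\, g_x(v_2)^{1-\lambda},
\]
and integrating over $x$ and applying Hölder's inequality with conjugate exponents $(1/\lambda, 1/(1-\lambda))$ yields
\[
I\bigl(\lambda v_1 + (1-\lambda) v_2\bigr) \;\leq\; \int_0^{+\infty} g_x(v_1)^\lambda g_x(v_2)^{1-\lambda}\,dx \;\leq\; I(v_1)^\lambda\, I(v_2)^{1-\lambda}.
\]
Taking logarithms gives the convexity of $\phi$.

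For strictness, I would note that equality in Hölder would force $g_x(v_1)/g_x(v_2)$ to be constant in $x$, which fails as soon as $v_1 \neq v_2$ (one can check this ratio explicitly equals $((x+v_1)/(x+v_2))^{\tau-1}$, which is non-constant in $x$). Therefore $\phi$ is \emph{strictly} convex on $\mathbb{R}_+^*$, so $\phi'$ is strictly increasing, giving $\phi'(v) < \phi'(v+\beta)$ for every $v > 0$ and $\beta > 0$, hence $h'(v) < 0$. The only mild obstacle is justifying the use of $\phi'$ (finiteness and differentiability under the integral), which follows from dominated convergence together with the integrability estimates on $I$ already used elsewhere in the paper; alternatively, one can avoid derivatives altogether and deduce $h(v) \geq h(v+\beta)$ directly from the discrete log-convexity $I(v+\beta)^2 \leq I(v)\,I(v+2\beta)$ obtained from the Hölder argument applied at $\lambda = 1/2$, iterated and combined with continuity.
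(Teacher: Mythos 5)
Your proof is correct, but it takes a genuinely different route from the paper's. The paper differentiates under the integral sign (justified via Leibniz's rule), writes $I(v+\beta)I'(v)-I(v)I'(v+\beta)$ as a double integral, symmetrises it by swapping the integration variables, and checks pointwise that the resulting integrand is non-positive; this is, in disguise, a direct verification that $(\log I)'$ is non-decreasing. You instead observe that each integrand $v\mapsto (x+v)^{\tau-1}x^{\tau-1}e^{-2x}$ is log-convex and invoke H\"older's inequality to conclude that $I$ itself is log-convex, from which the monotonicity of $h$ follows since increments of a convex function are non-decreasing. Your argument is more modular and arguably cleaner: it isolates the real content of the lemma (log-convexity of $I$) and, via the derivative-free variant, dispenses entirely with the differentiation-under-the-integral bookkeeping that occupies much of the paper's proof. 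Two minor points of polish: the case $\beta=0$ (allowed by the statement) is trivial and should be dispatched separately since you divide by $\beta$-dependent quantities implicitly when arguing strictness; and the closing remark about iterating the $\lambda=\tfrac12$ inequality only yields $h(v)\geq h(v+k\beta)$ along arithmetic progressions, so to get full monotonicity you should instead use the H\"older bound for arbitrary $\lambda$ (which you have already established) together with the standard fact that $\phi$ convex implies $\phi(s+\delta)-\phi(s)\leq \phi(t+\delta)-\phi(t)$ for $s\leq t$, $\delta\geq 0$. With those adjustments the proof is complete, and the strict-inequality refinement you obtain from the H\"older equality condition is a genuine (if unneeded) strengthening of the paper's statement.
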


\begin{proof}
    We will prove that $h$ is differentiable and that its derivative is non-positive.
    
    Let $\phi \colon (\varinI, t)\in (\mathbb{R}_+^*)^2 \mapsto \left(t+\varinI\right)^{\ratioSucTeach-1}t^{\ratioSucTeach-1}e^{-2\ampNor t}$.
    $\phi$ has a partial derivative in the first variable and, for all $(\varinI, t)\in (\mathbb{R}_+^*)^2$, $\frac{\partial \phi}{\partial \varinI}(\varinI, t) = (\ratioSucTeach-1)\left(t+\varinI\right)^{\ratioSucTeach-2}t^{\ratioSucTeach-1}e^{-2\ampNor t}$.
    $\phi$ and $\frac{\partial \phi}{\partial \varinI}$ are continuous in both variables.
    
    Let $\nonNullThres\in \mathbb{R}_+^*$.
    For all $(\varinI, t)\in [\nonNullThres, +\infty) \times \mathbb{R}_+^*$, $\abs{\frac{\partial \phi}{\partial \varinI}(\varinI, t)} \leq \psi(t)$ where $\psi \colon t\in \mathbb{R}_+^* \mapsto (1-\ratioSucTeach)\left(t+\nonNullThres\right)^{\ratioSucTeach-2}t^{\ratioSucTeach-1}e^{-2\ampNor t}$.
    $\psi$ is continuous and integrable on $[\nonNullThres, +\infty)$.
    Applying Leibniz's theorem, we deduce that $I$ is differentiable on $[\nonNullThres, +\infty)$ and that, for all $\varinI \in [\nonNullThres, +\infty)$,
    $I'(\varinI) = \int_{0}^{+\infty}(\ratioSucTeach-1)\left(t+\varinI\right)^{\ratioSucTeach-2}t^{\ratioSucTeach-1}e^{-2\ampNor t}dt$.
    Since this is true for all $\nonNullThres\in \mathbb{R}_+^*$, we know that $I$ is differentiable on $\mathbb{R}_+^*$ and that, for all $\varinI \in \mathbb{R}_+^*$,
    $I'(\varinI) = \int_{0}^{+\infty}(\ratioSucTeach-1)\left(t+\varinI\right)^{\ratioSucTeach-2}t^{\ratioSucTeach-1}e^{-2\ampNor t}dt$.
    As a consequence, $h$ is differentiable on $\mathbb{R}_+^*$ and, for all $\varinI\in \mathbb{R}_+^*$, $h'(\varinI) = \frac{I(\varinI+\Idist)I'(\varinI) - I(\varinI)I'(\varinI+\Idist)}{I(\varinI+\Idist)^2}$.
    
    Let $\varinI\in \mathbb{R}_+^*$.
    \begin{align*}
        &I(\varinI+\Idist)I'(\varinI) - I(\varinI)I'(\varinI+\Idist)\\
        &\,\,\, = \int_{0}^{+\infty}\left(x+\varinI+\Idist\right)^{\ratioSucTeach-1}x^{\ratioSucTeach-1}e^{-2x}dx \times \int_{0}^{+\infty}(\ratioSucTeach-1)\left(y+\varinI\right)^{\ratioSucTeach-2}y^{\ratioSucTeach-1}e^{-2y}dy\\
        &\,\,\,\quad - \int_{0}^{+\infty}\left(y+\varinI\right)^{\ratioSucTeach-1}y^{\ratioSucTeach-1}e^{-2y}dy \times \int_{0}^{+\infty}(\ratioSucTeach-1)\left(x+\varinI+\Idist\right)^{\ratioSucTeach-2}x^{\ratioSucTeach-1}e^{-2x}dx\\
        &\,\,\, = (\ratioSucTeach-1)\left[\int_{0}^{+\infty}\left(x+\varinI+\Idist\right)^{\ratioSucTeach-1}x^{\ratioSucTeach-1}e^{-2x}\int_{0}^{+\infty}\left(y+\varinI\right)^{\ratioSucTeach-2}y^{\ratioSucTeach-1}e^{-2y}dydx\right.\\
        &\,\,\,\quad \left.- \int_{0}^{+\infty}\left(x+\varinI+\Idist\right)^{\ratioSucTeach-2}x^{\ratioSucTeach-1}e^{-2x}\int_{0}^{+\infty}\left(y+\varinI\right)^{\ratioSucTeach-1}y^{\ratioSucTeach-1}e^{-2y}dydx\right]\\
        &\,\,\, = (\ratioSucTeach-1) \left[\int_{0}^{+\infty}\int_{0}^{+\infty}\left(x+\varinI+\Idist\right)^{\ratioSucTeach-1}\left(y+\varinI\right)^{\ratioSucTeach-2}(xy)^{\ratioSucTeach-1}e^{-2(x+y)}dydx\right.\\
        &\,\,\,\quad \left. - \int_{0}^{+\infty}\int_{0}^{+\infty}\left(x+\varinI+\Idist\right)^{\ratioSucTeach-2}\left(y+\varinI\right)^{\ratioSucTeach-1}(xy)^{\ratioSucTeach-1}e^{-2(x+y)}dydx\right]\\
        &\,\,\, = (\ratioSucTeach-1) \int_{0}^{+\infty}\int_{0}^{+\infty}(xy)^{\ratioSucTeach-1}e^{-2(x+y)}\\
        &\,\,\,\qquad\qquad\qquad \times \left[\left(x+\varinI+\Idist\right)^{\ratioSucTeach-1}\left(y+\varinI\right)^{\ratioSucTeach-2} - \left(x+\varinI+\Idist\right)^{\ratioSucTeach-2}\left(y+\varinI\right)^{\ratioSucTeach-1}\right]dydx\\
        &\,\,\, = (\ratioSucTeach-1) \int_{0}^{+\infty}\int_{0}^{+\infty}\left(x+\varinI+\Idist\right)^{\ratioSucTeach-2}\left(y+\varinI\right)^{\ratioSucTeach-2}(xy)^{\ratioSucTeach-1}e^{-2(x+y)}\\
        &\,\,\,\qquad\qquad\qquad \times \left[\left(x+\varinI+\Idist\right) - \left(y+\varinI\right)\right]dydx\\
        &\,\,\, = (\ratioSucTeach-1) \int_{0}^{+\infty}\int_{0}^{+\infty}\left(x+\varinI+\Idist\right)^{\ratioSucTeach-2}\left(y+\varinI\right)^{\ratioSucTeach-2}(xy)^{\ratioSucTeach-1}e^{-2(x+y)}\\
        &\,\,\,\qquad\qquad\qquad \times \left(x + \Idist - y\right)dydx\\
        &\,\,\, \leq (\ratioSucTeach-1) \int_{0}^{+\infty}\int_{0}^{+\infty}\left(x+\varinI+\Idist\right)^{\ratioSucTeach-2}\left(y+\varinI\right)^{\ratioSucTeach-2}(xy)^{\ratioSucTeach-1}e^{-2(x+y)}(x - y)dydx \numberthis \label{maj_f_deriv_1}
        \intertext{(because $\ratioSucTeach-1 \leq 0$ and $\Idist \geq 0$)}
    \end{align*}
    
    Similarly, we show that
    \begin{align*}
        &I(\varinI+\Idist)I'(\varinI) - I(\varinI)I'(\varinI+\Idist)\\
        &\,\,\, = \int_{0}^{+\infty}\left(y+\varinI+\Idist\right)^{\ratioSucTeach-1}y^{\ratioSucTeach-1}e^{-2y}dy \times \int_{0}^{+\infty}(\ratioSucTeach-1)\left(x+\varinI\right)^{\ratioSucTeach-2}x^{\ratioSucTeach-1}e^{-2x}dx\\
        &\,\,\,\quad - \int_{0}^{+\infty}\left(x+\varinI\right)^{\ratioSucTeach-1}x^{\ratioSucTeach-1}e^{-2x}dx \times \int_{0}^{+\infty}(\ratioSucTeach-1)\left(y+\varinI+\Idist\right)^{\ratioSucTeach-2}y^{\ratioSucTeach-1}e^{-2y}dy\\
        &\,\,\, = (\ratioSucTeach-1)\left[\int_{0}^{+\infty}\left(x+\varinI\right)^{\ratioSucTeach-2}x^{\ratioSucTeach-1}e^{-2x}\int_{0}^{+\infty}\left(y+\varinI+\Idist\right)^{\ratioSucTeach-1}y^{\ratioSucTeach-1}e^{-2y}dydx\right.\\
        &\,\,\,\quad \left.- \int_{0}^{+\infty}\left(x+\varinI\right)^{\ratioSucTeach-1}x^{\ratioSucTeach-1}e^{-2x}\int_{0}^{+\infty}\left(y+\varinI+\Idist\right)^{\ratioSucTeach-2}y^{\ratioSucTeach-1}e^{-2y}dydx\right]\\
        &\,\,\, = (\ratioSucTeach-1) \int_{0}^{+\infty}\int_{0}^{+\infty}(xy)^{\ratioSucTeach-1}e^{-2(x+y)}\\
        &\,\,\,\qquad\qquad\qquad \times \left[\left(x+\varinI\right)^{\ratioSucTeach-2}\left(y+\varinI+\Idist\right)^{\ratioSucTeach-1} - \left(x+\varinI\right)^{\ratioSucTeach-1}\left(y+\varinI+\Idist\right)^{\ratioSucTeach-2}\right]dydx\\
        &\,\,\, = (\ratioSucTeach-1) \int_{0}^{+\infty}\int_{0}^{+\infty}\left(x+\varinI\right)^{\ratioSucTeach-2}\left(y+\varinI+\Idist\right)^{\ratioSucTeach-2}(xy)^{\ratioSucTeach-1}e^{-2(x+y)}\\
        &\,\,\,\qquad\qquad\qquad \times (y + \Idist - x)dydx\\
        &\,\,\, \leq (\ratioSucTeach-1) \int_{0}^{+\infty}\int_{0}^{+\infty}\left(x+\varinI\right)^{\ratioSucTeach-2}\left(y+\varinI+\Idist\right)^{\ratioSucTeach-2}(xy)^{\ratioSucTeach-1}e^{-2(x+y)}(y - x)dydx \numberthis \label{maj_f_deriv_2}.
    \end{align*}
    
    Alternatively, we can use~\ref{maj_f_deriv_1} to deduce~\ref{maj_f_deriv_2} directly using Fubini's theorem and exchanging the roles of $x$ and $y$.
    
    From~\ref{maj_f_deriv_1} and~\ref{maj_f_deriv_2}, we get:
    \begin{align*}
        &2\times \left[ I(\varinI+\Idist)I'(\varinI) - I(\varinI)I'(\varinI+\Idist) \right]\\
        &\,\,\, \leq (\ratioSucTeach-1) \int_{0}^{+\infty}\int_{0}^{+\infty}\left(x+\varinI+\Idist\right)^{\ratioSucTeach-2}\left(y+\varinI\right)^{\ratioSucTeach-2}\left(x - y\right)(xy)^{\ratioSucTeach-1}e^{-2(x+y)}dydx\\
        &\,\,\, \,\,\, + (\ratioSucTeach-1) \int_{0}^{+\infty}\int_{0}^{+\infty}\left(x+\varinI\right)^{\ratioSucTeach-2}\left(y+\varinI+\Idist\right)^{\ratioSucTeach-2}\left(y - x\right)(xy)^{\ratioSucTeach-1}e^{-2(x+y)}dydx\\
        &\,\,\, = (\ratioSucTeach-1) \int_{0}^{+\infty}\int_{0}^{+\infty}\left(x - y\right)(xy)^{\ratioSucTeach-1}e^{-2(x+y)}\\
        &\,\,\,\qquad\qquad\qquad \times \left[\left(x+\varinI+\Idist\right)^{\ratioSucTeach-2}\left(y+\varinI\right)^{\ratioSucTeach-2} - \left(x+\varinI\right)^{\ratioSucTeach-2}\left(y+\varinI+\Idist\right)^{\ratioSucTeach-2}\right]dydx.
    \end{align*}
    
    Let $(x, y)\in (\mathbb{R}_+^*)^2$.
    
    Note that $\left(x+\varinI+\Idist\right)\left(y+\varinI\right) - \left(x+\varinI\right)\left(y+\varinI+\Idist\right) = \Idist(y-x)$ and then
    \begin{align*}
        &\left(x+\varinI+\Idist\right)^{\ratioSucTeach-2}\left(y+\varinI\right)^{\ratioSucTeach-2} \geq \left(x+\varinI\right)^{\ratioSucTeach-2}\left(y+\varinI+\Idist\right)^{\ratioSucTeach-2} &\\
        &\quad \Leftrightarrow \left(x+\varinI+\Idist\right)\left(y+\varinI\right) \leq \left(x+\varinI\right)\left(y+\varinI+\Idist\right) &\text{(because $\ratioSucTeach-2 < 0$)}\\
        &\quad \Leftrightarrow x \geq y.
    \end{align*}
    We deduce that 
    \begin{align*}
        \left[\left(x+\varinI+\Idist\right)^{\ratioSucTeach-2}\left(y+\varinI\right)^{\ratioSucTeach-2} - \left(x+\varinI\right)^{\ratioSucTeach-2}\left(y+\varinI+\Idist\right)^{\ratioSucTeach-2}\right]\left(x - y\right) \geq 0.
    \end{align*}
    This inequality being true for all $(x, y)\in (\mathbb{R}_+^*)^2$ and, since $\ratioSucTeach-1 \leq 0$, we have:
    \begin{align*}
        &(\ratioSucTeach-1) \int_{0}^{+\infty}\int_{0}^{+\infty}\left[\left(x+\varinI+\Idist\right)^{\ratioSucTeach-2}\left(y+\varinI\right)^{\ratioSucTeach-2} - \left(x+\varinI\right)^{\ratioSucTeach-2}\left(y+\varinI+\Idist\right)^{\ratioSucTeach-2}\right]\\
        &\qquad\qquad\qquad \times \left(x - y\right)(xy)^{\ratioSucTeach-1}e^{-2(x+y)}dydx \leq 0.
    \end{align*}
    Finally, $I(\varinI+\Idist)I'(\varinI) - I(\varinI)I'(\varinI+\Idist) \leq 0$ and $h'(\varinI)\leq 0$.
    
    Since this is true for any $\varinI\in \mathbb{R}_+^*$, we can conclude that $h$ is decreasing on $\mathbb{R}^*_+$.
\end{proof}

\begin{lemma}
    \label{lemma:max_of_g}
    The function $g$ has a maximum on $\mathbb{R}$, and this maximum is reached in the interval $[-\frac{\dbdist}{2}; 0]$.
\end{lemma}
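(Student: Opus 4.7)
The strategy is to show that $g$ is non-decreasing on $(-\infty, -\dbdist/2]$ and non-increasing on $[0, +\infty)$; combined with the continuity of $g$ on $\mathbb{R}$ (which follows from that of $\cumulDensity$ together with $1 - \cumulDensity > 0$), this forces $\sup_{\mathbb{R}} g = \max_{[-\dbdist/2, 0]} g$, and the maximum on the compact interval $[-\dbdist/2, 0]$ is attained by continuity. As a preliminary sanity check, $\lim_{t \to -\infty} g(t) = 1$ is immediate from $\cumulDensity(\pm\infty) \in \{0,1\}$, and $\lim_{t \to +\infty} g(t) = e^{\ampNor \dbdist}$ follows from the tail asymptotic $1 - \cumulDensity(t) \sim (\normalInt/\ampNor)\, e^{-\ampNor t} I(\ampNor t)$ together with $I(v)/I(v + \ampNor \dbdist) \to 1$ as $v \to +\infty$.

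For the monotonicity on $[0, +\infty)$, I would change variables $u = w + \dbdist$ in $1 - \cumulDensity(t+\dbdist) = \normalInt \int_{t+\dbdist}^{+\infty} e^{-\ampNor u} I(\ampNor u)\, du$ to reformulate, for $t \geq 0$, $g(t) = e^{\ampNor \dbdist} R(t)$ with $R(t) := \frac{\int_t^{+\infty} e^{-\ampNor w} I(\ampNor w)\, dw}{\int_t^{+\infty} e^{-\ampNor w} I(\ampNor(w+\dbdist))\, dw}$. A direct differentiation shows that $R'(t) \leq 0$ is equivalent to the inequality $I(\ampNor(t+\dbdist)) \int_t^{+\infty} e^{-\ampNor w} I(\ampNor w)\, dw \leq I(\ampNor t) \int_t^{+\infty} e^{-\ampNor w} I(\ampNor(w+\dbdist))\, dw$. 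Lemma~\ref{lemma:log_concavity_of_I}, applied with $\Idist = \ampNor \dbdist$, asserts that $v \mapsto I(v)/I(v + \ampNor \dbdist)$ is non-increasing; hence for every $w \geq t$ one has $I(\ampNor w)\, I(\ampNor(t+\dbdist)) \leq I(\ampNor t)\, I(\ampNor(w + \dbdist))$, and multiplying by $e^{-\ampNor w}$ and integrating over $w \in [t, +\infty)$ yields the required inequality.

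For the monotonicity on $(-\infty, -\dbdist/2]$, the symmetry $1 - \cumulDensity(u) = \cumulDensity(-u)$ (equation~\eqref{eq:imparity_F_-_1}) gives $g(t) = \cumulDensity(-t)/\cumulDensity(-t-\dbdist)$; setting $u = -t \geq \dbdist/2$, it suffices to show that $\tilde{g}(u) := \cumulDensity(u)/\cumulDensity(u - \dbdist)$ is non-increasing on $[\dbdist/2, +\infty)$. Writing $\cumulDensity(u) = \cumulDensity(u-\dbdist) + \int_{u-\dbdist}^{u} \density(s)\, ds$ and differentiating, $\tilde{g}'(u) \leq 0$ reduces to $(\density(u) - \density(u-\dbdist))\, \cumulDensity(u-\dbdist) \leq \density(u-\dbdist) \int_{u-\dbdist}^{u} \density(s)\, ds$. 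Since $\density$ is even and non-increasing on $(0, +\infty)$ (as the product of the two non-increasing positive factors $e^{-\ampNor v}$ and $I(\ampNor v)$), and since $u \geq \dbdist/2$ implies $|u| \geq |u - \dbdist|$ (handled by the two cases $\dbdist/2 \leq u < \dbdist$ and $u \geq \dbdist$), one gets $\density(u) \leq \density(u - \dbdist)$; the left-hand side is then non-positive while the right-hand side is non-negative, closing the argument. I expect the main obstacle to be the manipulation in the second paragraph: spotting the reformulation $g = e^{\ampNor \dbdist} R$ and then pushing Lemma~\ref{lemma:log_concavity_of_I} under the integral sign is the only delicate step, whereas the monotonicity on the left half-line then reduces quickly to the unimodality of $\density$ at $0$.
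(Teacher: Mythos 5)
Your proof is correct and takes essentially the same route as the paper's: monotone increase to the left of $-\frac{\dbdist}{2}$ from the evenness and unimodality of $\density$, monotone decrease on $\mathbb{R}_+$ from Lemma~\ref{lemma:log_concavity_of_I} applied with $\Idist=\ampNor\dbdist$ (your reformulation $g=e^{\ampNor\dbdist}R$ is exactly the paper's substitution $v=u-\dbdist$, and your symmetric rewriting $\cumulDensity(-t)/\cumulDensity(-t-\dbdist)$ on the left half-line repackages the same inequality $\density(t)\leq\density(t+\dbdist)$), followed by the same compactness-and-continuity conclusion on $[-\frac{\dbdist}{2};0]$. The only point worth making explicit is that $\density$, hence $g'$, is undefined at $t=-\dbdist$ (and $R'$ at $t=0$ when $\ratioSucTeach\leq\frac{1}{2}$), so, as the paper does, you should glue the monotonicity across these isolated points using the continuity of $g$ there.
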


\begin{proof}
    Since $\density$ is defined on $\mathbb{R}^*$, $\cumulDensity$ is differentiable on $\mathbb{R}^*$. Thus $g$ is differentiable on $\mathbb{R}^*\setminus \{-a\}$ and, for all $t\in \mathbb{R}^*\setminus \{-\dbdist\}$,
    \begin{align*}
        g'(t) = \frac{(1-\cumulDensity(t))\density(t+\dbdist) - (1-\cumulDensity(t+\dbdist))\density(t)}{(1-\cumulDensity(t+\dbdist))^2}.
    \end{align*}
    
    First of all, let us prove that $g$ is increasing on $(-\infty; -\frac{a}{2})$.
    For all $t\in (-\infty; -\dbdist)$, $\abs{t} = -t \geq -t-\dbdist  = \abs{t+\dbdist}$ and, for all $t\in (-\dbdist; -\frac{\dbdist}{2})$, $\abs{t} = -t \geq t+\dbdist  = \abs{t+\dbdist}$.
    Let $t\in (-\infty; -a) \cup (-a; -\frac{a}{2})$.
    Then, since $x\mapsto e^{-\ampNor x}I(\ampNor x)$ is decreasing on $\mathbb{R}^*_+$, $e^{-\ampNor \abs{t}}I(\ampNor \abs{t}) \leq e^{-\ampNor \abs{t+\dbdist}}I(\ampNor \abs{t+a})$ which means $\density(t)\leq \density(t+\dbdist)$.
    Besides, $\cumulDensity$ is increasing then, since $a\geq 0$, $1-\cumulDensity(t+\dbdist) \leq 1-\cumulDensity(t)$.
    Since $\density(t)$, $\density(t+\dbdist)$, $1-\cumulDensity(t)$ and $1-\cumulDensity(t)$ are all positive quantities, we deduce that $g'(t) \geq 0$.
    Then, $g$ is increasing on $(-\infty; -\dbdist)$ and on $(-\dbdist; -\frac{\dbdist}{2})$ and since $g$ is defined and continuous in $-a$, $g$ is increasing on $(-\infty; -\frac{\dbdist}{2})$.
    
    Let us now prove that $g$ is decreasing on $\mathbb{R}_+$.
    Let $t\in \mathbb{R}^*_+$.
    \begin{align*}
        &\frac{(1-\cumulDensity(t+\dbdist))^2}{\normalInt^2}g'(t)\\
        &\quad = \frac{1}{\normalInt^2}\left[(1-\cumulDensity(t))\density(t+\dbdist) - (1-\cumulDensity(t+\dbdist))\density(t)\right]\\
        &\quad = e^{-\ampNor \abs{t+\dbdist}}I(\ampNor \abs{t+\dbdist})\int_t^{+\infty}e^{-\ampNor \abs{u}}I(\ampNor \abs{u})du\\
        &\quad \qquad - e^{-\ampNor \abs{t}}I(\ampNor \abs{t})\int_{t+\dbdist}^{+\infty}e^{-\ampNor \abs{u}}I(\ampNor \abs{u})du\\
        &\quad = e^{-\ampNor (t+\dbdist)}I(\ampNor (t+\dbdist))\int_t^{+\infty}e^{-\ampNor u}I(\ampNor u)du - e^{-\ampNor t}I(\ampNor t)\int_{t+\dbdist}^{+\infty}e^{-\ampNor u}I(\ampNor u)du\\
        &\quad = e^{-\ampNor (t+\dbdist)}I(\ampNor (t+\dbdist))\int_t^{+\infty}e^{-\ampNor u}I(\ampNor u)du\\
        &\quad \qquad - e^{-\ampNor t}I(\ampNor t)\int_t^{+\infty}e^{-\ampNor (v+\dbdist)}I(\ampNor (v+\dbdist))dv\\
        &\qquad \qquad\text{(by the substitution $v = u-\dbdist$)}\\
        &\quad = e^{-\ampNor (t+\dbdist)}\left[\int_t^{+\infty}e^{-\ampNor u}\left[I(\ampNor (t+\dbdist))I(\ampNor u) - I(\ampNor t)I(\ampNor (u+\dbdist))\right]du\right]
    \end{align*}
    
    For any $u\in [t; +\infty)$, Lemma~\ref{lemma:log_concavity_of_I} with $\Idist = \ampNor \dbdist$ tells us that $\frac{I(\ampNor u)}{I(\ampNor (u+\dbdist))} \leq \frac{I(\ampNor t)}{I(\ampNor (t+\dbdist))}$ which means $I(\ampNor (t+\dbdist))I(\ampNor u) - I(\ampNor t)I(\ampNor (u+\dbdist)) \leq 0$.
    
    Therefore, $\int_t^{+\infty}e^{-\ampNor u}\left[I(\ampNor (t+\dbdist))I(\ampNor u) - I(\ampNor t)I(\ampNor (u+\dbdist))\right]du \leq 0$ and finally $g'(t)\leq 0$.
    This being valid for all $t\in \mathbb{R}^*_+$ and $g$ being continuous in $0$, we deduce that $g$ is decreasing on $\mathbb{R}_+$.
    
    From the two previous discussions and from the fact that $g$ is continuous on $[-\frac{\dbdist}{2}; 0]$, we conclude that $g$ has a maximum on $\mathbb{R}$ and that this maximum is reached in $[-\frac{\dbdist}{2}; 0]$.
\end{proof}

\begin{proposition}
    \label{prop:maj_g_gamma}
    For all $t\in [-\frac{\dbdist}{2}; 0]$,
    \begin{align*}
            g(t) \leq 1 + 2\frac{\int_0^{\frac{\ampNor \dbdist}{2}}e^{-v}I(v)dv}{\int_{\ampNor \dbdist}^{+\infty}e^{-v}I(v)dv}.
    \end{align*}
\end{proposition}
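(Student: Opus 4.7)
The plan is to rewrite $g(t)-1$ as a single ratio of two integrals and to bound the numerator from above and the denominator from below separately on the interval $t\in[-\dbdist/2,0]$. I would start from the identity
\begin{align*}
g(t)-1 = \frac{\cumulDensity(t+\dbdist)-\cumulDensity(t)}{1-\cumulDensity(t+\dbdist)} = \frac{\int_t^{t+\dbdist}\density(u)\,du}{\int_{t+\dbdist}^{+\infty}\density(u)\,du},
\end{align*}
and exploit the fact that, for $t\in[-\dbdist/2,0]$, the interval $[t,t+\dbdist]$ straddles $0$ while $t+\dbdist\geq \dbdist/2>0$. Splitting the numerator at $0$ and using the evenness of $\density$ turns both halves into integrals over $\mathbb{R}_+$; then the substitution $v=\ampNor u$, together with the explicit form $\density(u)=\normalInt\,e^{-\ampNor|u|}I(\ampNor|u|)$, rewrites everything in terms of $h\colon v\mapsto e^{-v}I(v)$.

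Next, setting $\alpha:=\ampNor\dbdist/2$ and reparametrising via $s:=\ampNor(t+\dbdist/2)\in[0,\alpha]$, so that $-\ampNor t=\alpha-s$ and $\ampNor(t+\dbdist)=\alpha+s$, the identity becomes
\begin{align*}
g(t)-1 = \frac{\int_0^{\alpha-s}h(v)\,dv+\int_0^{\alpha+s}h(v)\,dv}{\int_{\alpha+s}^{+\infty}h(v)\,dv}.
\end{align*}
The denominator is immediately bounded below by $\int_{2\alpha}^{+\infty}h(v)\,dv$, since $s\leq\alpha$ and $h\geq 0$. For the numerator I would introduce $\phi(s):=\int_0^{\alpha-s}h(v)\,dv+\int_0^{\alpha+s}h(v)\,dv$ and differentiate to get $\phi'(s)=h(\alpha+s)-h(\alpha-s)$. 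The function $h$ is the product of the two positive decreasing functions $v\mapsto e^{-v}$ and $I$ (the latter being decreasing on $\mathbb{R}_+^*$ since $\ratioSucTeach-1\leq 0$, as the authors already point out after defining $I_{\ratioSucTeach}$), hence $h$ itself is decreasing and $\phi'(s)\leq 0$. Consequently $\phi(s)\leq\phi(0)=2\int_0^{\alpha}h(v)\,dv$, and recombining the two bounds while noting that $2\alpha=\ampNor\dbdist$ yields exactly the claimed inequality.

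The argument is largely routine once the setup is in place; the substantive ingredients are the evenness of $\density$, which allows folding the left half of the numerator integral onto $\mathbb{R}_+$, and the monotonicity of $h$, itself coming from the monotonicity of $I$. I do not anticipate a hard step, but it is worth underlining why the statement is restricted to $[-\dbdist/2,0]$: this is precisely the range in which the reparameter $s$ remains inside $[0,\alpha]$, so that the symmetric splitting around $0$ and the inequality $\alpha+s\leq 2\alpha$ are simultaneously valid. This is also consistent with Lemma~\ref{lemma:max_of_g}, where $[-\dbdist/2,0]$ has just been identified as the interval containing the maximum of $g$ on $\mathbb{R}$.
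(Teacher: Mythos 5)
Your proof is correct and follows essentially the same route as the paper's: both bound the numerator $\cumulDensity(t+\dbdist)-\cumulDensity(t)$ above by its value at $t=-\dbdist/2$ via the monotonicity of $v\mapsto e^{-v}I(v)$ (your $\phi'(s)=h(\alpha+s)-h(\alpha-s)\leq 0$ is the paper's $\phi'(t)=\density(t+\dbdist)-\density(t)\leq 0$ after your change of variables), bound the denominator below by its value at $t=0$, and then evaluate using the evenness of $\density$ and the substitution $v=\ampNor u$. The only difference is cosmetic -- you fold by symmetry and rescale before invoking monotonicity rather than after -- so no further comparison is needed.
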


\begin{proof}
    For all $t\in [-\frac{\dbdist}{2}; 0]$,
    \begin{align*}
            g(t) = 1 + \frac{\cumulDensity(t+\dbdist)-\cumulDensity(t)}{1-\cumulDensity(t+\dbdist)}.
    \end{align*}
    
    Calling $\phi\colon t\in [-\frac{\dbdist}{2}; 0] \mapsto \cumulDensity(t+\dbdist)-\cumulDensity(t)$, we know that $\phi$ is differentiable on $[-\frac{\dbdist}{2}; 0)$ and that $\phi'\colon t\in [-\frac{\dbdist}{2}; 0) \mapsto \density(t+\dbdist)-\density(t)$.
    Since $x\in \mathbb{R}^*_+\mapsto e^{-x}I(x)$ is decreasing, we have, for all $t\in [-\frac{\dbdist}{2}; 0)$,
    \begin{align*}
        \phi'(t) \geq 0 &\Leftrightarrow e^{-\ampNor \abs{t+\dbdist}}I(\ampNor \abs{t+\dbdist}) \geq e^{-\ampNor \abs{t}}I(\ampNor \abs{t}) &\\
        &\Leftrightarrow \abs{t+\dbdist} \leq \abs{t} &\\
        &\Leftrightarrow t+\dbdist \leq -t &\text{(because $t+\dbdist \geq 0$ and $t\leq 0$)}\\
        &\Leftrightarrow t\leq -\frac{\dbdist}{2}.
    \end{align*}
    
    Since $\phi$ is continuous in $0$, we deduce that $\phi$ is decreasing on $[-\frac{\dbdist}{2}; 0]$ and then, for all $t\in [-\frac{\dbdist}{2}; 0]$, $\cumulDensity(t+\dbdist)-\cumulDensity(t) \leq \cumulDensity(\frac{\dbdist}{2})-\cumulDensity(-\frac{\dbdist}{2})$.
    Moreover, since $\cumulDensity$ is increasing, for all $t\in [-\frac{\dbdist}{2}; 0]$, $1-\cumulDensity(t+\dbdist) \geq 1-\cumulDensity(\dbdist)$.
    
    Finally, for all $t\in [-\frac{\dbdist}{2}; 0]$,
    \begin{align*}
            g(t) &\leq 1 + \frac{\cumulDensity(\frac{\dbdist}{2})-\cumulDensity(-\frac{\dbdist}{2})}{1-\cumulDensity(\dbdist)} &\\
            &= 1 + \frac{\normalInt \int_{-\frac{\dbdist}{2}}^{\frac{\dbdist}{2}}e^{-\ampNor \abs{u}}I(\ampNor \abs{u})du}{\normalInt \int_{\dbdist}^{+\infty}e^{-\ampNor \abs{u}}I(\ampNor \abs{u})du} &\\
            &= 1 + \frac{\frac{\normalInt}{\ampNor} \int_{-\frac{\ampNor \dbdist}{2}}^{\frac{\ampNor \dbdist}{2}}e^{-\abs{v}}I(\abs{v})dv}{\frac{\normalInt}{\ampNor} \int_{\ampNor \dbdist}^{+\infty}e^{-\abs{v}}I(\abs{v})dv} &\text{(by the substitutions $v \!=\! \ampNor u$)}\\
            &= 1 + \frac{\int_{-\frac{\ampNor \dbdist}{2}}^0 e^{-\abs{v}}I(\abs{v})dv + \int_0^{\frac{\ampNor \dbdist}{2}}e^{-\abs{v}}I(\abs{v})dv}{\int_{\ampNor \dbdist}^{+\infty}e^{-\abs{v}}I(\abs{v})dv} &\\
            &= 1 + \frac{\int_0^{\frac{\ampNor \dbdist}{2}} e^{-\abs{v'}}I(\abs{v'})dv' + \int_0^{\frac{\ampNor \dbdist}{2}}e^{-\abs{v}}I(\abs{v})dv}{\int_{\ampNor \dbdist}^{+\infty}e^{-\abs{v}}I(\abs{v})dv} &\text{(by the substitution $v' \!=\! -v$)}\\
            &= 1 + \frac{2\int_0^{\frac{\ampNor \dbdist}{2}}e^{-\abs{v}}I(\abs{v})dv}{\int_{\ampNor \dbdist}^{+\infty}e^{-\abs{v}}I(\abs{v})dv} &\\
            &= 1 + 2\frac{\int_0^{\frac{\ampNor \dbdist}{2}}e^{-v}I(v)dv}{\int_{\ampNor \dbdist}^{+\infty}e^{-v}I(v)dv}.
    \end{align*}
\end{proof}

\begin{proposition}
    \label{prop:maj_g_tau}
    Let us suppose that $\ratioSucTeach > \frac{1}{2}$.
    
    For all $t\in [-\frac{\dbdist}{2}; 0]$,
    \begin{align*}
            g(t) \leq g(0) - \frac{\dbdist}{2}g'(0)
    \end{align*}
    with
    \begin{align*}
        g'(0) = \ampNor \frac{\frac{\Gamma(\ratioSucTeach)^2}{2}e^{-\ampNor \dbdist}I(\ampNor \dbdist) - I(0)\int_{\ampNor \dbdist}^{+\infty}e^{-v}I(v)dv}{\left(\int_{\ampNor \dbdist}^{+\infty}e^{-v}I(v)dv\right)^2}.
    \end{align*}
\end{proposition}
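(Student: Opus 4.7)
The plan is to combine three ingredients: (i) the generalised Laplace density is regular enough at the origin when $\ratioSucTeach > \tfrac{1}{2}$ for $g$ to be differentiable at $0$ and for the claimed formula of $g'(0)$ to follow from the quotient rule; (ii) $g$ is concave on $[\argmax g,\, 0]$; and (iii) by Lemma~\ref{lemma:max_of_g} the maximiser $t^\star$ of $g$ already lies in $[-\dbdist/2,\, 0]$, so that the tangent line of $g$ at $0$ will yield a bound uniform over the whole interval $[-\dbdist/2,\, 0]$.

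First I would establish the formula for $g'(0)$. The assumption $\ratioSucTeach > \tfrac{1}{2}$ guarantees that $I(0) = \int_0^{+\infty} x^{2\ratioSucTeach-2} e^{-2x}\, dx$ is finite, since the integrand is integrable near $0$ iff $2\ratioSucTeach - 2 > -1$. Hence $\density$ extends continuously at $0$ with $\density(0) = \normalInt\, I(0)$, $\cumulDensity$ is differentiable at $0$, and so is $g$. Applying the quotient rule to $g(t) = (1 - \cumulDensity(t))/(1 - \cumulDensity(t + \dbdist))$, and substituting $1 - \cumulDensity(0) = 1/2$, $\density(\dbdist) = \normalInt e^{-\ampNor \dbdist} I(\ampNor \dbdist)$, $1 - \cumulDensity(\dbdist) = \tfrac{\normalInt}{\ampNor} \int_{\ampNor \dbdist}^{+\infty} e^{-v} I(v)\, dv$ (via $v = \ampNor u$), and $\normalInt = \ampNor / \Gamma(\ratioSucTeach)^2$, delivers the claimed expression for $g'(0)$ after straightforward simplification.

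Second, I would prove concavity of $g$ on $[\argmax g,\, 0]$. Writing $\log g(t) = -\log(1 - \cumulDensity(t)) + \log(1 - \cumulDensity(t + \dbdist))$ gives $(\log g)'(t) = \lambda(t + \dbdist) - \lambda(t)$, where $\lambda = \density / (1 - \cumulDensity)$ is the hazard rate. Since $g'' = g \bigl[ (\log g)'' + ((\log g)')^2 \bigr]$ and $g > 0$, checking $g'' \leq 0$ reduces to controlling $\lambda'(t + \dbdist) - \lambda'(t)$ against $-((\log g)')^2$. Using the explicit form $\density(u) = \normalInt e^{-\ampNor \abs{u}} I(\ampNor \abs{u})$, the monotonicity of $I$ already exploited in Lemma~\ref{lemma:log_concavity_of_I}, and the fact that $(\log g)'$ vanishes at the interior maximiser $t^\star$, one can anchor the computation at $t^\star$ and propagate the sign of $g''$ along the interval up to $0$. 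This step is the main obstacle: it hinges on a delicate log-concavity/hazard-rate argument that the sketch of Theorem~\ref{th:dp_per_query} alludes to but does not detail, and it is where most of the technical work will go.

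Finally, concavity on $[t^\star,\, 0]$ yields the tangent-line inequality $g(t) \leq g(0) + g'(0)\cdot t$ for all $t \in [t^\star,\, 0]$. From Lemma~\ref{lemma:max_of_g}, $g$ is decreasing on $\mathbb{R}_+$, so by continuity $g'(0) \leq 0$; combined with $t \geq t^\star \geq -\dbdist/2$ and $-g'(0) \geq 0$, this gives $g'(0) \cdot t \leq -\tfrac{\dbdist}{2} g'(0)$, hence $g(t) \leq g(0) - \tfrac{\dbdist}{2} g'(0)$ on $[t^\star,\, 0]$. For $t \in [-\dbdist/2,\, t^\star]$ the bound is trivial since $t^\star$ is a maximiser: $g(t) \leq g(t^\star) \leq g(0) - \tfrac{\dbdist}{2} g'(0)$ by the same inequality applied at $t = t^\star$. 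This establishes the proposition on the entire interval $[-\dbdist/2,\, 0]$.
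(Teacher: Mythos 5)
Your skeleton matches the paper's (differentiability of $g$ at $0$ when $\ratioSucTeach>\frac{1}{2}$, concavity of $g$ on $[\argmax(g);0]$, then a tangent-line bound extended to all of $[-\frac{\dbdist}{2};0]$ via Lemma~\ref{lemma:max_of_g}), and your first and third steps are correct and essentially identical to what the paper does. The gap is the second step: you explicitly defer the concavity proof (``this step is the main obstacle\dots it is where most of the technical work will go'') and the route you sketch does not obviously close. Writing $g''=g\left[(\log g)''+((\log g)')^2\right]$ reduces the claim to $\lambda'(t+\dbdist)-\lambda'(t)\le -\left(\lambda(t+\dbdist)-\lambda(t)\right)^2$ with $\lambda=\density/(1-\cumulDensity)$; this is \emph{not} a consequence of monotonicity of the hazard rate or of Lemma~\ref{lemma:log_concavity_of_I}, and you give no mechanism for ``propagating the sign of $g''$'' from $t^\star$ to $0$. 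As it stands, the central claim of the proposition is asserted rather than proved.

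The paper closes this gap by a different device. It computes $g''$ directly via the quotient rule,
\begin{align*}
g''(t) = 2g'(t)\frac{\density(t+\dbdist)}{1-\cumulDensity(t+\dbdist)} + \frac{(1-\cumulDensity(t))\density'(t+\dbdist) - (1-\cumulDensity(t+\dbdist))\density'(t)}{(1-\cumulDensity(t+\dbdist))^2},
\end{align*}
and uses the sign of $\density'$ (strictly positive on $\mathbb{R}_-^*$ and strictly negative on $\mathbb{R}_+^*$, which follows from $I'<0$) to see that for $t\in(-\dbdist;0)$ the second summand is strictly negative; hence $g'(t)\le 0$ forces $g''(t)<0$. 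Since $g'(t_{\max})=0$, this makes $g'$ strictly negative just to the right of $t_{\max}$, and a contradiction argument on $t_1=\inf\{t\in[t_{\max};t_0]\ |\ g'(t)\ge 0\}$ then propagates $g''<0$ over all of $[t_{\max};0)$, so $g'$ is decreasing there --- exactly what your tangent-line step requires. To complete your proof you would need to supply either this differential-inequality argument or an actual proof of the hazard-rate inequality you invoke; neither is present in the proposal.
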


\begin{proof}
 The result basically comes from the fact that $g$ is concave on $[\argmax(g); 0]$ which we prove hereafter.
 
 From the proof of Lemma~\ref{lemma:max_of_g} we know that $g$ is differentiable on $[-\frac{\dbdist}{2}; 0)$ and $g'\colon t\mapsto \frac{(1-\cumulDensity(t))\density(t+\dbdist) - (1-\cumulDensity(t+\dbdist))\density(t)}{(1-\cumulDensity(t+\dbdist))^2} = \frac{g(t)\density(t+\dbdist) - \density(t)}{1-\cumulDensity(t+\dbdist)}$. In the proof of Lemma~\ref{lemma:log_concavity_of_I}, we saw that $I$ is differentiable on $\mathbb{R}^*_+$ and thus $\density$ is differentiable on $\mathbb{R}^*_+$. Finally, we get that $g'$ is differentiable on $(-\dbdist; 0)$ and, for all $t\in (-\dbdist; 0)$,
 \begin{align*}
     g''(t) &= \frac{1}{(1-\cumulDensity(t+\dbdist))^2}\left[(1-\cumulDensity(t+\dbdist))[g'(t)\density(t+\dbdist) + g(t)\density'(t+\dbdist) - \density'(t)]\right.\\
     &\qquad\qquad\qquad\qquad\qquad \left. + \density(t+\dbdist)[g(t)\density(t+\dbdist)-\density(t)]\right]\\
     &= \frac{1}{(1-\cumulDensity(t+\dbdist))^2}\left[(1-\cumulDensity(t+\dbdist))[g'(t)\density(t+\dbdist) + g(t)\density'(t+\dbdist) - \density'(t)]\right.\\
     &\qquad\qquad\qquad\qquad\qquad \left. + (1-\cumulDensity(t+\dbdist))\density(t+\dbdist)g'(t)\right]\\
     &= 2g'(t)\frac{\density(t+\dbdist)}{1-\cumulDensity(t+\dbdist)} + \frac{(1-\cumulDensity(t+\dbdist))[g(t)\density'(t+\dbdist) - \density'(t)]}{(1-\cumulDensity(t+\dbdist))^2}\\
     &= 2g'(t)\frac{\density(t+\dbdist)}{1-\cumulDensity(t+\dbdist)} + \frac{(1-\cumulDensity(t))\density'(t+\dbdist) - (1-\cumulDensity(t+\dbdist))\density'(t)}{(1-\cumulDensity(t+\dbdist))^2}.
 \end{align*}
 Since $I'$ is strictly negative on $\mathbb{R}^*_+$, for all $u < 0$, $\density'(u) = \normalInt \ampNor [e^{\ampNor u}I(-\ampNor u) - e^{\ampNor u}I'(-\ampNor u)] > 0$ and, for all $u > 0$, $\density'(u) = \normalInt \ampNor [-e^{-\ampNor u}I(\ampNor u) + e^{-\ampNor u}I'(\ampNor u)] < 0$.
 Then, for all $t\in (-\dbdist; 0)$, $\density'(t) > 0$ and $\density'(t+\dbdist) < 0$ and, since $1-\cumulDensity(t) > 0$ and $1-\cumulDensity(t+\dbdist) > 0$, $(1-\cumulDensity(t))\density'(t+\dbdist) < 0$ and $(1-\cumulDensity(t+\dbdist))\density'(t) > 0$.
 We deduce that, for all $t\in (-\dbdist; 0)$,
 \begin{align}
    \label{ineq:g''_g'}
     g''(t) < 2g'(t)\frac{\density(t+\dbdist)}{1-\cumulDensity(t+\dbdist)} + \frac{(1-\cumulDensity(t))\density'(t+\dbdist)}{(1-\cumulDensity(t+\dbdist))^2}
 \end{align}
 where $2 \frac{\density(t+\dbdist)}{1-\cumulDensity(t+\dbdist)} > 0$  and $\frac{(1-\cumulDensity(t))\density'(t+\dbdist)}{(1-\cumulDensity(t+\dbdist))^2} < 0$.
 
 According to Lemma~\ref{lemma:max_of_g}, $g$ has a maximum, which is reached on $[-\frac{\dbdist}{2}; 0]$. Let $t_{max} = \argmax(g)$.
 If $t_{max} \neq 0$, we can argue that $g'(t_{max}) = 0$ and then, from Inequation~\ref{ineq:g''_g'}, $g''$ is strictly negative on a neighbourhood of $t_{max}$. This implies that $g'$ is decreasing on a neighbourhood of $(t_{max})^+$ and then strictly negative on a neighbourhood of $(t_{max})^+$.
 
 Removing the assumption that $t_{max} \neq 0$, we need to be slightly more subtle since $g'$ is not differentiable in $0$ (because $I$ is not differentiable in $0$).
 
 Since $\ratioSucTeach > \frac{1}{2}$, $v\mapsto v^{2\ratioSucTeach-2}e^{-2v}$ is integrable on $\mathbb{R}^*_+$ and we can extend the definition of $I$ to $\mathbb{R}_+$. This implies in particular that $\cumulDensity$ and then $g$ are differentiable on the whole interval $(-a; +\infty)$ (with $g'(0) = \frac{(1-\cumulDensity(0))\density(\dbdist) - (1-\cumulDensity(\dbdist))\density(0)}{(1-\cumulDensity(\dbdist))^2}$).
 Then $g'(t_{max}) = 0$ and, from Inequation~\ref{ineq:g''_g'}, $\underset{(t_{max})^+}{\lim} g'' < \frac{(1-\cumulDensity(t_{max}))\density'(t_{max}+\dbdist)}{(1-\cumulDensity(t_{max}+\dbdist))^2} < 0$. Thus $g''$ (not defined in $0$) is strictly negative on a neighbourhood of $(t_{max})^+$. Then $g'$ is strictly decreasing on a neighbourhood of $(t_{max})^+$ and, by continuity in $t_{max}$, strictly negative on a neighbourhood of $(t_{max})^+$.
 
 Let us suppose that $g''(t) \geq 0$ for a $t$ in $[t_{max}; 0)$ (trivially false if $t_{max} = 0$ since $[t_{max}; 0)$ is empty in this case). We fix such a $t$ and call it $t_0$. Then, from Inequation~\ref{ineq:g''_g'}, $g'(t_0) > 0$ and we can fix $t_1 = \inf\{t\in [t_{max}; t_0] | g'(t) \geq 0\}$. $g'$ is non-negative on a neighbourhood of $(t_1)^+$ thus $t_1 > t_{max}$. We also know that $g'$ is non-positive on $[t_{max}; t_1)$ by definition of $t_1$. This implies $g'(t_1) = 0$. Since $g'(t_1)=0$, from Inequation~\ref{ineq:g''_g'}, we know that $g''(t_1)<0$ and then $g'$ is strictly negative on a neighbourhood of $(t_1)^+$. We get a contradiction so $g''(t) < 0$ for all $t\in [t_{max}; 0)$. We deduce that $g'$ is decreasing on $[t_{max}; 0)$.
 
 Thus, for all $t\in [t_{max}; 0)$, $g'(t) \geq g'(0)$. As a consequence, since $t_{max}\leq 0$, $g(t_{max}) \leq g(0) + t_{max}g'(0)$. Besides, $t_{max} \geq -\frac{\dbdist}{2}$ and $g'(0) \leq g'(t_{max}) = 0$, thus $g(t_{max}) \leq g(0) - \frac{\dbdist}{2}g'(0)$.
 Finally, by definition of $t_{max}$, for all $t\in \mathbb{R}$,
 \begin{align*}
     g(t) \leq g(0) - \frac{\dbdist}{2}g'(0)
 \end{align*}
 with
 \begin{align*}
     g'(0) &= \frac{(1-\cumulDensity(0))\density(\dbdist) - (1-\cumulDensity(\dbdist))\density(0)}{(1-\cumulDensity(\dbdist))^2} &\\
     &= \frac{\frac{1}{2}\normalInt e^{-\ampNor \dbdist}I(\ampNor \dbdist) - \normalInt^2I(0)\int_{\dbdist}^{+\infty}e^{-\ampNor u}I(\ampNor u)du}{\left(\normalInt\int_{\dbdist}^{+\infty}e^{-\ampNor u}I(\ampNor u)du\right)^2} &\\
     &= \frac{\frac{1}{2\normalInt}e^{-\ampNor \dbdist}I(\ampNor \dbdist) - I(0)\int_{\dbdist}^{+\infty}e^{-\ampNor u}I(\ampNor u)du}{\left(\int_{\dbdist}^{+\infty}e^{-\ampNor u}I(\ampNor u)du\right)^2} &\\
     &= \frac{\frac{\Gamma(\ratioSucTeach)^2}{2\ampNor}e^{-\ampNor \dbdist}I(\ampNor \dbdist) - \frac{1}{\ampNor} I(0)\int_{\ampNor \dbdist}^{+\infty}e^{-v}I(v)dv}{\left(\frac{1}{\ampNor}\int_{\ampNor \dbdist}^{+\infty}e^{-v}I(v)dv\right)^2} &\text{(by the substitutions $v \!=\! \ampNor u$)}\\
     &= \ampNor \frac{\frac{\Gamma(\ratioSucTeach)^2}{2}e^{-\ampNor \dbdist}I(\ampNor \dbdist) - I(0)\int_{\ampNor \dbdist}^{+\infty}e^{-v}I(v)dv}{\left(\int_{\ampNor \dbdist}^{+\infty}e^{-v}I(v)dv\right)^2}.
 \end{align*}
\end{proof}

\begin{customthm}{2}
    The aggregation mechanism $\RandMech$ is $(\epsilon, 0)$-differentially private, with
    \begin{align*}
            \epsilon = \log\left(1 + 2\frac{\int_0^{\ampNor}e^{-v}I(v)dv}{\int_{2\ampNor}^{+\infty}e^{-v}I(v)dv}\right).
    \end{align*}
    Moreover, if $\ratioSucTeach > \frac{1}{2}$, g is differentiable in $0$ and $\RandMech$ is $(\epsilon', 0)$-differentially private, with
    \begin{align*}
            \epsilon' = \min \left[\epsilon, \log\left(g(0) - g'(0)\right)\right].
    \end{align*}
\end{customthm}

\begin{proof}
    Thanks to Lemma~\ref{lemma:max_of_g}, we can use Propositions~\ref{prop:maj_g_gamma} and \ref{prop:maj_g_tau} to upper bound $g$, for $\dbdist=2$. We then just have to apply Lemma~\ref{lemma:mimic_dwork} to conclude.
\end{proof}

\begin{lemma}
    \label{lemma:maj_I}
    For all $v\in \mathbb{R}^*_+$, $I(v) \leq v^{\ratioSucTeach-1}\frac{\Gamma(\ratioSucTeach)}{2^{\ratioSucTeach}}$.
\end{lemma}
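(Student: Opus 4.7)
\begin{proof_sketch}
The plan is to exploit the fact that, in the regime of interest, $\ratioSucTeach \in (0,1)$ so $\ratioSucTeach - 1 \leq 0$. This makes the function $t \mapsto (t+v)^{\ratioSucTeach-1}$ decreasing on $\mathbb{R}_+$ for fixed $v > 0$, so it is bounded above by its value at $t = 0$, namely $v^{\ratioSucTeach-1}$. Pulling this (constant in $t$) factor out of the integral defining $I(v)$ immediately yields
\begin{align*}
    I(v) \leq v^{\ratioSucTeach-1} \int_0^{+\infty} t^{\ratioSucTeach-1} e^{-2t}\, dt.
\end{align*}

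The remaining step is to evaluate the integral $\int_0^{+\infty} t^{\ratioSucTeach-1} e^{-2t}\, dt$ exactly. Using the substitution $u = 2t$, it transforms into $\frac{1}{2^{\ratioSucTeach}} \int_0^{+\infty} u^{\ratioSucTeach-1} e^{-u}\, du = \frac{\Gamma(\ratioSucTeach)}{2^{\ratioSucTeach}}$, by definition of the gamma function. Combining the two bounds gives exactly the claimed inequality $I(v) \leq v^{\ratioSucTeach-1} \frac{\Gamma(\ratioSucTeach)}{2^{\ratioSucTeach}}$.

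There is essentially no obstacle here: the only subtle point is the sign of $\ratioSucTeach - 1$, but since $\ratioSucTeach \in (0,1)$ throughout the paper (it is a ratio of teachers), the inequality $(t+v)^{\ratioSucTeach-1} \leq v^{\ratioSucTeach-1}$ on $\mathbb{R}_+$ is clean. Integrability of $t^{\ratioSucTeach-1} e^{-2t}$ near $0$ holds because $\ratioSucTeach > 0$, and near $+\infty$ is ensured by the exponential decay, so the manipulations are all justified. The lemma is really just the observation that $I(v)$ can be decoupled into a pointwise bound on the $(t+v)^{\ratioSucTeach-1}$ factor and a standard gamma integral.
\end{proof_sketch}
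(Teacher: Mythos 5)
Your proof is correct and follows exactly the same route as the paper's: bound $(t+v)^{\ratioSucTeach-1}$ by $v^{\ratioSucTeach-1}$ using $\ratioSucTeach-1\leq 0$, then evaluate the remaining integral via the substitution $u=2t$ to obtain $\frac{\Gamma(\ratioSucTeach)}{2^{\ratioSucTeach}}$. Nothing to add.
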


\begin{proof}
    Let $v\in \mathbb{R}^*_+$.
    \begin{align*}
        I(v) &= \int_{0}^{+\infty}\left(t+v\right)^{\ratioSucTeach-1}t^{\ratioSucTeach-1}e^{-2t} dt&\\
        &\leq v^{\ratioSucTeach-1} \int_{0}^{+\infty}t^{\ratioSucTeach-1}e^{-2t} dt &\text{(because $\ratioSucTeach-1\leq 0$)}\\
        &= v^{\ratioSucTeach-1} \int_{0}^{+\infty}\left(\frac{u}{2}\right)^{\ratioSucTeach-1}e^{-u} \frac{du}{2} &\text{(by the substitution $u = 2t$)}\\
        &= v^{\ratioSucTeach-1} \frac{\Gamma(\ratioSucTeach)}{2^{\ratioSucTeach}}&
    \end{align*}
\end{proof}

\begin{customprop}{3}
    For all $\ratioSucTeach\in (0, 1)$, $\underset{\ampNor\to 0}{\lim}\left[\log\left(1 + 2\frac{\int_0^{\ampNor}e^{-v}I(v)dv}{\int_{2\ampNor}^{+\infty}e^{-v}I(v)dv}\right)\right] = 0$.
\end{customprop}

\begin{proof}
    For all $v\in \mathbb{R}^*_+$, $e^{-v}I(v)>0$ thus, supposing $\ampNor\in (0, 1]$, $\int_{2\ampNor}^{+\infty}e^{-v}I(v)dv \geq \int_2^{+\infty}e^{-v}I(v)dv > 0$. Therefore, it suffices to prove that $\underset{\ampNor\to 0}{\lim}\left[\int_0^{\ampNor}e^{-v}I(v)dv\right] = 0$ to deduce the announced result.
    
    Applying Lemma~\ref{lemma:maj_I}, we get
    \begin{align*}
        \int_0^{\ampNor}e^{-v}I(v)dv &\leq \frac{\Gamma(\ratioSucTeach)}{2^{\ratioSucTeach}} \int_0^{\ampNor}e^{-v}v^{\ratioSucTeach-1}dv\\
        &\leq \frac{\Gamma(\ratioSucTeach)}{2^{\ratioSucTeach}} \int_0^{\ampNor}v^{\ratioSucTeach-1}dv\\
        &= \frac{\Gamma(\ratioSucTeach)}{2^{\ratioSucTeach}} \frac{\ampNor^{\ratioSucTeach}}{\ratioSucTeach}\\
    \end{align*}
    which gives $\underset{\ampNor\to 0}{\lim}\left[\int_0^{\ampNor}e^{-v}I(v)dv\right] = 0$.
\end{proof}

\begin{customprop}{2}
    For all $\ampNor\in \mathbb{R}_+^*$, $\underset{\ratioSucTeach\to 1}{\lim}\left[\log\left(g(0) - g'(0)\right)\right] = 2\ampNor$.
\end{customprop}

\begin{proof}
    We use the dominated convergence theorem to determine the limit of $\density$ and $\cumulDensity$ when $\ratioSucTeach$ approaches $1$.
    Let us suppose in the following that $\ratioSucTeach\in (\frac{3}{4}; 1)$.
    
    First of all, we determine the limit of $I$ and deduce the one of $\density$. Let $v\in \mathbb{R}_+$.
    
    For all $x\in (0; 1]$, $(x+v)^{\ratioSucTeach-1}x^{\ratioSucTeach-1}e^{-2x} \leq x^{2\ratioSucTeach-2}e^{-2x} \leq x^{-\frac{1}{2}}e^{-2x}$.
    As $x\mapsto x^{-\frac{1}{2}}e^{-2x}$ is integrable on $(0; 1]$, and, for all $x\in (0;1]$,\\
    $\underset{\ratioSucTeach\to 1}{\lim}\left[(x+v)^{\ratioSucTeach-1}x^{\ratioSucTeach-1}e^{-2x}\right] = e^{-2x}$, by the dominated convergence theorem we get that $\underset{\ratioSucTeach\to 1}{\lim}\left[\int_0^1(x+v)^{\ratioSucTeach-1}x^{\ratioSucTeach-1}e^{-2x}dx\right] = \int_0^1e^{-2x}dx$.
    
    Similarly, as, for all $x\in [1; +\infty)$, $(x+v)^{\ratioSucTeach-1}x^{\ratioSucTeach-1}e^{-2x} \leq e^{-2x}$ and\\
    $\underset{\ratioSucTeach\to 1}{\lim}\left[(x+v)^{\ratioSucTeach-1}x^{\ratioSucTeach-1}e^{-2x}\right] = e^{-2x}$, by the dominated convergence theorem,\\
    $\underset{\ratioSucTeach\to 1}{\lim}\left[\int_1^{+\infty}(x+v)^{\ratioSucTeach-1}x^{\ratioSucTeach-1}e^{-2x}dx\right] = \int_1^{+\infty}e^{-2x}dx$.
    
    From the two points above, we deduce that
    \begin{align*}
        \underset{\ratioSucTeach\to 1}{\lim}I(v) &= \underset{\ratioSucTeach\to 1}{\lim}\left[\int_0^1(x+v)^{\ratioSucTeach-1}x^{\ratioSucTeach-1}e^{-2x}dx + \int_1^{+\infty}(x+v)^{\ratioSucTeach-1}x^{\ratioSucTeach-1}e^{-2x}dx\right]\\
        &= \int_0^1e^{-2x}dx + \int_1^{+\infty}e^{-2x}dx\\
        &= \int_0^{+\infty}e^{-2x}dx\\
        &= \frac{1}{2}
    \end{align*}
    and, for any $u\in \mathbb{R}$, $\underset{\ratioSucTeach\to 1}{\lim}\density(u) = \underset{\ratioSucTeach\to 1}{\lim}\left[\frac{\ampNor}{\Gamma(\ratioSucTeach)^2} e^{-\ampNor \abs{u}}I(\ampNor \abs{u})\right] = \frac{1}{2}\ampNor e^{-\ampNor \abs{u}}$.
    
    Let us now determine the limit of $\cumulDensity$.
    
    Let $u_0\in [0; \frac{1}{\ampNor}]$ and $u_1\in [0; \frac{1}{\ampNor}]$ such that $u_0 < u_1$.
    According to Lemma~\ref{lemma:maj_I}, for all $u\in (u_0; u_1]$, $e^{-\ampNor u}I(\ampNor u)\leq e^{-\ampNor u}(\ampNor u)^{\ratioSucTeach-1} \frac{\Gamma(\ratioSucTeach)}{2^{\ratioSucTeach}} \leq e^{-\ampNor u}(\ampNor u)^{-\frac{1}{4}} \frac{\Gamma(\frac{3}{4})}{2^{\frac{3}{4}}}$ because $\ampNor u \leq 1$ and $\Gamma$ is decreasing on $(0; 1]$.
    Since $u\mapsto e^{-\ampNor u}(\ampNor u)^{-\frac{1}{4}} \frac{\Gamma(\frac{3}{4})}{2^{\frac{3}{4}}}$ is integrable on $(u_0; u_1]$ and, for all $u\in (u_0; u_1]$, $\underset{\ratioSucTeach\to 1}{\lim}\left[e^{-\ampNor u}I(\ampNor u)\right] = \frac{e^{-\ampNor u}}{2}$, by the dominated convergence theorem, $\underset{\ratioSucTeach\to 1}{\lim}\left[\int_{u_0}^{u_1} e^{-\ampNor u}I(\ampNor u)du\right] = \int_{u_0}^{u_1}\frac{e^{-\ampNor u}}{2}du$.
    
    Let $u_0\in [\frac{1}{\ampNor}; +\infty)$ and $u_1\in [\frac{1}{\ampNor}; +\infty) \cup \{+\infty\}$ such that $u_0 < u_1$.
    Similarly, as, for all $u\in [u_0; u_1)$, $e^{-\ampNor u}I(\ampNor u)\leq e^{-\ampNor u}(\ampNor u)^{\ratioSucTeach-1} \frac{\Gamma(\ratioSucTeach)}{2^{\ratioSucTeach}} \leq e^{-\ampNor u} \frac{\Gamma(\frac{3}{4})}{2^{\frac{3}{4}}}$.
    Since $u\mapsto e^{-\ampNor u} \frac{\Gamma(\frac{3}{4})}{2^{\frac{3}{4}}}$ is integrable on $[u_0; u_1)$ and, for all $u\in [u_0; u_1)$, $\underset{\ratioSucTeach\to 1}{\lim}\left[e^{-\ampNor u}I(\ampNor u)\right] = \frac{e^{-\ampNor u}}{2}$, by the dominated convergence theorem,\\
    $\underset{\ratioSucTeach\to 1}{\lim}\left[\int_{u_0}^{u_1} e^{-\ampNor u}I(\ampNor u)du\right] = \int_{u_0}^{u_1}\frac{e^{-\ampNor u}}{2}du$.
    
    We deduce that, whatever are the bounds $u_0\in [0; +\infty)$ and $u_1\in [0; +\infty) \cup \{+\infty\}$ with $u_0 < u_1$, $\underset{\ratioSucTeach\to 1}{\lim}\left[\int_{u_0}^{u_1} e^{-\ampNor u}I(\ampNor u)du\right] = \int_{u_0}^{u_1}\frac{e^{-\ampNor u}}{2}du$.
    By substitution, we also have $\underset{\ratioSucTeach\to 1}{\lim}\left[\int_{u_0}^{u_1} e^{\ampNor u}I(-\ampNor u)du\right] = \int_{u_0}^{u_1}\frac{e^{\ampNor u}}{2}du$ for any $u_0\in (-\infty; 0] \cup \{-\infty\}$ and $u_1\in (-\infty; 0]$ with $u_0 < u_1$.
    
    Finally, for any $u_0\in (-\infty; 0] \cup \{-\infty\}$ and $u_1\in [0; +\infty) \cup \{+\infty\}$ such that $u_0< u_1$, we have $\underset{\ratioSucTeach\to 1}{\lim}\left[\int_{u_0}^{u_1} e^{-\ampNor \abs{u}}I(\ampNor \abs{u})du\right] = \int_{u_0}^{u_1}\frac{e^{-\ampNor \abs{u}}}{2}du$.
    In particular, for all $z\in \mathbb{R}$,
    \begin{align*}
        \underset{\ratioSucTeach\to 1}{\lim}\cumulDensity(z) &= \underset{\ratioSucTeach\to 1}{\lim}(\normalInt) \times \int_{-\infty}^{z}\frac{e^{-\ampNor \abs{u}}}{2}du\\
        &= \ampNor \int_{-\infty}^{z}\frac{e^{-\ampNor \abs{u}}}{2}du\\
        &= 
        \begin{cases}
          \frac{1}{2} e^{\ampNor z} \text{ if } z < 0 \\
          1 - \frac{1}{2} e^{-\ampNor z} \text{ if } z \geq 0
        \end{cases}
    \end{align*}
    which is actually the expression of the Laplace cumulative distribution function.
    
    From what precedes we can conclude that, with $\dbdist=2$,
    \begin{align*}
        &\underset{\ratioSucTeach\to 1}{\lim}\left[g(0)-g'(0)\right]\\
        &\quad = \underset{\ratioSucTeach\to 1}{\lim}\left[\frac{1-\cumulDensity(0)}{1-\cumulDensity(2)}-\frac{(1-\cumulDensity(0))\density(2) - (1-\cumulDensity(2))\density(0)}{(1-\cumulDensity(2))^2}\right]\\
        &\quad = \frac{\frac{1}{2}}{\frac{1}{2}e^{-2\ampNor}}-\frac{1}{2}\frac{\frac{1}{2}\times \frac{1}{2}\ampNor e^{-2\ampNor} - \frac{1}{2}e^{-2\ampNor}\times \frac{1}{2}\ampNor}{(\frac{1}{2}e^{-2\ampNor})^2}\\
        &\quad = e^{2\ampNor}.
    \end{align*}
\end{proof}

\subsection{Influence of the HE layer on the DP guarantee per query}
The computation of the homomorphic argmax induces some perturbations on the noisy counts and, as such, could harm the DP guarantees that we just gave.
The three kinds of perturbations due to the HE layer are:
\begin{itemize}
    \item the addition of (Gaussian) noise at the time of TFHE encryption which is inherently probabilistic
    \item the addition of a constant value $A$ on the noisy counts to ensure that all the noisy counts are positive (with high probability) (see Section~\ref{sec:fhe_argmax})
    \item a possible mistake on the argmax if two noisy counts are too close (see Section 6 of the main paper).
\end{itemize}

While these perturbations can be seen as some postprocessing applied on the clear noisy histogram, they cannot be seen as a postprocessing on the clear noisy argmax on which we showed DP guarantees in Section~\ref{sec:dp_per_query}. Nevertheless, if we can prove that these perturbations consist of an addition of noise on the clear histogram, the upper bound on $\frac{\mathbb{P}[r\geq t]}{\mathbb{P}[r\geq t+2]}$, $r$ being the total noise (generalised Laplace noise and HE perturbations) applied to the histogram of the $n_k$'s, would still hold, leading to the same DP guarantees. The additions of Gaussian noise and constant $A$ at encryption have, by commutativity, the same effect as the addition of a sum of Gaussian noises and $nA$ after summation and they will anyway change the output of the homomorphic argmax with very low probability. However, some further work needs to be done in order to check whether the third kind of perturbation can be simulated as a noise addition on the histogram.

\subsection{Upper bound of the probability of a report noisy max mistake}
\label{sec:upperbound_proba_q}
In this subsection, we give an upper bound of the probability that $\RandMech$ outputs a wrong argmax because of the added noise following the generalised Laplace distribution.

\begin{lemma}
    \label{lemma:maj_int_exp_I}
    Let $u_0\in \mathbb{R}_+$.
    Let $q\in \left(\frac{1}{1-\ratioSucTeach}; +\infty\right)$ and $p := \frac{1}{1-\frac{1}{q}}$.
    
    We have
    \begin{align*}
        \int_{u_0}^{+\infty} e^{-\ampNor u}I(\ampNor u)du \leq \frac{\Gamma(\ratioSucTeach)}{2^\ratioSucTeach \ampNor}\frac{e^{-\ampNor u_0}}{p^{\frac{1}{p}}}\frac{(\ampNor u_0)^{\ratioSucTeach-1+\frac{1}{q}}}{[q(1-\ratioSucTeach)-1]^{\frac{1}{q}}}.
    \end{align*}
\end{lemma}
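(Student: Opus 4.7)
The plan is to reduce the integral to a standard one-dimensional integral and then apply H\"older's inequality with a carefully chosen split of exponents. First I would perform the change of variable $v = \ampNor u$ to rewrite the left-hand side as $\frac{1}{\ampNor}\int_{\ampNor u_0}^{+\infty} e^{-v} I(v)\,dv$. Then I would invoke Lemma~\ref{lemma:maj_I} (the bound $I(v) \leq v^{\ratioSucTeach-1}\Gamma(\ratioSucTeach)/2^{\ratioSucTeach}$) to replace the factor $I(v)$ by a simple power, reducing the whole problem to finding an upper bound of the form required for
\begin{equation*}
\int_{\ampNor u_0}^{+\infty} e^{-v} v^{\ratioSucTeach-1}\,dv.
\end{equation*}

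The main step is to apply H\"older's inequality with the conjugate pair $(p,q)$, writing the integrand as the product of $e^{-v}$ and $v^{\ratioSucTeach-1}$:
\begin{equation*}
\int_{\ampNor u_0}^{+\infty} e^{-v} v^{\ratioSucTeach-1}\,dv \;\leq\; \left(\int_{\ampNor u_0}^{+\infty} e^{-pv}\,dv\right)^{1/p}\left(\int_{\ampNor u_0}^{+\infty} v^{q(\ratioSucTeach-1)}\,dv\right)^{1/q}.
\end{equation*}
The first integral is elementary, equal to $e^{-p\ampNor u_0}/p$, whose $1/p$-th power contributes the factor $e^{-\ampNor u_0}/p^{1/p}$. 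For the second integral, the hypothesis $q > 1/(1-\ratioSucTeach)$ is exactly what guarantees $q(\ratioSucTeach-1) < -1$, so the integral converges at infinity and evaluates to $(\ampNor u_0)^{q(\ratioSucTeach-1)+1}/(-q(\ratioSucTeach-1)-1)$. Using the algebraic identity $-q(\ratioSucTeach-1)-1 = q(1-\ratioSucTeach)-1$ and taking the $1/q$-th power produces the factor $(\ampNor u_0)^{\ratioSucTeach-1+1/q}/[q(1-\ratioSucTeach)-1]^{1/q}$.

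The only real thing to verify is that the hypothesis $q \in \bigl(1/(1-\ratioSucTeach),+\infty\bigr)$ plays a dual role: it makes $p = 1/(1-1/q)$ well defined and positive (since $q > 1$), and it simultaneously ensures convergence of the second integral at $+\infty$. There is no genuine obstacle beyond identifying the right H\"older split; once the three ingredients---substitution, Lemma~\ref{lemma:maj_I}, and H\"older---are combined and the prefactor $\Gamma(\ratioSucTeach)/(2^{\ratioSucTeach}\ampNor)$ from the first two steps is carried through, the announced bound follows directly.
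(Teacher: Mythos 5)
Your proposal is correct and follows essentially the same route as the paper's proof: bound $I$ via Lemma~\ref{lemma:maj_I}, substitute $v=\ampNor u$, and apply H\"older's inequality with the conjugate pair $(p,q)$ to split $e^{-v}v^{\ratioSucTeach-1}$, using $q>\frac{1}{1-\ratioSucTeach}$ to guarantee convergence of the power-law factor. The only difference is the immaterial ordering of the substitution and the application of Lemma~\ref{lemma:maj_I}.
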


\begin{proof}
    Let $u_0\in \mathbb{R}_+$.
    Let $(p, q)\in \left(\mathbb{R}_+^*\right)^2$ such that $\frac{1}{p} + \frac{1}{q} = 1$ and $q > \frac{1}{1-\ratioSucTeach}$.
    \begin{align*}
        &\int_{u_0}^{+\infty} e^{-\ampNor u}I(\ampNor u)du&\\
        &\quad\leq \frac{\Gamma(\ratioSucTeach)}{2^\ratioSucTeach}\int_{u_0}^{+\infty} e^{-\ampNor u}(\ampNor u)^{\ratioSucTeach-1}du &\text{(according to Lemma~\ref{lemma:maj_I})}\\
        &\quad= \frac{\Gamma(\ratioSucTeach)}{2^\ratioSucTeach \ampNor}\int_{\ampNor u_0}^{+\infty} e^{-v}v^{\ratioSucTeach-1}dv &\text{(by the substitution $v = \ampNor u$)}
    \end{align*}
    By assumption, $q > \frac{1}{1-\ratioSucTeach}$ so, since $\ratioSucTeach < 1$, $q(\ratioSucTeach-1) < -1$ and then $v\in \mathbb{R}_+^* \mapsto v^{q(\ratioSucTeach-1)}$ is integrable in the neighbourhood of $+\infty$. Then we can apply Hölder's inequality in the following manner:
    \begin{align*}
        &\int_{u_0}^{+\infty} e^{-\ampNor u}I(\ampNor u)du&\\
        &\quad\leq \frac{\Gamma(\ratioSucTeach)}{2^\ratioSucTeach \ampNor}\left(\int_{\ampNor u_0}^{+\infty} e^{-pv}dv\right)^\frac{1}{p}\left(\int_{\ampNor u_0}^{+\infty} v^{q(\ratioSucTeach-1)}dv\right)^\frac{1}{q}&\\
        &\quad= \frac{\Gamma(\ratioSucTeach)}{2^\ratioSucTeach \ampNor} \times \left(\frac{e^{-p\ampNor u_0}}{p}\right)^\frac{1}{p} \times \left(\frac{-(\ampNor u_0)^{q(\ratioSucTeach-1)+1}}{(q(\ratioSucTeach-1)+1)}\right)^{\frac{1}{q}}&\\
        &\quad= \frac{\Gamma(\ratioSucTeach)}{2^\ratioSucTeach \ampNor} \times \frac{e^{-\ampNor u_0}}{p^{\frac{1}{p}}} \times \frac{(\ampNor u_0)^{\ratioSucTeach-1+\frac{1}{q}}}{[q(1-\ratioSucTeach)-1]^{\frac{1}{q}}}.
    \end{align*}
\end{proof}

\begin{lemma}
    \label{lemma:upperbound_proba_q}
    Let us consider a query $Q$.
    Let $k^*\in [K]$ be the unnoisy argmax (for all $k\in [K]$, $n_{k^*}\geq n_k$). For all $k\in [K]$, we define $\Delta_k := n_{k^*} - n_k \geq 0$. Then, for all $q\in (\frac{1}{1-\ratioSucTeach}; +\infty)$, calling $p:=\frac{1}{1-\frac{1}{q}}$,
    \begin{align*}
        \mathbb{P}[\RandMech(d, Q)\neq k^*] \leq \sum_{k\neq k^*}e^{-\ampNor \Delta_k}\left[\frac{1}{2} + \frac{1}{\ratioSucTeach 2^{4\ratioSucTeach-2+\frac{1}{q}} \Gamma(\ratioSucTeach)^2} \times \frac{(\ampNor \Delta_k)^{2\ratioSucTeach-1+\frac{1}{q}}}{p^{\frac{1}{p}}[q(1-\ratioSucTeach)-1]^{\frac{1}{q}}}\right].
    \end{align*}
\end{lemma}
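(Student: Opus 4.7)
\textit{Plan.} I would start with the union bound
\begin{align*}
\mathbb{P}[\RandMech(d, Q) \neq k^*] \leq \sum_{k \neq k^*} \mathbb{P}(n_k + Y_k \geq n_{k^*} + Y_{k^*}),
\end{align*}
reducing the problem to bounding each summand by $e^{-\gamma \Delta_k}$ times the bracketed quantity. Fix $k \neq k^*$; conditioning on $Y_k = t$ and exploiting independence yields $\mathbb{P}(Y_{k^*} \leq Y_k - \Delta_k) = \int_{\mathbb{R}} f(t) F(t - \Delta_k)\, dt$, which I split at $t = 0$ and $t = \Delta_k$ into three pieces according to the piecewise form of $f$ and the position of $F$ relative to $1/2$.

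For the two \emph{outer pieces}, the crux is a shift inequality: since $I$ is decreasing on $\mathbb{R}_+^*$ (as established inside the proof of Lemma~\ref{lemma:log_concavity_of_I}), for every $u \geq 0$ one has $f(u + \Delta_k) \leq e^{-\gamma \Delta_k} f(u)$ and $\mathbb{P}(Y \geq u + \Delta_k) \leq e^{-\gamma \Delta_k}\, \mathbb{P}(Y \geq u)$. Substituting $u = t - \Delta_k$ in the right piece $(\Delta_k, +\infty)$ would give $\int_0^\infty f(u + \Delta_k) F(u)\, du \leq e^{-\gamma \Delta_k} \int_0^\infty f(u) F(u)\, du = e^{-\gamma \Delta_k}\int_{1/2}^1 v\, dv = \frac{3}{8} e^{-\gamma \Delta_k}$, via the change of variable $v = F(u)$. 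Symmetrically, substituting $u = -t$ and using $f(-u) = f(u)$ together with $F(-s) = 1 - F(s)$, the left piece $(-\infty, 0)$ becomes $\int_0^\infty f(u) \mathbb{P}(Y \geq u + \Delta_k)\, du \leq e^{-\gamma \Delta_k} \int_{1/2}^1 (1 - v)\, dv = \frac{1}{8} e^{-\gamma \Delta_k}$. The two outer pieces together contribute exactly the $\frac{1}{2} e^{-\gamma \Delta_k}$ in the stated bound.

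The \emph{middle piece} $\int_0^{\Delta_k} f(t) F(t - \Delta_k)\, dt$ is the heart of the proof. For $t \in (0, \Delta_k)$ I would bound $f(t) \leq \frac{\gamma^\tau t^{\tau - 1}}{\Gamma(\tau) 2^\tau} e^{-\gamma t}$ via Lemma~\ref{lemma:maj_I}, and rewrite $F(t - \Delta_k) = \frac{1}{\Gamma(\tau)^2}\int_{\gamma(\Delta_k - t)}^{+\infty} e^{-v} I(v)\, dv$ by symmetry, then invoke Lemma~\ref{lemma:maj_int_exp_I} (after rescaling the dummy variable) to obtain $F(t - \Delta_k) \leq \frac{(\gamma(\Delta_k - t))^{\tau - 1 + 1/q}}{\Gamma(\tau) 2^\tau p^{1/p}[q(1-\tau)-1]^{1/q}}\, e^{-\gamma(\Delta_k - t)}$. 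Multiplying the two bounds collapses the exponentials into a common $e^{-\gamma \Delta_k}$, leaving the beta-type integrand $t^{\tau - 1}(\Delta_k - t)^{\tau - 1 + 1/q}$ (times constants). Because the hypothesis $q > 1/(1-\tau)$ makes both exponents $\tau - 1$ and $\tau - 1 + 1/q$ strictly negative, splitting the integral at $t = \Delta_k/2$ and bounding the factor of larger base by its midpoint value on each half yields $\int_0^{\Delta_k} t^{\tau - 1}(\Delta_k - t)^{\tau - 1 + 1/q}\, dt \leq \frac{\Delta_k^{2\tau - 1 + 1/q}}{\tau \cdot 2^{2\tau - 2 + 1/q}}$. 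Collecting all constants reproduces the second bracketed term precisely.

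\emph{Main obstacle.} The most delicate step is the middle-piece bookkeeping: combining the two Hölder-flavoured estimates from Lemmas~\ref{lemma:maj_I} and~\ref{lemma:maj_int_exp_I} with the midpoint-split beta estimate so that the final denominator reads $\tau \cdot 2^{4\tau - 2 + 1/q}\,\Gamma(\tau)^2 p^{1/p}[q(1-\tau)-1]^{1/q}$ exactly (rather than up to some spurious multiplicative factor). The outer-piece bound is by contrast strikingly universal: it follows from the shift inequality alone (a direct consequence of $I$ being decreasing) together with the elementary identities $\int_0^\infty f F\, du = 3/8$ and $\int_0^\infty f(1 - F)\, du = 1/8$, which are valid for any symmetric probability density on $\mathbb{R}$ and require no knowledge of the precise shape of $f$.
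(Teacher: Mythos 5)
Your proposal is correct and follows essentially the same route as the paper's proof: the union bound, the three-way split of $\int f(t)F(t-\Delta_k)\,dt$ at $0$ and $\Delta_k$, the outer pieces handled via the shift inequality coming from $I$ decreasing (yielding $\tfrac38 e^{-\gamma\Delta_k}$ and $\tfrac18 e^{-\gamma\Delta_k}$), and the middle piece via Lemmas~\ref{lemma:maj_I} and~\ref{lemma:maj_int_exp_I} followed by the midpoint-split bound on the beta-type integral. All constants check out against the stated bound.
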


\begin{proof}
    In the following, we will assume that $\Delta_k > 0$ and the upper bound for $\Delta_k=0$ is obtained by continuity.
    
    For any $k\in [K]$, let us denote $Y_k$ the random variable following the generalised Laplace distribution generated by the sum of the $\tau n$ individual noises.
    
    Let $k\in [K]$.
    \begin{align*}
        &\mathbb{P}(n_k+Y_k \geq n_{k^*} + Y_{k^*})&\\
        &\quad= \mathbb{P}(Y_{k^*} \leq Y_k - \Delta_k)&\\
        &\quad= \int_{-\infty}^{+\infty} \density(t) \cumulDensity(t-\Delta_k)dt&\\
        &\quad= \int_{-\infty}^0 \density(t) \cumulDensity(t-\Delta_k)dt + \int_0^{\Delta_k} \density(t) \cumulDensity(t-\Delta_k)dt&\\
        &\quad\qquad + \int_{\Delta_k}^{+\infty} \density(t) \cumulDensity(t-\Delta_k)dt &\numberthis \label{eq:P_split_integrals}
    \end{align*}
    
    We will now upper bound each one of the three above integrals separately.
    The two extreme integrals can be nicely bounded by decreasing exponentials in $\Delta_k$:
    \begin{align*}
        &\int_{\Delta_k}^{+\infty} \density(t) \cumulDensity(t-\Delta_k)dt&\\
        &\quad= \int_{0}^{+\infty} \density(v+\Delta_k) \cumulDensity(v)dv &\text{(by the substitution $v = t - \Delta_k$)}\\
        &\quad= \normalInt \int_{0}^{+\infty} e^{-\ampNor\abs{v+\Delta_k}}I(\ampNor \abs{v+\Delta_k}) \cumulDensity(v)dv&\\
        &\quad= \normalInt \int_{0}^{+\infty} e^{-\ampNor(v+\Delta_k)}I(\ampNor(v+\Delta_k)) \cumulDensity(v)dv&\\
        &\quad= \normalInt e^{-\ampNor\Delta_k}\int_{0}^{+\infty} e^{-\ampNor v}I(\ampNor(v+\Delta_k)) \cumulDensity(v)dv&\\
        &\quad \leq \normalInt e^{-\ampNor\Delta_k}\int_{0}^{+\infty} e^{-\ampNor v}I(\ampNor v) \cumulDensity(v)dv &\text{(because $I$ is decreasing)}\\
        &\quad= \normalInt e^{-\ampNor\Delta_k}\int_{0}^{+\infty} e^{-\ampNor \abs{v}}I(\ampNor \abs{v}) \cumulDensity(v)dv&\\
        &\quad= e^{-\ampNor\Delta_k}\int_{0}^{+\infty} \density(v) \cumulDensity(v)dv&\\
        &\quad= e^{-\ampNor \Delta_k}\times \frac{\underset{+\infty}{\lim}F^2 - F(0)^2}{2}&\\
        &\quad= e^{-\ampNor \Delta_k}\times \frac{1 - \frac{1}{4}}{2}&\\
        &\quad= \frac{3}{8}e^{-\ampNor \Delta_k} \numberthis \label{eq:first_integral}
    \end{align*}
    and
    
    \begin{align*}
        &\int_{-\infty}^0 \density(t) \cumulDensity(t-\Delta_k)dt&\\
        &\quad= \normalInt \int_{-\infty}^0 \density(t) \int_{-\infty}^{t-\Delta_k}e^{-\ampNor \abs{u}}I(\ampNor \abs{u})dudt&\\
        &\quad= \normalInt \int_{-\infty}^0 \density(t) \int_{-\infty}^{t-\Delta_k}e^{\ampNor u}I(-\ampNor u)dudt&\\
        &\quad= \normalInt \int_{-\infty}^0 \density(t) \int_{-\infty}^{t}e^{\ampNor (v-\Delta_k)}I(\ampNor(\Delta_k-v))dudt\\
        &\quad \qquad\qquad\qquad\qquad \text{(by the substitution $v = u + \Delta_k$)} &\\
        &\quad= \normalInt e^{-\ampNor \Delta_k} \int_{-\infty}^0 \density(t) \int_{-\infty}^{t}e^{\ampNor v}I(\ampNor(\Delta_k-v))dudt&\\
        &\quad\leq \normalInt e^{-\ampNor \Delta_k} \int_{-\infty}^0 \density(t) \int_{-\infty}^{t}e^{\ampNor v}I(-\ampNor v)dudt &\text{(because $I$ is decreasing)}\\
        &\quad= \normalInt e^{-\ampNor \Delta_k} \int_{-\infty}^0 \density(t) \int_{-\infty}^{t}e^{-\ampNor \abs{v}}I(\ampNor \abs{v})dudt&\\
        &\quad= e^{-\ampNor \Delta_k} \int_{-\infty}^0 \density(t) \cumulDensity(t)dt&\\
        &\quad= e^{-\ampNor \Delta_k} \times \frac{\cumulDensity(0)^2 - \underset{-\infty}{\lim}F^2}{2}&\\
        &\quad= \frac{1}{8}e^{-\ampNor \Delta_k}. \numberthis \label{eq:second_integral}
    \end{align*}
    
    As for the middle integral, we have
    \begin{align*}
        &\int_0^{\Delta_k} \density(t) \cumulDensity(t-\Delta_k)dt&\\
        &\quad= \normalInt \int_0^{\Delta_k} \density(t) \int_{-\infty}^{t-\Delta_k} e^{-\ampNor \abs{u}} I(\ampNor \abs{u})dudt&\\
        &\quad= \normalInt \int_0^{\Delta_k} \density(t) \int_{\Delta_k-t}^{+\infty} e^{-\ampNor \abs{v}} I(\ampNor \abs{v})dvdt &\text{(by the substitution $v = -u$)}\\
        &\quad= \normalInt \int_0^{\Delta_k} \density(t) \int_{\Delta_k-t}^{+\infty} e^{-\ampNor v} I(\ampNor v)dvdt.
    \end{align*}
    
    Since, for all $t\in [0; \Delta_k]$, $0\leq \Delta_k-t$, we can apply Lemma~\ref{lemma:maj_int_exp_I}.
    Let $q\in \left(\frac{1}{1-\ratioSucTeach}; +\infty\right)$ and $p = \frac{1}{1-\frac{1}{q}}$.
    We have, for all $t\in (0; \Delta_k)$, $\int_{\Delta_k-t}^{+\infty} e^{-\ampNor v} I(\ampNor v)dv \leq \frac{\Gamma(\ratioSucTeach)}{2^\ratioSucTeach \ampNor} \times \frac{1}{p^{\frac{1}{p}}[q(1-\ratioSucTeach)-1]^{\frac{1}{q}}} \times e^{-\ampNor (\Delta_k-t)} [\ampNor (\Delta_k-t)]^{\ratioSucTeach-1+\frac{1}{q}}$. Since $\ratioSucTeach-1+\frac{1}{q} > -1$, $t\mapsto [\ampNor (\Delta_k-t)]^{\ratioSucTeach-1+\frac{1}{q}}$ is integrable on a neighbourhood of $(\Delta_k)^-$ and then, since $t\mapsto \density(t) e^{-\ampNor (\Delta_k-t)}$ is bounded on a neighbourhood of $\Delta_k$, $t\mapsto \density(t) e^{-\ampNor (\Delta_k-t)}[\ampNor (\Delta_k-t)]^{\ratioSucTeach-1+\frac{1}{q}}$ is integrable on a neighbourhood of $(\Delta_k)^-$.
    
    Thus, we can write
    \begin{align*}
        &\int_0^{\Delta_k} \density(t) \cumulDensity(t-\Delta_k)dt&\\
        &\quad\leq \normalInt \frac{\Gamma(\ratioSucTeach)}{2^\ratioSucTeach \ampNor} \times \frac{1}{p^{\frac{1}{p}}[q(1-\ratioSucTeach)-1]^{\frac{1}{q}}} \times \int_0^{\Delta_k} \density(t) e^{-\ampNor (\Delta_k-t)} [\ampNor (\Delta_k-t)]^{\ratioSucTeach-1+\frac{1}{q}}dt&\\
        &\quad= \normalInt^2 \frac{\Gamma(\ratioSucTeach)}{2^\ratioSucTeach \ampNor} \times \frac{1}{p^{\frac{1}{p}}[q(1-\ratioSucTeach)-1]^{\frac{1}{q}}}\\
        &\quad \qquad\qquad\qquad \times \int_0^{\Delta_k} e^{-\ampNor \abs{t}} I(\ampNor \abs{t}) e^{-\ampNor (\Delta_k-t)} [\ampNor (\Delta_k-t)]^{\ratioSucTeach-1+\frac{1}{q}}dt&\\
        &\quad= \frac{\ampNor}{2^\ratioSucTeach \Gamma(\ratioSucTeach)^3} \times \frac{1}{p^{\frac{1}{p}}[q(1-\ratioSucTeach)-1]^{\frac{1}{q}}}\\
        &\quad \qquad\qquad\qquad \times \int_0^{\Delta_k} e^{-\ampNor t} I(\ampNor t) e^{-\ampNor (\Delta_k-t)} [\ampNor (\Delta_k-t)]^{\ratioSucTeach-1+\frac{1}{q}}dt&\\
        &\quad=  \frac{e^{-\ampNor \Delta_k}}{2^\ratioSucTeach \Gamma(\ratioSucTeach)^3} \times \frac{\ampNor}{p^{\frac{1}{p}}[q(1-\ratioSucTeach)-1]^{\frac{1}{q}}} \times \int_0^{\Delta_k} I(\ampNor t)[\ampNor (\Delta_k-t)]^{\ratioSucTeach-1+\frac{1}{q}}dt.
    \end{align*}
    
    $t\mapsto (\ampNor t)^{\ratioSucTeach-1}$ is integrable on a neighbourhood of $0^+$ because $\ratioSucTeach - 1 > -1$. Therefore, $t\mapsto (\ampNor t)^{\ratioSucTeach-1} \frac{\Gamma(\ratioSucTeach)}{2^\ratioSucTeach}[\ampNor (\Delta_k-t)]^{\ratioSucTeach-1+\frac{1}{q}}$ is integrable on $(0; \Delta_k)$ so we can apply Lemma~\ref{lemma:maj_I}:
    \begin{align*}
        &\int_0^{\Delta_k} \density(t) \cumulDensity(t-\Delta_k)dt&\\
        &\quad\leq \frac{e^{-\ampNor \Delta_k}}{2^\ratioSucTeach \Gamma(\ratioSucTeach)^3} \times \frac{\ampNor}{p^{\frac{1}{p}}[q(1-\ratioSucTeach)-1]^{\frac{1}{q}}} \times \int_0^{\Delta_k} (\ampNor t)^{\ratioSucTeach-1} \frac{\Gamma(\ratioSucTeach)}{2^\ratioSucTeach}[\ampNor (\Delta_k-t)]^{\ratioSucTeach-1+\frac{1}{q}}dt &\\
        &\quad= \frac{e^{-\ampNor \Delta_k}}{2^{2\ratioSucTeach} \Gamma(\ratioSucTeach)^2} \times \frac{\ampNor \Delta_k}{p^{\frac{1}{p}}[q(1-\ratioSucTeach)-1]^{\frac{1}{q}}} \times \int_0^1 (\ampNor \Delta_k u)^{\ratioSucTeach-1} [\ampNor (\Delta_k-\Delta_k u)]^{\ratioSucTeach-1+\frac{1}{q}}du\\
        &\quad \qquad\qquad\qquad\qquad\qquad\qquad\qquad\qquad\qquad\qquad \text{(by the substitution $u = \frac{t}{\Delta_k}$)}\\
        &\quad= \frac{e^{-\ampNor \Delta_k}}{2^{2\ratioSucTeach} \Gamma(\ratioSucTeach)^2} \times \frac{(\ampNor \Delta_k)^{2\ratioSucTeach-1+\frac{1}{q}}}{p^{\frac{1}{p}}[q(1-\ratioSucTeach)-1]^{\frac{1}{q}}} \times \int_0^1 u^{\ratioSucTeach-1} (1-u)^{\ratioSucTeach-1+\frac{1}{q}}du.
    \end{align*}
    
    Note that
    \begin{align*}
        &\int_0^1 u^{\ratioSucTeach-1} (1-u)^{\ratioSucTeach-1+\frac{1}{q}}du&\\
        &\quad= \int_0^{\frac{1}{2}} u^{\ratioSucTeach-1} (1-u)^{\ratioSucTeach-1+\frac{1}{q}}du + \int_{\frac{1}{2}}^1 u^{\ratioSucTeach-1} (1-u)^{\ratioSucTeach-1+\frac{1}{q}}du&\\
        &\quad\leq \int_0^{\frac{1}{2}} u^{\ratioSucTeach-1} \frac{1}{2^{\ratioSucTeach-1+\frac{1}{q}}}du + \int_{\frac{1}{2}}^1 \frac{1}{2^{\ratioSucTeach-1}} (1-u)^{\ratioSucTeach-1+\frac{1}{q}}du\\
        &\quad \qquad\qquad\qquad\qquad\qquad\qquad\qquad\qquad \text{(because $\ratioSucTeach-1+\frac{1}{q} < 0$ and $\ratioSucTeach-1 < 0$)}\\
        &\quad= \frac{1}{2^{\ratioSucTeach-1+\frac{1}{q}}}\int_0^{\frac{1}{2}} u^{\ratioSucTeach-1} du +  \frac{1}{2^{\ratioSucTeach-1}} \int_0^{\frac{1}{2}} v^{\ratioSucTeach-1+\frac{1}{q}}dv\\
        &\quad \qquad\qquad\qquad\qquad\qquad\qquad\qquad\qquad\qquad\quad\,\, \text{(by the substitution $v = 1-u$)}\\
        &\quad= \frac{1}{2^{\ratioSucTeach-1+\frac{1}{q}}} \times \frac{1}{\ratioSucTeach 2^{\ratioSucTeach}} + \frac{1}{2^{\ratioSucTeach-1}} \times \frac{1}{(\ratioSucTeach+\frac{1}{q})2^{\ratioSucTeach+\frac{1}{q}}}&\\
        &\quad= \frac{1}{2^{2\ratioSucTeach-1+\frac{1}{q}}} \left(\frac{1}{\ratioSucTeach} + \frac{1}{\ratioSucTeach+\frac{1}{q}}\right)&\\
        &\quad\leq \frac{1}{\ratioSucTeach 2^{2\ratioSucTeach-2+\frac{1}{q}}}.
    \end{align*}
    
    Therefore
    \begin{align}
        \label{eq:third_integral}
        \int_0^{\Delta_k} \density(t) \cumulDensity(t-\Delta_k)dt \leq \frac{e^{-\ampNor \Delta_k}}{\ratioSucTeach 2^{4\ratioSucTeach-2+\frac{1}{q}} \Gamma(\ratioSucTeach)^2} \times \frac{(\ampNor \Delta_k)^{2\ratioSucTeach-1+\frac{1}{q}}}{p^{\frac{1}{p}}[q(1-\ratioSucTeach)-1]^{\frac{1}{q}}}.
    \end{align}
    
    Using \ref{eq:P_split_integrals}, \ref{eq:first_integral}, \ref{eq:second_integral} and \ref{eq:third_integral}, we get
    \begin{align*}
        \mathbb{P}(n_k+Y_k \geq n_{k^*} + Y_{k^*}) \leq e^{-\ampNor \Delta_k}\left[\frac{1}{2} + \frac{1}{\ratioSucTeach 2^{4\ratioSucTeach-2+\frac{1}{q}} \Gamma(\ratioSucTeach)^2} \times \frac{(\ampNor \Delta_k)^{2\ratioSucTeach-1+\frac{1}{q}}}{p^{\frac{1}{p}}[q(1-\ratioSucTeach)-1]^{\frac{1}{q}}}\right].
    \end{align*}
    
    The overall upper bound for $\mathbb{P}[\RandMech(d;Q)\neq k^*]$ is obtained using the fact that the event $(\RandMech(d;Q)\neq k^*)$ is the union of the events $(n_k+Y_k \geq n_{k^*} + Y_{k^*})$, for $k\in [K]\setminus \{k^*\}$, and then $\mathbb{P}[\RandMech(d;Q)\neq k^*] \leq \sum_{k\neq k^*} \mathbb{P}(n_k+Y_k \geq n_{k^*} + Y_{k^*})$.
\end{proof}

\begin{customprop}{4}
    If $\ratioSucTeach \in (\frac{1}{2}; 1)$,
    \begin{align*}
        \mathbb{P}[\RandMech(d;Q)\neq k^*] \leq \sum_{k\neq k^*} e^{-\ampNor \Delta_k} \left[\frac{1}{2} + \frac{(\ampNor \Delta_k)^{2\ratioSucTeach-1}}{\ratioSucTeach 2^{4\ratioSucTeach-2} \Gamma(\ratioSucTeach)^2}\right].
    \end{align*}
    If $\ratioSucTeach \in (0; \frac{1}{2}]$,
    \begin{align*}
        \mathbb{P}[\RandMech(d;Q)\neq k^*] \leq \sum_{k\neq k^*} e^{-\ampNor \Delta_k}\left[\frac{1}{2} + \frac{(\ampNor \Delta_k)^{\frac{\ratioSucTeach}{2}}}{\ratioSucTeach 2^{\frac{5}{2}\ratioSucTeach-1} \Gamma(\ratioSucTeach)^2} \times \left(\frac{3}{2}\ratioSucTeach\right)^{\frac{3}{2}\ratioSucTeach}\left(\frac{2}{\ratioSucTeach}-3\right)^{1-\frac{3}{2}\ratioSucTeach}\right].
    \end{align*}
\end{customprop}

\begin{proof}
    Let us distinct two cases according to the value of $\ratioSucTeach$.
    
    \textbf{\underline{First case}:} $\tau > \frac{1}{2}$
    
    Taking the limit when $q$ approaches $+\infty$ in~\ref{eq:third_integral} (which actually amounts to substitute $v^{\ratioSucTeach-1}$ by its upper bound $(\ampNor u_0)^{\ratioSucTeach-1}$ in the integral $\int_ {\ampNor u_0}^{+\infty}e^{-v}v^{\ratioSucTeach-1}dv$ of the proof of Lemma~\ref{lemma:maj_int_exp_I}, without needing Hölder's inequality), we get
    \begin{align*}
        \mathbb{P}[\RandMech(d;Q)\neq k^*] \leq \sum_{k\neq k^*} e^{-\ampNor \Delta_k} \left[\frac{1}{2} + \frac{(\ampNor \Delta_k)^{2\ratioSucTeach-1}}{\ratioSucTeach 2^{4\ratioSucTeach-2} \Gamma(\ratioSucTeach)^2}\right].
    \end{align*}
    
    \textbf{\underline{Second case}:} $\tau \leq \frac{1}{2}$
    
    By convention, if $\tau = \frac{1}{2}$, we have $\frac{1}{1-2\ratioSucTeach} = +\infty$.
    
    We take $q < \frac{1}{1-2\ratioSucTeach}$ (it is possible since $\frac{1}{1-2\ratioSucTeach} > \frac{1}{1-\ratioSucTeach}$) and write $q = \frac{1}{1-2\ratioSucTeach + \epsilon}$, with $0 < \epsilon < \ratioSucTeach$. Then, $\frac{1}{p} = 1-\frac{1}{q} = 2\ratioSucTeach-\epsilon$ and we get
    \begin{align*}
        \mathbb{P}[\RandMech(d;Q)\neq k^*]\leq \sum_{k\neq k^*} e^{-\ampNor \Delta_k}\left[\frac{1}{2} + \frac{(2\ratioSucTeach-\epsilon)^{2\ratioSucTeach-\epsilon}}{\ratioSucTeach 2^{2\ratioSucTeach-1+\epsilon} \Gamma(\ratioSucTeach)^2} \times \left(\frac{1-2\ratioSucTeach+\epsilon}{\ratioSucTeach-\epsilon}\right)^{1-2\ratioSucTeach+\epsilon} \times (\ampNor \Delta_k)^{\epsilon}\right].
    \end{align*}
    
    For example, with $\epsilon=\frac{\ratioSucTeach}{2}$ (i.e. $q=\frac{1}{1-\frac{3}{2}\ratioSucTeach}$), we have
    \begin{align*}
        \mathbb{P}[\RandMech(d;Q)\neq k^*]\leq \sum_{k\neq k^*} e^{-\ampNor \Delta_k}\left[\frac{1}{2} + \frac{1}{\ratioSucTeach 2^{\frac{5}{2}\ratioSucTeach-1} \Gamma(\ratioSucTeach)^2} \times \left(\frac{3}{2}\ratioSucTeach\right)^{\frac{3}{2}\ratioSucTeach}\left(\frac{2}{\ratioSucTeach}-3\right)^{1-\frac{3}{2}\ratioSucTeach} \times (\ampNor \Delta_k)^{\frac{\ratioSucTeach}{2}}\right].
    \end{align*}
\end{proof}

Note that, whatever is the value of $\ratioSucTeach \in (0; 1)$, our upper bound of $\mathbb{P}(n_k+Y_k \geq n_{k^*} + Y_{k^*})$ tends to $0$ when $\Delta_k$ approaches $+\infty$ which follows the intuition that $\mathbb{P}(n_k+Y_k \geq n_{k^*} + Y_{k^*})$ tends to $0$ when the true argmax $k^*$ has a much higher count than $k$. The upper bound tends to $\frac{1}{2}$ when $\Delta_k$ approaches $0$, which is consistent with the actual value of the probability $\mathbb{P}(n_k+Y_k \geq n_{k^*} + Y_{k^*})$ when the counts $n_{k^*}$ and $n_k$ are equal.

Similarly, the upper bound tends to $0$ when $\ampNor$ tends to $+\infty$ and to $\frac{1}{2}$ when $\ampNor$ approaches $0$. These are the expected values of the probability $\mathbb{P}(n_k+Y_k \geq n_{k^*} + Y_{k^*})$ when there is no noise or an infinitely wide noise respectively.

Finally, let us remark that we recover the upper bound $\mathbb{P}[\RandMech(d;Q)\neq k^*] \leq \sum_{k\neq k^*} \frac{2 + \ampNor \Delta_k}{4e^{\ampNor \Delta_k}}$ from~\cite{papernot2016semisupervised} (obtained with a centralised Laplace noise) when we consider the limit when $\ratioSucTeach$ tends to $1$.

\paragraph{Remark.} The data-dependent bound $\alpha_{\RandMech}(l;\aux,d,d') \!\leq\! \log\! \left(\!(1 \!-\! q)\!\left(\frac{1 -q}{1 -e^{\epsilon}q}\right)^l \!+\! qe^{\epsilon l}\right)$ from Theorem~\ref{th:AlgoMomentAccountance} is non-monotonic in $\ampNor$. This may appear counter-intuitive since a smaller noise (greater $\ampNor$) usually gives worse privacy guarantees and, as one would expect, a bigger moments accountant. Nevertheless, a smaller noise means that the probability of outputting the true (unnoisy) argmax is closer to $1$, which may lower the moments accountant. Indeed, two adjacent databases will both output the true argmax with high probability, giving less chance to an adversary to distinguish them. This non-monotonicity of the data-dependent bound induces the non-monotonicity of the overall privacy cost $\epsilon$. This is illustrated in Figure 2 of the paper on which we can see, however, that choosing a small $\ampNor$ still gives better guarantees.

\section{FHE argmax implementation details}
\label{sec:fhe_argmax}
We implemented the FHE argmax algorithm using the C++ TFHE library \cite{TFHE-lib}. Table \ref{table:fhe_parameters} presents all of the parameters needed to reproduce our results and build a fully homomorphic argmax scheme using the TFHE library. The first two lines present our values for the standard TFHE parameters: the first line for initial ciphertext encryption; the second line for the two bootstrapping keys we use. Given the parameters that we use here, we achieve a security parameter of $110$. We base the security of our scheme on the \texttt{lwe-estimator}\footnote
{
\url{https://bitbucket.org/malb/lwe-estimator/raw/HEAD/estimator.py}
}
script. The estimator is based on the work presented in \cite{APS15} and is consistently kept up to date.

\begin{table}[ht]
    \centering
    \caption{Parameters for our implementation. The top line presents the overall security ($\lambda$), and the parameters for the initial encryption: $\sigma$ is the Gaussian noise parameter and $N$ is the size of polynomials. In the TFHE encryption scheme, there is a parameter $k$ (different from the one used in this paper) which, in our case, is always equal to $1$. The second line presents the parameters needed to create the two bootstrapping keys we are using. For these two lines, we used the notations from \cite{embed2019} and \cite{ChillottiGGI16}. The third line presents parameters specific to our implementation given the specificities of the data to process. $\amplitude$ is the value to add to the ciphertexts before subtracting $n_k+Y_k - n_{k'} - Y_{k'}$ as per the notations in Section 4.3 of the paper. $\initModulus$ is the modulus with which the values are rescaled at encryption time to obtain values in $[0,1]$ and to allow for a correct result of the $\theta$ computation. $\bootModulus^{(1)}$ is the output modulus of the first bootstrapping operation creating the $\theta$ values. $\bootModulus^{(2)}$ is the output modulus of the second and final bootstrapping operation. \label{table:fhe_parameters}}
    \begin{tabular}{c c}
            $N$ & $\sigma$ \\
        \hline
            $1024$ & $1$e$-9$ \\
    \end{tabular}
    \\
    \begin{tabular}{c c c c }
            $N_b$ & $\sigma_b$ & $B_g$ & $\ell$ \\
        \hline
            $1024$ & $1e$-9 & $64$ & $6$ \\ 
    \end{tabular}
    \\
    \begin{tabular}{c c c c}
            $\amplitude$ & $\initModulus$ & $\bootModulus^{(1)}$ & $\bootModulus^{(2)}$ \\
        \hline
            $900$ & $4102$ & $36$ & $4$ \\ 
    \end{tabular}
\end{table}

The third line presents parameters that are specific to our implementation. Because of the use of Gamma distributions, the values sent by the teachers can be negative. This can be an important issue: if a value is negative, then it will be interpreted in the ciphertext space as a very high positive value and the resulting argmax will be wrong. Therefore, after summing the ciphertexts from the teachers, we add a constant value (we can add a clear value to a ciphertext value) $\amplitude$ to ensure that the $n_k+Y_k+\amplitude$ are all positive before subtraction. We evaluated that, given the parameters of the Gamma distributions used, choosing $\amplitude = 900$ gives us less than a $2^{-64}$ probability of failure: with $Y_k$ following a Laplace distribution (as seen in Section 4 of the paper), then we have $\mathbb{P}(Y_k < -A) < 2^{-64}$. The $\initModulus$ variable corresponds to the value by which we rescale the cleartexts before encryption. Indeed, the cleartext and ciphertext spaces of the TFHE encryption scheme are both $\mathbb{T} = ([0,1], +)$. Additionally, for a correct $\theta$ computation, we need to have $|\frac{n_k+Y_k - n_{k'} - Y_{k'}}{\initModulus}| < \frac{1}{2}$, which is true if, for all $k\in [K]$, $\frac{n_k+Y_k+\amplitude}{\initModulus} \in [0,\frac{1}{2})$. Since $\mathbb{P}(Y_k \geq A) < 2^{-64}$ by symmetry, $\initModulus = 2(n+2\amplitude) = 4100$ (with $n$ the number of teachers) is sufficient to have $|\frac{n_k+Y_k - n_{k'} - Y_{k'}}{\initModulus}| < \frac{1}{2}$ with high probability. $\bootModulus^{(1)}$ is the output modulus of the first bootstrapping operation. It needs to be chosen so that we have $\Theta_k > \frac{1}{2}$ for one and only one $k$. That $k$ will then be considered the argmax. $\bootModulus^{(2)}$ is the modulus for the final bootstrapping operation.

\section{Detailed experimental settings}
In this section, we provide the reader with additional details regarding experimental settings. In order to reproduce experimental results, all necessary source codes are available on \url{https://github.com/Arnaud-GS/SPEED}. 

\subsection{Experimental settings for MNIST}
Following PATE experimental conditions, we built our framework based on the code repositories\footnote{https://github.com/tensorflow/privacy/tree/master/research/pate\_2017} accompanying~\cite{papernot2016semisupervised}. The teacher models are based on two convolutional layers with max-pooling and one fully connected layer with ReLUs. Code modifications have been performed on the initial repository, and are available on \url{https://github.com/Arnaud-GS/SPEED}. The execution environment consists in Python 3 and Tensorflow 1.15.0. The batch size, learning rate and max steps parameters have been respectively set to 128, 0.01 and 5000. As stated in~\cite{papernot2016semisupervised}, this yields an aggregate test-error rate of 93\%. A semi-supervised technique proposed in~\cite{salimans2016improved} has been used\footnote{https://github.com/openai/improved-gan}, in an execution environment consisting of Python 3 and Theano 0.7. Besides modifications available on \url{https://github.com/Arnaud-GS/SPEED}, the learning rate and number of epochs have been set to 0.001 and 500 respectively.

\subsection{Experimental settings for SVHN}
For SVHN, two additional layers have been added to the teacher models which were learned using a node with 8 NVIDIA v100. The batch size, learning rate and max steps parameters have been respectively set to 64, 0.08 and 2000. The student model also uses the improved GAN semi-supervised model, relying on Python 3 and Theano 0.8.2. The learning rate and number of epochs have been set to 0.0003 and 600 respectively.

\end{document}